\newtheorem{observation}{Observation}
\newcommand{\deleted}[1]{}
\providecommand{\norm}[1]{\lVert#1\rVert}
\providecommand{\abs}[1]{\lvert#1\rvert}
\begin{document}

\pagestyle{plain}
\pagenumbering{arabic}
\setcounter{page}{1}

\title{On Clustering Induced Voronoi Diagrams\thanks{A preliminary version of this paper appeared in the Proceedings of the 54th Annual IEEE Symposium on Foundations of Computer Science (FOCS 2013). The research of the first author was supported in part by NSF under Grant CCF-1217906, and the research of the last three authors was supported in part by NSF under grant IIS-1115220.}} 
\author{Danny Z. Chen\inst{1} \hspace{0.25in} Ziyun Huang\inst{2}   \hspace{0.25in} Yangwei Liu\inst{2} \hspace{0.25in} Jinhui Xu\inst{2}}
\institute{
Department of Computer Science and Engineering\\
University of Notre Dame\\
\email{dchen@cse.nd.edu}\\
\and
 Department of Computer Science and Engineering\\
State University of New York at Buffalo\\
 \email{\tt \{ziyunhua, yangweil, jinhui\}@buffalo.edu}\\
}

\maketitle

\begin{abstract}
In this paper, we study a generalization of the classical Voronoi diagram, called {\em clustering induced Voronoi diagram (CIVD)}. Different from the traditional model,
CIVD takes as its sites the power set $U$ of an input set $P$ of objects. For each subset $C$ of $P$, CIVD uses an {\it influence function} $F(C,q)$ to measure the 
total (or joint) {\it influence} of all objects in $C$ on an arbitrary point $q$ in the space
$\mathbb{R}^d$, and determines the influence-based Voronoi cell in $\mathbb{R}^d$ for $C$.
This generalized model offers a number of new features ({\em e.g.,} simultaneous clustering and space partition) to Voronoi diagram 
which are useful in various new applications.
We investigate the general conditions for the influence function which ensure the existence of a small-size ({\em e.g.}, nearly linear)
approximate CIVD for a set $P$ of $n$ points in $\mathbb{R}^d$ for some fixed $d$. To construct
CIVD, we first present a standalone new technique, called {\em approximate influence (AI) decomposition}, for the
general CIVD problem.  With only $O(n\log n)$ time, the AI decomposition partitions the space $\mathbb{R}^{d}$ into a nearly linear number of cells so that all points in each cell receive their approximate maximum influence from the same (possibly unknown) site ({\em i.e.,} a subset of $P$).
Based on this technique, we develop assignment algorithms to determine a proper site for each cell in
the decomposition and form various $(1-\epsilon)$-approximate CIVDs for some small fixed
$\epsilon>0$. Particularly, we consider two representative CIVD problems, vector CIVD and density-based CIVD, and show that both of them admit fast assignment algorithms;
consequently, 
their $(1-\epsilon)$-approximate CIVDs can be built in $O(n \log^{\max\{3,d+1\}}n)$ and $O(n \log^{2} n)$ time, respectively. \\
 
 {\bf Keywords:} Voronoi diagram, clustering, clustering induced Voronoi diagram, influence function, approximate influence decomposition.
\end{abstract}

\newpage

\section{Introduction}
Voronoi diagram is a fundamental geometric structure with numerous applications in many different areas \cite{Aur87,Aur91,Oka00}. 
The ordinary Voronoi diagram is a partition of the space $\mathbb{R}^d$ into a set of cells
induced by a set $P$ of points (or other objects) called sites, where each cell $c_{i}$ of the diagram is the union of all points in $\mathbb{R}^d$ which 
have a closer (or farther) distance to a site $p_{i}\in P$ than to any other sites. There are many variants of Voronoi diagram, depending on the types of objects in $P$,
the distance metrics,
the dimensionality of $\mathbb{R}^d$,
the order of Voronoi diagram, etc.
In some sense, the cells in a Voronoi diagram are formed by competitions among all sites in $\mathbb{R}^d$, such that the winner site for any point $q$ in $\mathbb{R}^d$ is the one having a larger ``influence" on $q$ defined by its distance to $q$.

A common feature shared by most known Voronoi diagrams is that the influence from every site object is independent of one another
and does not combine together. However, it is quite often in real world applications that the influences
 from multiple sources can be ``added'' together to form a {\it joint influence}. For example, in physics, a particle $q$ may receive forces from a number of other particles and the set of such
forces jointly determines the motion of $q$. This phenomenon also arises in other areas, such as social networks where the set of actors ({\em i.e.}, nodes) in a community may have joint influence
on a potential new actor ({\em e.g.,} a twitter account with a large number of followers may have a better chance to attract more followers). 
In such scenarios, it is desirable to identify the subset of objects which has the largest joint influence on one or more particular objects.

To develop a geometric model for joint influence, in this paper, we generalize the concept of
Voronoi diagram to {\em Clustering Induced Voronoi Diagram} (CIVD). In CIVD, we consider a set
$P$ of $n$ points (or other types of objects) and a non-negative influence function $F$ which measures the joint influence $F(C,q)$ from each subset $C$ of $P$ to any point $q$ in $\mathbb{R}^d$.
The Voronoi cell of $C$ is the union of all points in $\mathbb{R}^d$ which receive a larger influence from $C$ than from any other subset $C' \subseteq P$. This means that CIVD considers all subsets in the power set $U=2^{P}$ of $P$
as its sites (called {\em cluster sites}), and partitions $\mathbb{R}^d$ according to their influences.  For some interesting influence functions, it is possible that only a small number of subsets in $U$  have non-empty Voronoi cells. Thus the complexity of a CIVD is not necessarily exponential as in the worst case.

CIVD thus defined considerably generalizes the concept of Voronoi diagram. To our best knowledge,
there is no previous work on the general CIVD problem. It obviously extends the
ordinary Voronoi diagrams \cite{Aur91}, where each site is a one-point cluster. (Note that  the ordinary Voronoi diagrams can be viewed as special CIVDs equipped with proper influence functions.) Some Voronoi diagrams \cite{Oka00,Pap04}
allow a site to contain multiple points, but the distance 
functions used are often defined by the closest (or farthest) point in such a site, not by a collective effect
of all points of the site. The $k$-th order Voronoi diagram \cite{Oka00}, where each cell is the union of points in $\mathbb{R}^d$ sharing the 
same $k$ nearest neighbors in $P$, may be viewed as having clusters of points as sites,
and the ``distance" functions are defined on all points of 
each site;  but all cluster ``sites" of a $k$-th order Voronoi diagram have the same size $k$ and its ``distance" function 
is quite different from the influence function in our CIVD problem.
Some two-point site Voronoi diagrams were also studied \cite{Bar02,Bar11,Dic09,Dic08,Han10,Hod05,Vya10}, in which
each site has exactly two points and the distance functions are defined by certain ``combined"
features of point pairs. Obviously, such Voronoi diagrams are different from CIVD.

CIVD enables us to capture not only the spatial proximity of points, but more importantly their aggregation in the space. For example, a cluster site $C$ having a non-empty Voronoi cell may imply that 
the points in $C$ form a local cluster inside that cell. This provides an interesting connection between
clustering and space partition and a potential to solve clustering and space partition problems simultaneously. Such new insights could be quite useful for applications in data mining and social networks. For instance, 
in social networks, clustering can be used to determine communities in some feature space,
and space partition may allow to identify the nearest (or best-fit) community for any new actor.  Furthermore, since each point in $P$ may appear in multiple cluster sites with non-empty Voronoi cells, this could potentially
help find all communities in a social network without having to apply the relatively expensive overlapping clustering techniques \cite{And12,Ban05,Bon11,Cle04} or to explicitly generate multiple views of the network \cite{Che12,Chr08,Gre09,Liu12}. 
 
Of course, CIVD in general can have exponentially many cells, and an interesting question is
what meaningful CIVD problems have a small number (say, polynomial) of cells.
Thus, generalizing Voronoi diagrams in this way brings about a number of new challenges:
(I) How to efficiently deal with the exponential number of potential cluster sites; (II) how to
identify those non-empty Voronoi cells so that the construction time of CIVD is proportional only
to the actual size of CIVD; (III) how to partition the space and efficiently determine the cluster site for each non-empty Voronoi cell in CIVD. 

We consider in this paper the CIVD model for a set $P$ of $n$ points in $\mathbb{R}^{d}$ for some fixed $d$,
aiming to resolve the above difficult issues.
We first investigate the general and sufficient conditions which allow the influence function to
yield only a small number of non-empty approximate Voronoi cells. Our focus is thus mainly on the family of influence functions satisfying these conditions. We then present a standalone new technique,
called {\em approximate influence decomposition} (or AI decomposition), for general CIVD problems.
In $O(n \log n)$ time, this technique partitions the space $\mathbb{R}^{d}$ into a nearly linear number ({\em i.e.,} $O(n\log n)$) of cells so that for each such cell $c$,
there exists a (possibly unknown) subset $C \subseteq P$ whose influence to any point $q \in c$
is within a $(1-\epsilon)$-approximation of the maximum influence that $q$ can receive from any subset of $P$, where $\epsilon >0$ is a fixed small constant. 
For this purpose, we develop a new data structure called {\it box-clustering tree}, based on
an extended quad-tree decomposition
and guided by a {\em distance-tree} built from the well-separated pair decomposition \cite{CK95}.
In some sense, our AI decomposition may be viewed as a generalization of the well-separated pair decomposition.

The AI decomposition partially overcomes challenges (II) and (III) above. However, we still need
to assign a proper cluster site (selected from the power set $U$ of $P$) to each resulted non-empty Voronoi cell.
To illustrate how to resolve this issue, we consider some important CIVDs and make use of both
the AI decomposition and the specific properties of the influence functions of these problems
to build approximate CIVDs. Particularly, we study two representative CIVD problems. The first
problem is {\it vector CIVD} in which the influence between any two points $p$ and $q$ is
defined by a force-like vector ({\em e.g.}, gravity force)  and the joint influence is  the vector sum.
Clearly, this problem can be used to construct Voronoi diagrams in some force-induced fields.
The second problem is {\it density-based CIVD} in which the influence from a cluster $C$ to a point $q$ is the density of the smallest enclosing ball of $C$ centered at $q$. This problem enables us to generate all density-based clusters as well as their approximate Voronoi cells. Since density-based clustering is widely used in many areas such as data mining, computer vision,
pattern recognition, and social networks \cite{Cao06,Che05,Che07,Kri05,San98}, we expect that the density-based CIVD is also applicable in these areas. 
For both these problems, we present efficient assignment algorithms that determine a proper cluster site for each cell generated by the AI decomposition in polylogarithmic time.
Thus, $(1-\epsilon)$-approximate CIVDs for both problems can be constructed in $O(n \log^{\max\{3,d+1\}}n)$ and $O(n \log^{2}n)$ time, respectively.

Since the conditions and the AI decomposition are all quite general and do not require to know the exact form of the influence function, we expect that our techniques will be applicable to many other CIVD problems.

It is worth pointing out that although significant differences exist, several problems/techniques can be viewed as related to CIVD. The first one is the {\em approximate Voronoi diagram or nearest neighbor search} problem \cite{Ary02,Ary02a,Ary98,Har01,Har12,Ind98}, which shares with our approximate CIVD the same strategy of using regular shapes to approximate the Voronoi cells. However,
since their sites are all single-point, such problems are quite different from our approximate CIVD problem.  The second one is the {\em Fast Multipole Method (FMM)} for the N-body problem \cite{Gre88,Gre90,Gre94}, which shares with the Vector CIVD a similar idea of modeling joint force by influence functions.  The difference is that  FMM mainly relies on simple functions ({\em i.e.,} kernels) to reduce the computational complexity, while Vector CIVD uses perturbation and properties of the influence function to achieve faster computation.

The rest of this paper is organized as follows. Section \ref{sec-ov} overviews the main ideas and difficulties in designing a small-size approximate CIVD. Section \ref{sec-inf} discusses the needed general properties of the influence function. 
In Section \ref{sec-AI}, we present our approximate influence decomposition technique.  In Sections \ref{sec-ex1} and \ref{sec-ex2}, we show how the AI decomposition technique can be applied to construct approximate CIVDs for the two representative problems.

\section{Overview of Approximate CIVD}
\label{sec-ov}

In this section, we give an overview of the main ideas for and difficulties in computing approximate CIVDs. In the subsequent sections, we will show how to overcome  each of the major obstacles.

Let $P=\{p_{1}, p_{2}, \ldots,p_{n}\}$ be a set of $n$ points in $\mathbb{R}^{d}$ for some fixed $d$, $C$ be a subset of $P$, and $q$ be an 
arbitrary point in  $\mathbb{R}^{d}$ (called a {\em query point}).  The influence from $C$ to $q$ is 
a function $F(C,q)$ of the vectors from every point $p\in C$ to $q$ (or from $q$ to $p$).  
Among all possible cluster sites of $P$, let $C_m(P,q) \subseteq P$ denote the cluster site which has the maximum
influence, $F_{max}(q)$, on $q$, called the {\em maximum influence site} of $q$.
Below we define the $(1-\epsilon)$-approximate CIVD induced by the influence function $F$.

\begin{definition}\label{def-app}
Let $\mathcal{R}=\{c_{1},c_{2},\ldots, c_{k}\}$ be a partition of the space $\mathbb{R}^{d}$, and $\mathcal{C}=\{C_{1},C_{2},\ldots,C_{k}\}$ be a set
(possibly a multiset) of  cluster sites  of ${P}$. The set of pairs $\{(c_{1},C_{1}),(c_{2},C_{2}),\ldots,(c_{k},C_{k})\}$ is a $(1-\epsilon)$-approximate CIVD
with respect to the influence function $F$ if for each $c_{i} \in \mathcal{R}$,  $F(C_{i},q) \geq (1 - \epsilon)F_{max}(q)$ for any point $q\in c_{i}$,
where $\epsilon>0$ is a small constant. Each $c_{i}$ is an {\em approximate Voronoi cell}, and $C_{i}$ is the {\em approximate maximum influence site} of $c_i$.
\end{definition}

Based on the above definition, for computing an approximate CIVD,  there are two major tasks: (1) partition $\mathbb{R}^{d}$ into a set $\mathcal{R}=\{c_{1},c_{2},
\ldots, c_{k}\}$ of cells, and (2) determine $C_{i}\subseteq P$ for each $c_{i}$.  We call task (2) the {\em assignment problem}, which finds an approximate
maximum influence site $C_{i}$ in the power set $U$ of $P$ for each cell $c_{i}$ of $\mathcal{R}$. Since the choice of $C_{i}$ often depends on the 
properties of the  influence function, we need to develop a specific assignment algorithm for each CIVD problem.
In Sections \ref{sec-ex1} and \ref{sec-ex2}, we present efficient assignment algorithms for two CIVD problems.

We call task (1) the {\em space partition problem}.  For this problem, we develop a standalone technique, called {\em Approximate Influence (AI) Decomposition},
for general CIVDs. The size of a CIVD (or an approximate CIVD) in general can be exponential.
Thus, we study some key conditions of 
the influence function that yield a small-size space partition.
In Section \ref{sec-inf}, we investigate the general and sufficient conditions that ensure the existence of a small-size approximate CIVD. The AI
decomposition makes use of only these general conditions and need not know the exact form of the influence function.

Roughly speaking, the general conditions ensure to achieve simultaneously two objectives on the resulting cells of the space partition: (a) Cells that are far
away from the input points of $P$ should be of as ``large'' diameters as possible (where ``far away'' means that the diameter of a cell is small comparing to the distance from the cell to the nearest input point),
and (b) cells that are close to the input points should not be too small (in terms of their
diameters). Each objective helps reduce the number of cells in the space partition from a different perspective. 
To understand this better, consider a query point $q$ and its approximate maximum influence site $C$.
For objective (a), we expect that all points in a sufficiently large neighborhood of $q$ share $C$ (together
with $q$) as their common approximate maximum influence site. Particularly, we assume that there is some constant   
$\lambda_{1}(\epsilon)$ (depending on $\epsilon$) such that a region containing $q$ and with a diameter of
roughly $r\lambda_{1}(\epsilon)$ can be a cell of the partition, where $r$ is the distance from $q$ to the closest point in $P$.
For objective (b), we assume that when $q$ is close enough to a subset $C$ of $P$, there is a polynomial function $\mathcal{P}(\cdot)$ such that a region containing $q$ and with a diameter of
$\lambda_{2}(\epsilon)r'/\mathcal{P}(n)$ can be a cell of the partition, where $r'$ is the distance from $q$ to the closest point in $P\setminus C$ and $\lambda_{2}(\epsilon)$ is a constant depending on $\epsilon$.

Corresponding to the two objectives above, the AI decomposition presented in Section \ref{sec-AI} partitions $\mathbb{R}^{d}$ 
into two types of cells: type-1 cells and type-2 cells. 
Type-1 cells are those close to some input points ({\em i.e.,} corresponding to objective (b)), and type-2 cells are those far away from the input points ({\em i.e.,} corresponding to objective (a)).
The AI decomposition has the following properties.

\begin{enumerate}
\item The space $\mathbb{R}^{d}$ is partitioned (in $O(n \log n)$ time) into a total of $O(n \log n)$ type-1 and type-2 cells.

\item A type-1 cell $c$ is either a box region ({\em i.e.,} an axis-aligned hypercube) or the difference of two box regions, and is associated with a known approximate maximum influence site $C$.

\item A type-2 cell $c$ is a box region with a diameter of $D(c) \leq 2r\lambda_{1}(\epsilon)/3$,
where $r$ is the minimum distance between $c$ and any point in $P$. All points in a  type-2 cell $c$ share a (not yet identified) cluster site $C\subseteq P$ as their common approximate maximum influence site.
\end{enumerate}

To ensure the above properties, we need to overcome several difficulties. First, we need to efficiently maintain the (approximate) distances between the input points of $P$ and all potential cells, in order
to distinguish the cell types.  To resolve this difficulty, we make use of the well-separated pair decomposition \cite{CK95} to build a new data structure called {\em distance-tree} and use it to
approximate the distances between the cells and the input points. Second, we need to generate the two types of cells and make sure that each cell has a common approximate maximum influence site.
For this, we recursively construct a new data structure called {\em box-clustering tree} to partition $\mathbb{R}^{d}$ into type-1 and type-2 cells.
Third, we need to analyze the bounds for the total number of cells and the running time of the
AI decomposition, for which we prove a key packing lemma in the space $\mathbb{R}^{d}$.
We will unfold our ideas in detail for resolving each difficulty in Section \ref{sec-AI}. 

As stated above, every type-1 cell in the AI decomposition is associated with a known approximate
maximum influence site. Thus, our assignment algorithms only need to focus on determining the approximate maximum
influence sites for the type-2 cells.

\section{Influence Function}
\label{sec-inf}

In this section, we discuss the general conditions for the influence function to yield a small-size approximate CIVD. 

By the definition of CIVD, a straightforward construction algorithm would consider the exponential-size power set $U$ of $P$
and the influence to every point in the space $\mathbb{R}^{d}$.
The actual size of CIVD depends on the nature of its influence function. For a given influence function, it is possible that most of the cluster sites in $U$ have a non-empty Voronoi cell, and
hence the resulting CIVD is of exponential size. Of course, for this to happen, the influence function needs to have certain properties ({\em e.g.,} the range system defined by its iso-value
surfaces and $P$ have exponential VC dimensions). Fortunately, many influence functions in applications have good properties that induce CIVDs of much smaller sizes. Thus, it is desirable to understand how an influence function affects the size of the corresponding CIVD. For this purpose,
we investigate the general and sufficient conditions of the influence functions which allow to yield a small-size (approximate) CIVD.

Note that since an influence function can be arbitrary, we shall focus on its general properties rather than its exact form. We will make some reasonable and self-evident assumptions about  the influence function.
Also, because even a small-size CIVD may still take exponential time to construct, our objective is to
obtain a set of general conditions which ensure a fast construction of an approximate CIVD.
Ideally, we desire that the construction time be nearly linear.

Let $q$ be an arbitrary point in $\mathbb{R}^{d}$ and $C$ be a subset of $P$. The influence from $C$ to $q$ is defined as follows.

\begin{definition}
\label{def-if}
The influence from $C$ to $q$,
$q \not\in C$,
is a function $F(C,q)$ satisfying the following condition:
$F(C,q)=f(G(C,q))$, where
$G(C,q) = \{p-q \mid p \in C \}$ is the multiset of vectors defined by $C$ and $q$ and $f(\cdot)$ is a non-negative function defined over all possible multisets of vectors in $\mathbb{R}^d$. 
For convenience, $f$ is also called the influence function.
\end{definition}

In the above definition, the influence depends solely on the set of vectors pointing from $q$
to each point $p \in C$ or from each $p$ to $q$. It is possible that some CIVD problems use only the lengths of these vectors.
This implies that the influence of $C$ on $q$ remains the same under translation.

Note that $F(C,q)$ is undefined when $q \in C$.
In this paper, every point $q \in P$ is considered as a singularity.
In the rest of the paper, the case of $q$ being a singularity is  ignored.

The influence function is also desired to have good properties on scaling and rotation, as follows.

\begin{property}[Similarity Invariant]
\label{pro-3}
Let $\phi$ be a transformation of scaling or rotation about $q$, and $C$ be any set (possibly multiset) of points in $\mathbb{R}^{d}$. The ratio $F(\phi(C),q)/F(C,q)$ is uniquely determined by $\phi$.  
\end{property}

The above property implies that the maximum influence site $C_{m}(P,q)$ of $q$ remains the same under any
scaling or rotation about $q$.  This is because all subsets of $P$ change their influences on $q$ by the same factor after such a transformation. Combining this with Definition \ref{def-if}, we know that the 
 maximum influence site $C_{m}(P,q)$ of $q$ is invariant under the similarity transformation. Thus Property \ref{pro-3} is also called the {\em similarity invariant} property, and is necessary for the following locality property.

As discussed in the previous section, to ensure a small-size approximate CIVD, we expect that the cells (of the CIVD) that are far away from the input points should be ``large'' ({\em i.e.,} objective (a)) and the
cells that are close to the input points should not be too small ({\em i.e.,} objective (b)). 
This means that many spatially close points in $\mathbb{R}^{d}$ would have to share the same
cluster site $C$ as their approximate maximum influence site, which implies that the influence function must have a certain degree of locality (to achieve objective (a)).  Below we define the precise meaning of the locality property.

\begin{definition}
\label{def-per}
Let $q$ be a point and $C$ be a set (possibly multiset) of points in $\mathbb{R}^{d}$. For $C$ and $q$, a
one-to-one mapping $\psi$ from $C$ to $\psi(C)$ in $\mathbb{R}^{d}$ is called an {\em $\epsilon$-perturbation} with respect to $q$ if 
$\norm{p-\psi(p)} \le \epsilon\norm{p-q}$ for every point $p\in C$, where $0<\epsilon <1$ is the {\em error ratio} and $q$ is called the {\em witness point} of $\psi$.
\end{definition}

Intuitively speaking, from the witness point's view, an $\epsilon$-perturbation only changes
slightly the position of a point that it maps. 

\begin{definition}
\label{stable}
Let $q$ be a point and $C$ be a set (possibly multiset) of points in $\mathbb{R}^{d}$. 
For any $\gamma \in (0,1)$,
let $\delta$ be a continuous monotone function with $\delta(\gamma)<1$ and $\lim_{x \to 0} \delta (x)=0$.
An influence function $F$ is said to be $(\delta, \gamma)$-{\em stable} at $(C,q)$ if for any $\epsilon$-perturbation $C'$ of $C$ with $\epsilon \leq \gamma<1$, $(1 - \delta(\epsilon))F(C,q) \leq F(C',q) \leq (1 + \delta(\epsilon))F(C,q)$.
 \end{definition}

In the above definition, $(C,q)$ is called a $(\delta, \gamma)$-stable pair or simply a stable pair.

To define the locality property, it might be tempting to simply require that $F$ be stable at any subset $C$ and any query point $q$ in $\mathbb{R}^{d}$. However, this would be a too strong condition, as we will show 
later that some problems ({\em e.g.}, the vector CIVD problem) not satisfying this condition still have a small-size approximate CIVD. Thus, we need to use a weaker condition for the locality property.

\begin{definition}
\label{pair}
Let $C$ be a set (possibly multiset) of points in $\mathbb{R}^{d}$,  $q$ be a query point, and $F$ be the influence function. 
$(C,q)$ is  a \emph{maximal pair} of $F$ if for any  subset $C'$ of $C$, $F(C',q) \leq F(C,q)$.
\end{definition}

 From the above definition, we know that any maximum influence site and any of its corresponding query points always form a maximum pair.
Since each maximal pair could potentially correspond to a non-empty Voronoi cell and any locality requirement on the influence function has to ensure stability on all Voronoi cells, it is sufficient to define the locality based on the stability of all maximal pairs.

\begin{property}[Locality]
\label{pro-1}
The influence function $F$ is $(\delta, \gamma)$-stable at any maximal pair $(C,q)$ for some continuous monotone function $\delta$ and small constant $0<\gamma<1$.
 \end{property}

The locality property above means that a small perturbation on $P$ changes only slightly the maximum influence on a query point $q$. This implies
that we can use the perturbed points of $P$ to determine an approximate  maximum influence site for each point $q$. 
The following lemma further shows that a good approximation of the maximum influence site for $q$ is still a good approximation after an $\epsilon$-perturbation.

\begin{figure}[ht]
\vspace{-0.15in}
\centering
\includegraphics[height=2.2in]{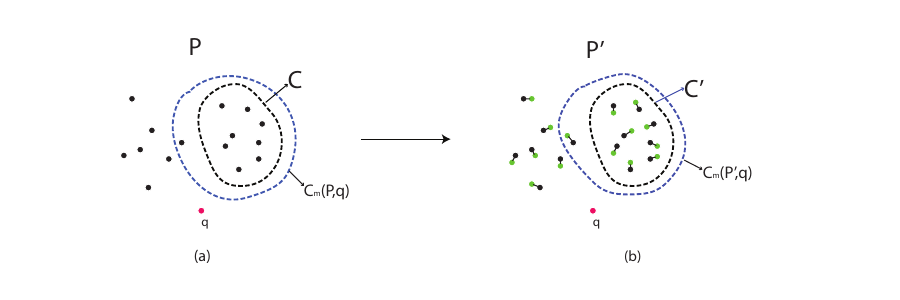}
\caption{Illustrating Lemma \ref{lem-app}:
(a) $C\subseteq P$ is nearly the same as $C_{m}(P,q)$; (b) after the perturbation, $C'$ is still almost the same as $C_{m}(P',q)$.}
\label{fig-lem1}
\end{figure}

\begin{lemma}
\label{lem-app}
Let $F$ be any influence function satisfying Property \ref{pro-1} ({\em i.e.}, $(\delta, \gamma)$-stable  at any maximal pair), and $\psi$ be an $\epsilon$-perturbation on $P$ (with a witness point $q$ and $\epsilon \leq \gamma$).
Let $C$ be any subset of $P$ with influence $F(C,q) \geq (1 - \epsilon)F_{max}(q)$. If  $F$ is $(\delta, \gamma)$-stable at $(C,q)$, 
then there exists a constant $\epsilon<\gamma'<1$ and a continuous monotone function $\Delta$ with $\Delta(\gamma')<1$ and $\lim_{x \to 0}\Delta(x) = 0$ such that 
$ F(C',q)\ge (1 - \Delta(\epsilon))F_{max}'(q)$, where $P' = \psi(P)$, $C' = \psi(C)$,  and $F_{max}'(q) = F(C_{m}(P',q),q)$
(see Fig. \ref{fig-lem1}).
\end{lemma}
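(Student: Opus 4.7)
The plan is to decompose $F(C',q)/F'_{max}(q)$ as a product of three ratios,
\[
\frac{F(C',q)}{F'_{max}(q)} \;=\; \frac{F(C',q)}{F(C,q)}\cdot\frac{F(C,q)}{F_{max}(q)}\cdot\frac{F_{max}(q)}{F'_{max}(q)},
\]
and to bound each one from below by a quantity close to $1$ when $\epsilon$ is small. The first factor is controlled by observing that the restriction $\psi|_C$ is itself an $\epsilon$-perturbation of $C$ with witness $q$, so the assumed $(\delta,\gamma)$-stability of $F$ at $(C,q)$ yields $F(C',q)\ge(1-\delta(\epsilon))F(C,q)$. The second factor is exactly the hypothesis $F(C,q)\ge(1-\epsilon)F_{max}(q)$.

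The main obstacle lies in the third factor, where one must compare $F_{max}(q)$ (the best subset of $P$) with $F'_{max}(q)$ (the best subset of $P'$), even though Property \ref{pro-1} only supplies stability at maximal pairs. The key observation is that $(C_m(P',q),q)$ \emph{is} automatically a maximal pair of $F$: any subset of $C_m(P',q)$ still lies in $P'$ and therefore has influence at most $F(C_m(P',q),q)=F'_{max}(q)$, matching Definition \ref{pair}. Consequently Property \ref{pro-1} applies and $F$ is $(\delta,\gamma)$-stable at $(C_m(P',q),q)$.

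Armed with this stability, one pulls $C_m(P',q)$ back through $\psi$ by setting $\tilde C := \psi^{-1}(C_m(P',q)) \subseteq P$. Writing $p'=\psi(p)$ and applying the triangle inequality
\[
\norm{p-q} \;\le\; \norm{p-p'} + \norm{p'-q} \;\le\; \epsilon\norm{p-q} + \norm{p'-q}
\]
gives $\norm{p-p'}\le \frac{\epsilon}{1-\epsilon}\norm{p'-q}$, so $\psi^{-1}$ restricted to $C_m(P',q)$ is an $\epsilon/(1-\epsilon)$-perturbation of $C_m(P',q)$ with witness $q$. Stability then yields $F(\tilde C,q)\ge\bigl(1-\delta(\epsilon/(1-\epsilon))\bigr)F'_{max}(q)$, and since $\tilde C\subseteq P$ the left-hand side is at most $F_{max}(q)$, closing the third bound.

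Multiplying the three inequalities produces
\[
F(C',q) \;\ge\; (1-\delta(\epsilon))(1-\epsilon)\bigl(1-\delta(\epsilon/(1-\epsilon))\bigr)\, F'_{max}(q),
\]
which motivates the definition $\Delta(\epsilon) := 1 - (1-\delta(\epsilon))(1-\epsilon)(1-\delta(\epsilon/(1-\epsilon)))$. Continuity, monotonicity, and $\Delta(\epsilon)\to 0$ as $\epsilon\to 0$ all follow from the corresponding properties of $\delta$. To ensure $\Delta(\gamma') < 1$ while keeping every factor strictly positive, one needs $\epsilon/(1-\epsilon)\le\gamma$, so picking $\gamma' := \gamma/(1+\gamma)$ and restricting to $\epsilon\le\gamma'$ suffices; this parameter bookkeeping is the only ingredient beyond the conceptual chain above.
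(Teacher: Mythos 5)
Your proposal is correct and follows essentially the same route as the paper's proof: stability at $(C,q)$ for the first bound, the hypothesis for the second, and for the third the observation that $(C_m(P',q),q)$ is a maximal pair together with the fact that $\psi^{-1}$ is an $\frac{\epsilon}{1-\epsilon}$-perturbation, yielding the identical $\Delta(\epsilon) = 1-(1-\delta(\epsilon))(1-\epsilon)\bigl(1-\delta(\frac{\epsilon}{1-\epsilon})\bigr)$ and the choice $\gamma'\approx\gamma/(1+\gamma)$. The only cosmetic difference is that you organize the argument as a product of three ratios rather than a chain of inequalities.
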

\begin{proof}
By Definition \ref{stable},
we have
$$ F(C',q) \geq (1 - \delta(\epsilon))F(C,q) \ge (1 - \delta(\epsilon)) (1 - \epsilon) F_{max} (q).$$

Let $J = \psi^{-1}(C_{m}(P',q))$. Since $\psi$ is an $\epsilon$-perturbation, 
by Definition \ref{def-per}, it is easy to see that its inverse $\psi^{-1}$ is an $\epsilon'$-perturbation on $P'$ with  
$\epsilon' = \frac{\epsilon}{1 - \epsilon}$.
If $0< \epsilon < \frac{\gamma}{1 + \gamma}$, then we have $0 < \epsilon' < \gamma$.
Since $C_{m}(P',q)$ is the maximum influence site of $q$ in the power set of $P'$, 
$(C_{m}(P',q),q)$ is a maximal pair. 
 By Property \ref{pro-1}, we know that if $0 < \epsilon' < \gamma$, then
$$F(J,q) \ge (1 - \delta(\epsilon'))F(C_{m}(P',q),q).$$
Also, since $F_{max}(q) \ge F(J,q)$, we have
$$ F(C',q) \geq (1 - \delta(\epsilon)) (1 - \epsilon) F_{max} (q) \geq (1 - \delta(\epsilon)) (1 - \epsilon) F(J,q)$$
$$\geq (1 - \delta(\epsilon)) (1 - \epsilon)(1 - \delta(\epsilon'))F(C_{m}(P',q),q).$$

Thus, we can set $\Delta(\epsilon) = 1 - (1 - \delta(\epsilon)) (1 - \epsilon)(1 - \delta(\frac{\epsilon}{1 - \epsilon}))$, 
and choose a value $\gamma'$ to satisfy the following conditions: (i) $0 < \gamma' < \frac{\gamma}{1 + \gamma}$, and (ii) $\gamma'$ is small enough so that for any $0  < \epsilon \le \gamma'$, $\delta(\epsilon)<1$ and 
$\delta(\frac{\epsilon}{1 - \epsilon})<1$. Then the lemma follows.
\qed
\end{proof}

In Lemma \ref{lem-app} above, the error caused by the perturbation can be estimated by the function $\Delta$. Thus $\Delta$ is also called the {\em error estimation function}.
Since $\Delta$ is a monotone function around $0$, for a sufficiently small value $\epsilon>0$, $\Delta^{-1}(\epsilon)$ exists (this fact will be used later). 
For ease of analysis, we assume that $\epsilon$ is sufficiently small so that $\Delta^{-1}(\epsilon) < 1/2$.

By Property \ref{pro-1}, we know that the locality of an influence function is defined based on perturbation. Since perturbation uses relative error, the locality property is not uniform throughout the entire space. 
Such non-uniformity enables us to achieve objective (a) (discussed in Section \ref{sec-ov}), but does not help attain objective (b). 
For example, when a query point $q$ is far away from some input point, say $p \in P$, an $\epsilon$-perturbation allows $p$ (or equivalently $q$) to change its location by a large distance. 
However, when $q$ is close to $p$,  an $\epsilon$-perturbation can change only slightly ({\em i.e.,} by a distance of $\epsilon \norm{p-q}$) the location of $p$. 
This means that if the influence function has only the locality property, then two query points, say $q_{1}$ and $q_{2}$, which have a distance larger than $2\epsilon \max \{\norm{p-q_{1}}, \norm{p-q_{2}}\}$ to each other cannot be 
grouped into the same Voronoi cell. Since there are infinitely many query points arbitrarily close to $p$, we would need an infinite number of Voronoi cells to approximate their influences. Thus, some additional property is needed to ensure a small-size CIVD ({\em i.e.,} mainly to achieve objective (b)).

To get around this problem, one may imagine a situation that when
a query point $q$ is very close to a subset $C \subseteq P$, it is reasonable to assume that the influence from $C$ completely ``dominates'' the influence from all other points in $P \setminus C$. This means that  when determining the influence for $q$, we can simply ignore 
all points in $P \setminus C$, without losing much accuracy. This suggests that the influence function should also have the following {\em Local Domination} property.

\begin{property}[Local Domination]
\label{pro-2}
There exists a polynomial function $\mathcal{P}(\cdot)$ such that for any point $q$ in $\mathbb{R}^d$ and any subset
$P' \subseteq P$, if there is a point $p \in P'$ with $\mathcal{P}(n)\norm{q-p} < \epsilon\cdot\norm{q - p'}$ for all $p' \in P\setminus P'$ for a sufficiently small constant $\epsilon>0$,
then $F_{max}' (q)> (1-\epsilon)F_{max} (q)$, where $F_{max}'(q) = F(C_{m}(P',q),q)$ (see Fig. \ref{fig-pro2}).
\end{property}

Property \ref{pro-2} above suggests that there is a dominating region for each input point of $P$, which is not too small ({\em i.e.}, not exponentially small comparing to its closest distance to other input points). For each point $p \in P$, consider a ball centered at $p$ and with a
radius $\frac{\epsilon\norm{p-p'}}{2(\mathcal{P}(n)+\epsilon)}$, where $p'$ is the nearest neighbor of $p$ in $P$.
By Property \ref{pro-2}, we know that for any query point $q$ inside this ball, the influence received by $q$ mainly comes from $p$.

Note that the above local domination property naturally holds for some decreasing influence functions ({\em e.g.,} those functions where the influence from each input point $p$ to a query point $q$
decreases polynomially when the distance $\norm{p-q}$ increases). Such influence functions appear in many applications ({\em e.g.}, force-like influence). Still, it remains an open problem to determine whether this property is necessary for all problems to yield small-size approximate CIVDs.

The above three properties and Lemma \ref{lem-app} suggest a way to construct an approximate CIVD. By Property \ref{pro-1}, we know that
it suffices to use a perturbation of $P$ to construct an approximate CIVD. Since our influence function considers the vectors between a
query point $q$ and the input points of $P$, we can equivalently perturb all query points (i.e., the entire space $\mathbb{R}^d$),
instead of the input points, and still obtain an approximate CIVD. This means that we can first
approximate the space $\mathbb{R}^d$ by partitioning it into small enough regions, and then associate each such region with a cluster site having an (approximate) maximum influence on it. 
The set of regions together forms an approximate CIVD. During the partition process, we also use Property \ref{pro-2} to avoid generating regions of too small sizes,
hence preventing a large number of regions.
This leads to our approximate influence decomposition, which is discussed in detail in the next section.

\begin{figure}[ht]
\centering
\includegraphics[height=2.2in]{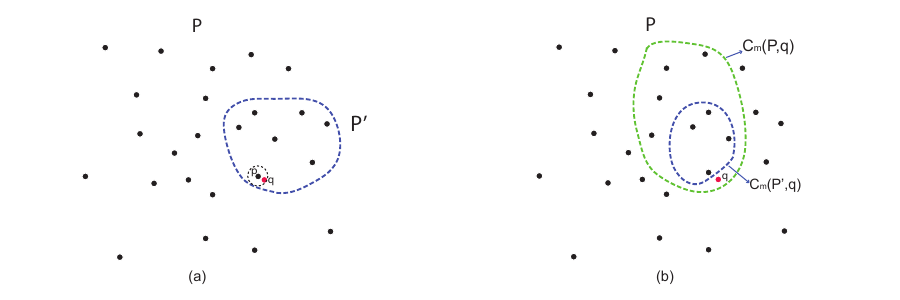}
\caption{Illustrating Property \ref{pro-2}:
(a) A dominating region of $p$ (the dashed circle centered at $p$) and $P'\subseteq P$ satisfying Property \ref{pro-2};
(b) the influence on $q$ from $P$ differs from that on $q$ from $P'$ by only a small $\epsilon$ factor.}
\label{fig-pro2}
\end{figure}

\section{Approximate Influence Decomposition}
\label{sec-AI}

In this section, we present a general space-partition technique called {\em approximate influence (AI) decomposition} for constructing an approximate CIVD.
We assume that the influence function satisfies  the similarity invariant, locality, and local domination properties. 

To build an approximate CIVD, we utilize the locality and local domination properties to partition the space $\mathbb{R}^d$ into two types of cells  ({\em i.e.,} type-1 and type-2 cells).
Our idea for partitioning $\mathbb{R}^{d}$ is based on a new data structure called {\it box-clustering
tree} or simply {\it box-tree}, which is constructed by an extended quad-tree decomposition
and is guided by another new data structure called {\em distance-tree} built by the well-separated pair decomposition \cite{CK95}.
Roughly speaking, the
box-tree construction begins with a big enough bounding box of the input point set $P$
({\em i.e.,} an axis-aligned hypercube), recursively partitions each box 
into smaller boxes, and stops the recursion on a box when a certain condition is met. There are two types of boxes in the partition: One type is a box generated by the normal
quad-tree decomposition ({\em e.g.}, see Fig.~\ref{fig-2box}(a)), and the other type involves
the intersection or difference of two boxes ({\em e.g.}, see Fig.~\ref{fig-2box}(b)). 
The stopping condition of recursion on a box $B$ is that either $B$ is small enough (comparing with
its distance to the closest point in $P$, or equivalently, $B$ is sufficiently far away from $P$ and hence can be viewed as a type-2 cell),
or $B$ is inside the dominating region of some cluster site $C\subseteq P$ (and thus $B$ can be viewed as a type-1 cell).
For the first case, by Property \ref{pro-1}, we know that all points in $B$ can be viewed as perturbations of a single query point and hence share the same approximate maximum influence site. For the second case, by Property \ref{pro-2}, we know that the approximate maximum influence site for all points in $B$ is $C$.  
During the above space-partition process, a box becomes a {\em cell} if no further decomposition of it is needed.

\begin{figure}[h]
\vspace{-0.15in}
\centering
\includegraphics[height=1.8in]{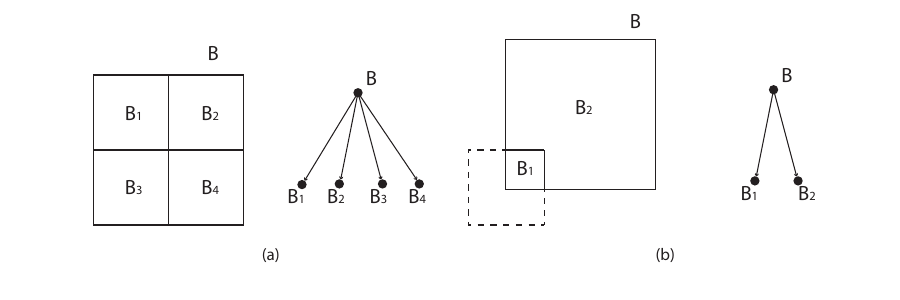}
\caption{Illustrating the two types of boxes in a box-tree $T_{q}$ in $\mathbb{R}^{2}$:
(a) The normal quad-tree boxes; (b) $B_1$ is the intersection of the box $B$ and the partially
dashed box, and $B_2=B-B_1$.}
\label{fig-2box}
\end{figure}

As mentioned in Section \ref{sec-ov}, in order for the resulted Voronoi cells to have the desired properties, we need to overcome a number of difficulties: (1) How to efficiently maintain the
(approximate) distances between all potential cells ({\em i.e.,} the boxes) and the input points
of $P$ so that their types can be determined? (2) How to efficiently generate the two types of cells?
(3) How to bound the total number of  cells and the running time of the space-partition process? Below we discuss our ideas for resolving these difficulties.

\subsection{Distance-tree $T_{p}$ (for Difficulty (1))}

As discussed in Section \ref{sec-ov}, the type of a cell is determined mainly by its distance to the input points of $P$.  Corresponding to the two types of cells, we need to maintain two types of distances for each box $B$ generated by the space-partition process:
(i) the distance, denoted by $r_{min}$, between $B$ and the closest input point (in case $B$ becomes a type-2 cell), and (ii) the distance, denoted by $r_{c}$, between $B$ and the second closest
input point or cluster site (in case $B$ becomes a type-1 cell).  A straightforward way to maintain such information is to explicitly determine the values of $r_{min}$ and $r_{c}$ for each generated box $B$.
But, this would be rather inefficient. The reason is that the number of possible values for $r_{c}$  could be very large (since $B$ could be potentially  in the dominating regions of many different
cluster sites). A seemingly possible method for this problem is to keep track of only the distances between $B$ and the closest and second closest input points.
This means that we consider only the dominating region of a single input point ({\em i.e.,} only checking whether $B$ is in the dominating region of its closest input point). Unfortunately, this could cause the space-partition process to generate unnecessarily many boxes.

To see why this is the case, consider the dominating region of a point $p\in P$. The size of $p$'s dominating region depends on the distance to its nearest neighbor $p'$ in $P$.
If $\norm{p-p'}$ is too small, then the decomposition near $p$ should be stopped at
some range to avoid generating
too many quad-tree boxes ({\em e.g.,} when the box size is smaller than $c\norm{p-p'}$ for
some constant $c>0$). To have a better understanding of this, consider an example in the 2D space $\mathbb{R}^2$ which contains
only three input points, $(0,0)$, $(1,0)$, and $(M,0)$, for some large value $M$. The size of the dominating region of $(1,0)$ is small since its nearest neighbor is $(0,0)$. The space between $(1,0)$ and $(M,0)$ is then decomposed into many (small) boxes in order to generate small enough boxes that are fully contained in the dominating region of $(1, 0)$ (see Fig. \ref{fig-2large}).  
One way to avoid this pitfall is that when a subset $C$ of $P$ is far away from the other points of $P$, we  treat $C$ as a single point. In the above example, we may view $(0,0)$ and $(1,0)$ as forming a ``heavy'' point (with a certain ``weight").
The dominating region size is then based on the distance between the ``heavy'' point and $(M,0)$, which is significantly larger than $1$. In this manner, we can reduce the total number of boxes.

\begin{figure}[ht]
\centering
\includegraphics[height=2.0in]{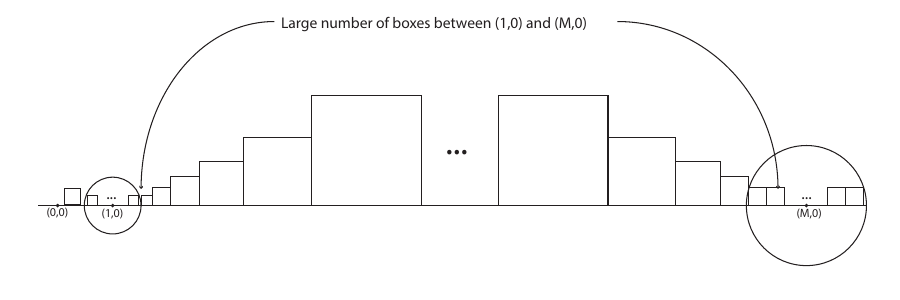}
\caption{An example of 3 input points in $\mathbb{R}^2$ that cause many quad-tree boxes.}
\label{fig-2large}
\end{figure}

This means that we consider the dominating region of a subset of input points only if they can be viewed as a single ``heavy'' input point.  In this way, we can dramatically reduce the number of choices for $r_{c}$, and consequently the cost of maintaining the distances between the boxes and input points.

To implement the above ideas, we use the well-separated pair decomposition (WSPD) \cite{CK95} to first preprocess the input points of $P$. This will result in a tree structure $T_{p}$, called {\em distance-tree}, in which
every node stores the location of one input point together with a value (whose exact meaning will be explained later). 
For ease of analysis, we assume that the error tolerance is $\beta < \frac{1}{2}$. The main steps of our algorithm ({\bf Algorithm \ref{alg-1}}) for constructing the
distance-tree $T_{p}$ are as follows.

\begin{algorithm}[h]
\caption{Preprocessing($P, \beta$)}
\textbf{Input:} A set $P$ of $n$ points in $\mathbb{R}^d$, and an error tolerance $0<\beta <1/2$.\\
\textbf{Output:} A tree  $T_{p}$, in which every node $v$ stores a value $s(v)$, an input point $l(v)$, and is associated with a bounding box $E(v)$ in $\mathbb{R}^d$.
\label{alg-1}
\begin{algorithmic}[1]
\State{Compute a $12$-well-separated pair decomposition $W=\{(A_{1},B_{1}), (A_{2},B_{2}), \ldots, (A_{k},B_{k})\}$ of $P$.} 
\State{Construct a graph $G(W)$
with points in $P$ being its vertices
by connecting the representatives of $A_{i}$ and $B_{i}$, for every $(A_{i},B_{i})\in W$.}
\State{Build a min-priority queue $Q$ for all edges in $G(W)$, based on their edge lengths.}
\State{Build a tree $T_{p}$ in the following bottom-up manner.

For each $p \in P$, there is a leaf node $v_{p}$ in $T_{p}$ ({\it i.e.}, $T_{p}$ is initially a forest of $|P|$ single-node trees), with $s(v_{p})=0$, $l(v_{p})=p$,
and $E(v_{p})$ and $E'(v_{p})$ both being $0$-sized bounding boxes containing $p$.}

{\bf While} $T_{p}$ is not a single tree {\bf Do} 
\begin{itemize}
\item   Extract from $Q$ the shortest edge $e=(p_{1},p_{2})$ with edge length $w(e)$.
If $v_{p_1}$ and $v_{p_2}$ are leaves of two different trees in $T_{p}$ rooted at $v_{1}$ and $v_{2}$,
then create a new node $v$ in $T_{p}$ as the parent of $v_{1}$ and $v_{2}$, and let
$s(v) = s(v_{1}) + s(v_{2})+ w(e)$,  $l(v)$ be either $l(v_{1})$ or $l(v_{2})$,
$E'(v)$ be the box centered at $l(v)$ and with edge length $\frac{4\cdot s(v)}{\beta}$,
and $E(v)$ be the box centered at $l(v)$ and with edge length $\frac{8\cdot s(v)}{\beta}$ (see Fig. \ref{fig-tree}).
\end{itemize}

\end{algorithmic}

\end{algorithm}

Note that in {\bf Algorithm \ref{alg-1}},
since we choose 12 as the approximation factor of the well-separated pair decomposition,
$G(W)$ forms a spanner of $P$ with a stretch factor of $2$ \cite{CK95} (note that the stretch factor $t$ of the spanner can be other constants; we choose $t=2$ for simplicity reason). 
In the resulted distance-tree $T_{p}$, each node $v$ (called a {\em distance-node}) is associated with a point set, $P_{v}$, 
with a diameter upper-bounded by $s(v)$ and with $l(v)$ as its {\it representative point}. 
$P_v$ is the subset of input points in $P$ associated with all leaves
of the subtree of $T_{p}$ rooted at $v$ (see Fig.~\ref{fig-tree}(b)). When a query
point $q$ is far away from $P_{v}$, each point in $P_{v}$ can be viewed as a perturbation of any
other points. Thus, it will not incur too much error if we simply treat them as one ``heavy'' point, represented by $l(v)$.
In this way, we can avoid generating many small boxes in the quad-tree decomposition process and reduce the cost of maintaining the (approximate) distance information between the boxes and the input points.
$E(v)$ gives the boundary for the query point $q$, {\it i.e.}, when $q$ is outside $E(v)$, it is safe to view $P_{v}$ as a single point (in other words, when $q$ is outside the bounding box $E(v)$, $q$ is viewed as {\bf far away} from $P_{v}$). As to be shown later, the edge length of $E(v)$ is crucial for analyzing the worst case performance of our space-partition algorithm. $E'(v)$ is defined only for analysis purpose.

\begin{figure}[h]
\centering
\includegraphics[height=2.2in]{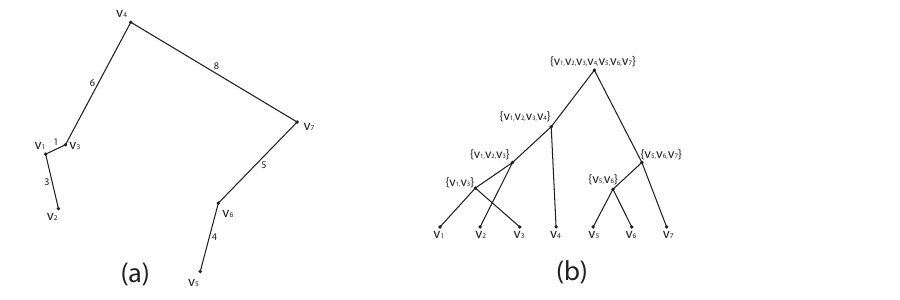}
\caption{An example illustrating {\bf Algorithm \ref{alg-1}}:
(a) Input points $v_1,v_2, \ldots, v_7$ and $G(W)$, with each edge labeled with its length;
(b) the distance tree $T_p$ produced from $G(W)$, in which
each node is for a subset of the input points.}
\label{fig-tree}
\end{figure}

Based on the distance-tree $T_{p}$, we can further reduce the cost of maintaining the distance information between the boxes and the input points. The idea is to use approximation. To see this,
consider a key issue in the space-partition process. Note that the 
space-partition proceeds recursively in a top-down fashion to produce a tree structure,
called {\it box-tree} and denoted by $T_{q}$, with the root of $T_{q}$ corresponding to a large enough bounding box containing all points
of $P$.  Let $u$ be a node (called a {\em box-node}) in the box-tree $T_{q}$. The key issue on $u$ is to determine whether we should further decompose the box $B(u)$ associated with $u$. 
To resolve this issue, we need to know the values of $r_{min}$ and $r_{c}$ ({\em i.e.,} the closest and second closest distances between the input points or ``heavy'' points and $B(u)$). 
Clearly, if such distance information were obtained from scratch for each box $B(u)$, then it would be too costly. 
To overcome this difficulty, observe that if an input point $p \in P$ (or a ``heavy'' point) is sufficiently far away from $B(u)$ (comparing to the edge length of $B(u)$), then
the distance from $p$ to $B(u)$ is a good approximation of the distance from $p$ to any smaller boxes resulted from further decomposition of $B(u)$.
Thus, if we save this distance for future computation on $u$ and its descendants in $T_{q}$, then we
no longer need to consider $p$. Since the decomposition on $B(u)$ depends only on $r_{min}$ and $r_{c}$, 
this means that we can save these two distances and ignore all input points outside $B(u)$.

\subsection{Box-tree $T_{q}$ (for Difficulty (2))}

Suppose the distance-tree $T_{p}$ has already been constructed. We now discuss our idea for efficiently building the {\em box-tree} $T_{q}$ ({\em i.e.,} for resolving difficulty (2)). 

To show how to build $T_{q}$, consider an arbitrary box-node $u$ of $T_{q}$. As we indicated earlier, the key issue on $u$ is to determine whether its associated box $B(u)$  should be further 
decomposed. To resolve this issue, we maintain a list $L=\{v_1,v_2,\ldots,v_k\}$ of distance-nodes in $T_{p}$. Each distance-node $v_{i} \in L$ is associated with a subset $P_{v_{i}}$ of input points which 
may possibly give arise to the distance $r_{min}$ for $B(u)$ (and also possibly the distance $r_{c}$). 
The value of $r_{c}$ is recursively maintained to approximate 
the closest distance from $B(u)$ to all points in $P\setminus \cup_{i=1}^{k} P_{v_{i}}$ ({\em i.e.,} all input points not in $L$). 

To determine whether $B(u)$ should be decomposed, 
we examine all distance-nodes in $L$. 
For each $v_{i} \in L$, there are three possible cases to consider. The first case is that the bounding box $E(v_{i})$ of $v_{i}$ significantly overlaps with $B(u)$ (see
{\bf Algorithm \ref{alg-2}} for the exact meaning of ``significant overlapping").
In this case, the region $B(u) \cap E(v_{i})$ is not far away from $P_{v_{i}}$, and thus we cannot view $P_{v_{i}}$ as a single ``heavy''
point.  This means that we cannot use $l(v_{i})$ ({\em i.e.,} the representative point of $P_{v_{i}}$) to compute the value of $r_{min}$. To handle this case, we replace $v_{i}$ in $L$ by its two children, say $v_{i,1}$ and $v_{i,2}$, in the distance-tree $T_{p}$. This can potentially
increase the distance between $B(u)$ and each of $P_{v_{i,1}}$ and $P_{v_{i,2}}$, and hence enhance the chance for $B(u)$ to be far away from these two child nodes.  

The second case is that $B(u)$ is far away from $P_{v_{i}}$. In this case, we remove $v_{i}$ from $L$ and save its distance (to $B(u)$) in $r_{c}$
if it is smaller than the current value of $r_{c}$. If all distance-nodes are removed from $L$ in this way, then it means that $B(u)$ is far away from all input points and therefore becomes a type-2 cell. When this occurs, the value of $r_{min}$ is the value of $r_{c}$ at the time when $L$ becomes empty. 

The third case is that $v_{i}$ does not fall in any of the above two cases. 
In this scenario, if $v_{i}$ is the only distance-node left in $L$ and $B(u)$ (or part of $B(u)$) is inside the dominating region of $P_{v_{i}}$, then the part of $B(u)$ outsides $E(v_{i})$ becomes a type-1 cell,
and the part of $B(u)$
inside $E(v_{i})$ will be recursively determined for its decomposition. Otherwise, either multiple distance-nodes are still in $L$ or $B(u)$ is not in the dominating region of
$P_{v_{i}}$. For both these sub-cases, we decompose $B(u)$
into $2^{d}$ sub-boxes and recursively process each sub-box.

To generate the box-tree $T_{q}$, we use a recursive algorithm called {\em AI-Decomposition}, in which $\mathcal{P}(\cdot)$ is a polynomial function for Property \ref{pro-2}.
The core of this algorithm is a procedure called {\em Decomposition}, which produces the box-subtree of $T_{q}$ rooted at
a box-node $u$ that is part of the input to the procedure. In the procedure Decomposition, Step 1 corresponds to the first case; Steps 2 and 3 are for the second case; Steps 4 and 5 handle the third case.

It should be pointed out that in the procedure Decomposition, each recursive call inherits a copy of $L$;
thus, different recursive calls use their own copies of $L$, and such copies are independent of one another.
This means that the same node $v$ of $T_p$ can appear in (and also get removed from) different copies of $L$
throughout the algorithm.

\begin{algorithm}[t]
\caption{Decomposition$(u,\beta,L,T_{p},r_{c})$}

\textbf{Input:} A box-node $u$ with box $B(u)$, error tolerance $\beta > 0$, distance-tree $T_{p}$, linked list $L$, and a value $r_{c}$.
\textbf{Output:} A subtree of $T_{q}$ rooted at $u$ (see Fig.~\ref{fig-algo2}).
\label{alg-2}

\begin{algorithmic}[1]
\State{{\bf While} $\exists$ $v$ in $L$ such that the length of at least one edge of $B(u)\cap E(v)$ is no smaller than $\frac{edgeLength(B(u))}{2}$ {\bf do}
\begin{itemize}
\item Replace  $v$ in $L$ by its two children in $T_{p}$, if any.
\end{itemize}}

\State{Let $D(u)$ be the diameter of $B(u)$.  For each node $v$ in $L$ {\bf do}
\begin{enumerate}
\item[2.1] Let $r_{min}$ be the distance between $B(u)$ and $l(v)$.
\item[2.2] If $D(u) < r_{min}\beta/2$, remove $v$ from $L$, and if $r_{c} > r_{min}$, let $r_{c} = r_{min}$.
\end{enumerate}}
\State{If $L$ is empty, return, and $B(u)$ becomes a {\bf type-2} cell.}
\State{If there is only one element $v$ in $L$, let $r_{min}$ be the smallest distance between $l(v)$ and $B(u)$.
\begin{itemize}
\item[4.1]  If $\frac{r_{min} + D(u)}{r_{c}} <  \frac{\beta}{2\mathcal{P}(n)}$, then

\begin{itemize}
\item[4.1.1]  
If $E(v) \cap B(u)=\phi$ or $v$ is a leaf node in $T_{p}$, $B(u)$ is a {\bf type-1} cell dominated by $v$. Return. 
\item[4.1.2] Let $B'$ be the smallest hypercube box in $B(u)$ fully containing $B(u)\cap E(v)$. Create two box-nodes $u_{0}$ and $u_{1}$, with $u_{0}$ corresponding to $B'$ and $u_{1}$ corresponding to the difference of $B(u)$ and $B'$. Let $u_{0}$ and $u_{1}$ be the children of $u$ in $T_{q}$. In this case, $u_{1}$ is a {\bf type-1} cell dominated by $v$.
\item[4.1.3] Replace $v$ in $L$ by its two children $v_{1}$ and $v_{2}$ in $T_{p}$. Call Decomposition$(u_{0},\beta,L,T_{p},r_{c})$, and return.
\end{itemize}
\end{itemize}
}

\State{Decompose $B(u)$ into $2^{d}$ smaller boxes, and make the corresponding nodes $u_{1}, u_{2}, \ldots, u_{2^{d}}$ as the children of $u$ in $T_{q}$. Call 
Decomposition$(u_{i},\beta,L,T_{p},r_{c})$ for each $u_{i}$. Return.}

\end{algorithmic}
\end{algorithm}

\begin{figure}[ht]
\centering
\includegraphics[height=2.0in]{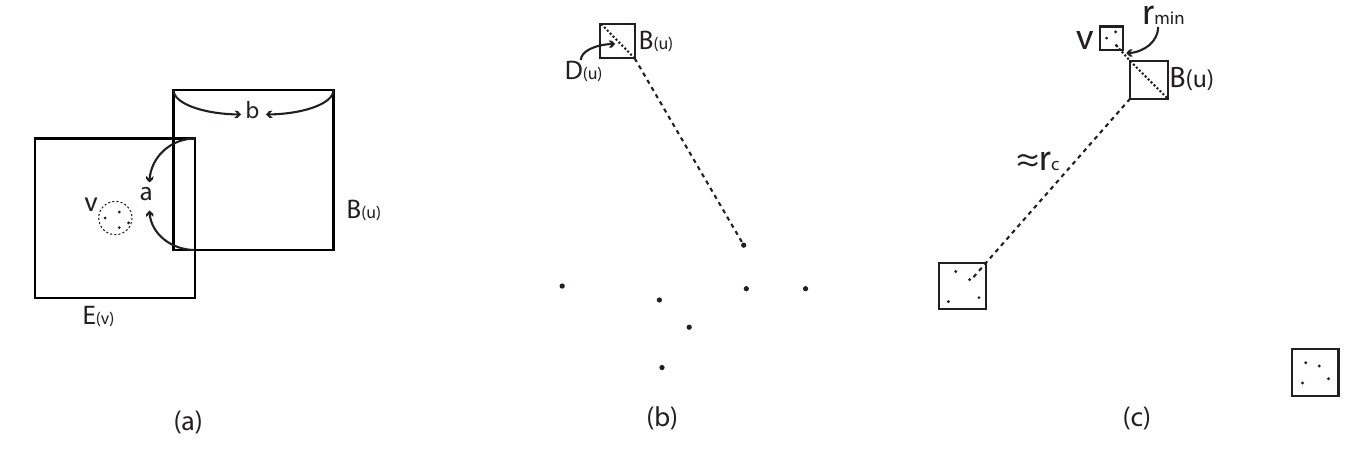}
\caption{Examples illustrating {\bf Algorithm \ref{alg-2}}:
(a) The case for Step 1, {\it i.e.}, $B(u) \cap E(v)$ has an edge with a length
$a \ge edgeLength(B(u))/2=b/2$; This means that a considerable large part of $B(u)$
intersects $E(v)$, therefore input points in distance node $v$(viewed as a subset of $P$ here) might not be viewed as one.
$v$ should be replaced in the list $L$.
(b) Case for step 3. Step 2 removes far away distance nodes. If $L$ becomes empty in step 3,
it means all input points are far away from $B(u)$
(c) The case for Step 4.1.
Here $r_c$ is used as an approximation of closest distance between $B(u)$ and points not in $v$.
When this happens, $B(u)$ is very close to points in $v$ compared to points that are not. }
\label{fig-algo2}
\end{figure}

\begin{algorithm}[H]
\caption{AI-Decomposition($P$, $\beta$)}
\textbf{Input:} A set $P$ of $n$ points in $\mathbb{R}^d$, and a small error tolerance $\beta>0$.\\
\textbf{Output:} A box-tree $T_{q}$.
\label{alg-3}
\begin{algorithmic}[1]
\State{Run the preprocessing algorithm on $P$ and obtain a distance-tree  $T_{p}$. Let $u$ be the root of $T_{p}$. View $E(u)$  as a box-tree node. Run Decomposition$(E(u), \beta, \{u\}, T_{p}, \infty)$.}
\State{Output the box-tree rooted at $E(u)$ as $T_{q}$.}
\end{algorithmic}
\label{algo-frm}
\end{algorithm}

Below we analyze the above algorithms.

\subsection{Algorithm Analysis (for Difficulty (3))}

Proving the correctness and running time of {\bf Algorithm \ref{alg-3}} is nontrivial. We first show some properties of the AI decomposition which will be used for proving the correctness and running time or  for designing the assignment algorithms in Sections \ref{sec-ex1} and \ref{sec-ex2}. We start the analysis with the following definition.

\begin{definition}
\label{def-rec}
A distance-node $v \in T_{p}$ is said to be {\em recorded} for a box-node $u$ if $v$ is removed from the list $L$ in Step 2.2 of {\bf Algorithm \ref{alg-2}} when processing $u$ or one of $u$'s ancestors in $T_{q}$.
The value of $r_{min}$ in the iteration when $v$ is removed from $L$ is
the \emph{recorded distance} of $v$ for $u$. If $v$ is recorded for $u$, then any point $p \in P_v$ is also {\em recorded} for $u$ with the same recorded distance as $v$. 
\end{definition}

The following lemma shows a useful property of the AI decomposition.

\begin{lemma}
\label{lm:distance}
If a point $p \in P$ is recorded for a box-node $u$ with a recorded distance $x$, then for any point $q \in B(u)$,
\[
(1 - \beta)x \leq \norm{p - q} \leq (1 + \beta)x.
\] 
\end{lemma}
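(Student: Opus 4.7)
The approach is to identify the box-node $u'$ at which $v$ was removed in Step 2.2 (so $u'$ is either $u$ itself or an ancestor of $u$ in $T_q$) and exploit the two conditions that held then: $x$ equals the distance from $B(u')$ to the representative $l(v)$, and $D(u') < \beta x/2$. Because $B(u) \subseteq B(u')$ throughout the recursion, any $q \in B(u)$ satisfies $x \le \|l(v) - q\| \le x + D(u')$.

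Since the subtree of $T_p$ rooted at $v$ is a spanning tree of $P_v$ of total weight $s(v)$, every $p \in P_v$ lies within distance $s(v)$ of $l(v)$. Combining this bound with the above triangle-inequality yields
\[
x - s(v) \;\le\; \|p - q\| \;\le\; x + D(u') + s(v) \;<\; x + \tfrac{\beta}{2} x + s(v).
\]
The lemma therefore reduces to establishing the single key inequality $s(v) \le \beta x/2$.

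To prove that inequality, I would use the fact that $v$ was not replaced by its children in Step 1 during the iteration that ultimately removed it. This means $B(u') \cap E(v)$ is either empty or has every edge strictly shorter than $edgeLength(B(u'))/2$. In the disjoint case, every point of $B(u')$ differs from $l(v)$ by more than the half-edge $4s(v)/\beta$ of the hypercube $E(v)$ in some coordinate direction, so $x \ge 4s(v)/\beta$ and thus $s(v) \le \beta x/4$. In the partial-overlap case, the Step-1 non-triggering condition constrains the offset of $l(v)$ from the center of $B(u')$ in every coordinate direction, and combining these offset bounds with the Step 2.2 bound $D(u') < \beta x/2$ again gives $s(v) \le \beta x/2$.

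The main obstacle is the partial-overlap case. The subtle sub-regime is when $edgeLength(E(v)) < edgeLength(B(u'))/2$: then Step 1's condition is automatically not triggered regardless of where $l(v)$ sits, even in configurations with $l(v) \in B(u')$ that would force $x$ close to zero. The way out is that in this regime the Step 2.2 requirement $D(u') < \beta x/2$ forces $x$ to be large compared to $edgeLength(B(u'))$, which is in turn large compared to $edgeLength(E(v)) = 8 s(v)/\beta$; chasing the constants (and using $\beta < 1/2$) gives $s(v) \le \beta x / 2$ with room to spare. Once the key inequality is in hand, substituting into the display above yields $(1-\beta)x \le \|p - q\| \le (1+\beta)x$ directly.
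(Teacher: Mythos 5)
Your proposal is correct and takes essentially the same route as the paper: both reduce the lemma, via the triangle inequality with $\norm{p - l(v)} \le s(v)$ and the Step 2.2 condition $D(u') < \beta x/2$, to the single key inequality $s(v) \le \beta x/2$, and both extract that inequality from the fact that $v$ survived Step 1 when $u'$ was processed. The only cosmetic difference is that the paper proves the key inequality by showing $B(u')$ cannot intersect the auxiliary half-size box $E'(v)$ (otherwise $B(u')$ would be contained in $E(v)$ and Step 1 would have fired), whereas you argue by direct case analysis on how $E(v)$ overlaps $B(u')$, and your cases, including the small-$E(v)$ sub-regime resolved by the Step 2.2 size condition, do close with the stated constants.
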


\begin{proof}
Let $v$ be the distance-node such that $P_v$ contains $p$ and $v$ is recorded for $u$. Let $u'$ be the box-node being considered at the time when $v$ is removed from $L$. By Definition \ref{def-rec}, we know that $u'$ is 
either $u$ or an ancestor of $u$ in $T_{q}$.
Let $q$ be any point in $B(u)$. Obviously, $q$ is also in $B(u')$. Let $q'$ be the closest point in $B(u')$ to $l(v)$.
(See \textbf{Figure} \ref{fig-lmdis} to help understanding the configuration.)
Then by the definition of $r_{min}$, we know $x = \norm{q' - l(v)}$.
By {\bf Algorithm \ref{alg-2}}, we have $D(u') < x\beta/2$,
where $D(u')$ is the diameter of $B(u')$. 
Thus,
\begin{equation}\label{eq:dis1}
\norm{q - q'} \leq \norm{q' - l(v)}\beta/2.
\end{equation}

Now we claim that
\begin{equation}\label{eq:dis2}
\norm{p - l(v)} \leq s(v) \leq \norm{q' - l(v)}\beta/2.
\end{equation}

To prove this, we first show that $B(u')$ does not intersect $E'(v)$.
Assume by contradiction that it is not the case. Then $q'$ is included in $E'(v)$.
Recall that the edge length of $E'(v)$ is $\frac{4s(v)}{\beta}$.
Thus $\norm{q' - l(v)} \leq \frac{2\sqrt{d}s(v)}{\beta}$.
Since $\beta < 1/2$,
we have $$D(u') < \frac{x\beta}{2} \leq \frac{\norm{q' - l(v)}}{4} \leq \frac{\sqrt{d}s(v)}{2\beta}.$$
This means that the edge length of $B(u')$ is no bigger than $\frac{s(v)}{2\beta}$, which is smaller than half the edge length of $E'(v)$.
Combining  this with the assumption that $B(u')$ intersects $E'(v)$, we know that $B(u')$ is entirely inside $E(v)$ (whose edge length is two times of that of $E'(v)$).
This means that 
$v$ should have already been removed from $L$ in Step 1, instead of in Step 2 (by {\bf Algorithm \ref{alg-2}}). But this is a contradiction.

Since $B(u')$ does not intersect $E'(v)$, $\norm{q' - l(v)}$ must be larger than half the edge length of $E'(v)$, which is $\frac{2s(v)}{\beta}$. By the definition of $s(v)$, we also know $\norm{p - l(v)} \leq s(v)$. Thus, Claim (\ref{eq:dis2}) easily follows.

Combining (\ref{eq:dis1}) and (\ref{eq:dis2}) and based on the triangle inequality, we obtain
\[
(1 - \beta)\norm{q' - l(v)} \leq \norm{q' - l(v)} - \norm{q - q'} - \norm{p - l(v)} \leq \norm{p - q}
\]
and
\[
\norm{p - q} \leq \norm{q' - l(v)} + \norm{q - q'} + \norm{p - l(v)} \leq (1 + \beta)\norm{q' - l(v)}
\]

The lemma follows from the fact that $x = \norm{q' - l(v)}$.
\qed
\end{proof}

\begin{figure}[ht]
\centering
\includegraphics[height=2.5in]{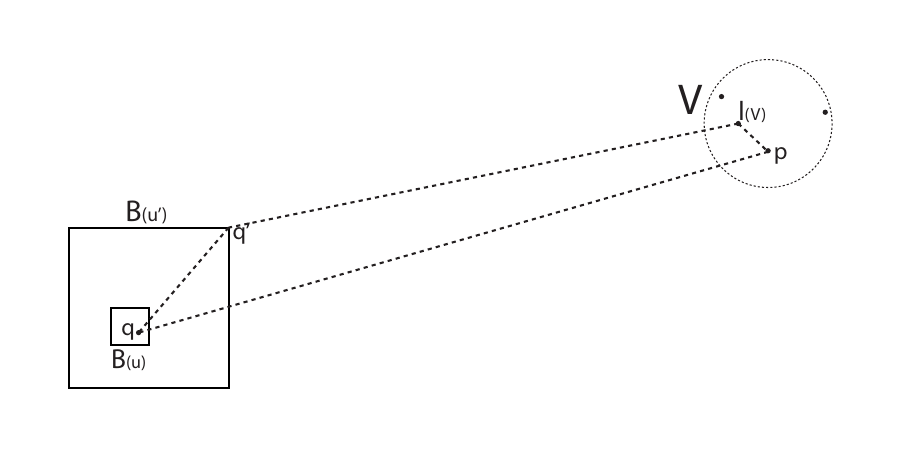}
\caption{A figure for \text{Lemma} \ref{lm:distance}.}
\label{fig-lmdis}
\end{figure}

The next two lemmas show some important properties of the type-2 cells.  

\begin{lemma}\label{lm:partition}
For any type-2 cell $c$ produced by {\bf Algorithm \ref{alg-3}}, the set of distance-nodes (also viewed as subsets of the input points) recorded for $c$ forms a partition of $P$.
\end{lemma}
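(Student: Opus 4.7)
The plan is to prove Lemma \ref{lm:partition} by maintaining an inductive invariant on the execution of Decomposition: at the beginning of any call \texttt{Decomposition}$(u, \beta, L, T_p, r_c)$, the point sets $\{P_v : v \in L\}$ together with the point sets $\{P_v : v \text{ is recorded for an ancestor of } u\}$ form a partition of $P$. The base case is the initial call in Algorithm \ref{alg-3}, where $L$ consists only of the root of $T_p$ (whose associated point set is all of $P$) and nothing has been recorded yet, so the trivial one-block partition $\{P\}$ is obtained.

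For the inductive step I would verify that each place Algorithm \ref{alg-2} modifies $L$ or records a node preserves the partition property. In Step 1 and Step 4.1.3, a non-leaf distance-node $v$ is replaced in $L$ by its two children $v_1, v_2$ in $T_p$; by the bottom-up construction of $T_p$ in Algorithm \ref{alg-1}, $P_v = P_{v_1} \sqcup P_{v_2}$, so the partition is refined but remains a partition. In Step 2.2, a node $v$ is moved from $L$ into the recorded set, which only reclassifies the block $P_v$ and does not alter the overall partition. Step 4.1.1 returns without recording, and Step 5 decomposes the box while inheriting the current list $L$ unchanged into each recursive call. At each recursive descent, the invariant for the child follows from the invariant holding at the parent at the moment the recursive call is issued, once one notes that ``recorded for ancestor of child'' equals ``recorded for ancestor of $u$'' augmented by whatever was recorded at $u$.

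Given the invariant, the lemma follows immediately: a type-2 cell $c$ corresponds to a box-node $u$ at which the algorithm reached Step 3 with $L = \emptyset$. At that moment the invariant, combined with emptiness of $L$, forces the set of distance-nodes recorded for $u$ (equivalently, recorded for $c$) to have their $P_v$'s forming a partition of $P$.

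The main obstacle I foresee is not any deep argument but careful bookkeeping: I must make sure that every branch of the recursive procedure is accounted for (in particular, that Step 4.1.1's early return and the handling of leaves of $T_p$ in the while loops do not leak blocks of the partition), and that the copies of $L$ passed to sibling recursive calls in Step 5 are treated as independent so the invariant is applied separately on each subtree of $T_q$.
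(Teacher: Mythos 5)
Your proposal is correct and follows essentially the same route as the paper's proof: the paper argues informally that every point of $P$ starts out covered by $L$ (via the root of $T_p$), can leave $L$ only by becoming recorded, and a type-2 cell arises only when $L$ is empty, which is exactly your invariant in compressed form. Your version is in fact slightly more careful, since tracking the invariant that $\{P_v : v\in L\}$ together with the recorded blocks always partitions $P$ makes the disjointness of the recorded distance-nodes explicit (via $P_v=P_{v_1}\sqcup P_{v_2}$ when a node is replaced by its children), a point the paper leaves implicit.
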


\begin{proof}
For any point $p \in P$, if $p$ is not recorded at the end of processing a box-node $u$ but $p$
is in some distance-node $v$ such that $p \in P_v$ and $v \in L$ at the time of processing $u$,
then it must be the case that some distance-node $v'$ remains in $L$ at the end of processing $u$.
By {\bf Algorithms \ref{alg-2}} and {\bf \ref{alg-3}}, we know that initially, every point of $P$ is included in $L$, and the only possibility for $p$ not appearing in any of the distance-nodes in $L$ is that at some iteration, $p$ becomes recorded.
Since we have a type-2 cell $c$ only when $L$ is empty, this means that $p$ must become recorded for $c$ at some point of the recursion. 
\qed
\end{proof}

\begin{lemma}\label{lm:cell2}
For any type-2 cell $c$ and any $p \in P$, let $D(c)$ be the diameter of $c$ and $r$ be the shortest distance between $c$ and $p$. Then
\[
D(c) \leq \frac{2r\beta}{3}.
\]
\end{lemma}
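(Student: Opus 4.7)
The plan is to combine Lemma \ref{lm:partition} (which says every input point is recorded for $c$), Lemma \ref{lm:distance} (which relates the recorded distance to the true distance), and the termination condition of Step 2.2 of Algorithm \ref{alg-2} (which controls $D(c)$ in terms of the recorded distance).

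First I would invoke Lemma \ref{lm:partition}: since $c$ is a type-2 cell, the list $L$ became empty, and the recorded distance-nodes partition $P$. In particular, the given point $p$ lies in some distance-node $v$ that is recorded for $c$, with some recorded distance $x$. Let $u$ be the box-node whose box is $c$, and let $u'$ be the ancestor of $u$ (possibly $u$ itself) at which $v$ was removed from $L$ in Step 2.2. Then by the very condition of Step 2.2, $D(u') < x\beta/2$. Since $c = B(u) \subseteq B(u')$, this gives
\[
D(c) \;\le\; D(u') \;<\; \tfrac{x\beta}{2}.
\]

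Next I would relate $x$ to the true distance $r$ from $p$ to $c$. Applying Lemma \ref{lm:distance} to $p$ and any point $q \in c$ yields $\|p-q\| \ge (1-\beta)x$, so the minimum distance satisfies $r \ge (1-\beta)x$, i.e.\ $x \le r/(1-\beta)$. Plugging this in:
\[
D(c) \;<\; \frac{\beta}{2}\cdot\frac{r}{1-\beta} \;=\; \frac{r\beta}{2(1-\beta)}.
\]
Finally, using the standing assumption $\beta<1/2$ (and, as needed for the stated constant $2/3$, the fact that $\beta$ is taken sufficiently small so that $2(1-\beta)\ge 3/2$, i.e.\ $\beta\le 1/4$), this simplifies to $D(c) \le \tfrac{2r\beta}{3}$.

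There is no real obstacle here; the only subtle point is making sure that the box-node $u'$ at which $v$ was removed is an ancestor of $u$ (so that $B(u)\subseteq B(u')$ and hence $D(c)\le D(u')$), which is immediate from Definition \ref{def-rec}, together with being careful about which constant in front of $r\beta$ is achievable — the bound coming naturally from the algorithm is $\beta/(2(1-\beta))$, and one absorbs this into $2\beta/3$ under the mild assumption on $\beta$ used throughout the paper.
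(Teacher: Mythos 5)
Your route is structurally the paper's: find the ancestor box-node $u'$ at which the distance-node containing $p$ was removed in Step 2.2, use the removal condition $D(u') < x\beta/2$, relate the recorded distance $x$ to the true distance, and push the bound down to $c$ via $D(c)\le D(u')$ and $r\ge r_{u'}$. The one genuine shortfall is quantitative. By invoking Lemma \ref{lm:distance} as a black box you only get $r \ge (1-\beta)x$, hence $D(c) < \frac{r\beta}{2(1-\beta)}$, and, as you yourself concede, this is at most $\frac{2r\beta}{3}$ only when $\beta \le 1/4$. But the paper's standing assumption is only $\beta<1/2$; no assumption $\beta\le 1/4$ is made anywhere, so for $\beta\in(1/4,1/2)$ your argument proves only $D(c)< r\beta$, not the stated bound. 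The constant is not cosmetic: in Lemma \ref{lem-mgc} the cell bound $2\Delta^{-1}(\epsilon)/3$ is added to the aggregation error $\Delta^{-1}(\epsilon)/3$ to yield exactly a $\Delta^{-1}(\epsilon)$-perturbation, so silently weakening $2/3$ would propagate.

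The fix is to avoid the extra $\beta/2$ loss hidden inside Lemma \ref{lm:distance}. Its lower bound $(1-\beta)x$ subtracts two error terms: $\norm{p-l(v)}\le s(v)\le \beta x/2$ (Claim (\ref{eq:dis2}) in its proof) and the movement inside the box, $\norm{q-q'}\le D(u')\le \beta x/2$; for the present lemma only the first is needed. Let $q$ be the point of $B(u')$ closest to $l(v)$, so $x=\norm{q-l(v)}$, and let $q'$ be the point of $B(u')$ closest to $p$. Since $\norm{q'-l(v)}\ge \norm{q-l(v)}$, the triangle inequality gives $r_{u'}=\norm{q'-p}\ge x-\norm{p-l(v)}\ge (1-\beta/2)x\ge \frac{3x}{4}$, using only $\beta<1/2$. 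Combined with $D(u')< x\beta/2$ this yields $D(u')\le \frac{2 r_{u'}\beta}{3}$, and then $D(c)\le D(u')$ and $r\ge r_{u'}$ finish the proof with the stated constant under the paper's actual hypothesis. This is precisely how the paper argues; the rest of your write-up (Lemma \ref{lm:partition}, Definition \ref{def-rec}, the ancestor relation from Step 2.2 of {\bf Algorithm \ref{alg-2}}) matches the paper's proof.
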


\begin{proof}
Assume that $p$ becomes recorded when processing a box-node $u$, where $u$ is either $c$ or an ancestor of $c$ in $T_q$.
We first show $D(u) \leq \frac{2r_u\beta}{3}$,
where $D(u)$ is the diameter of $B(u)$ and $r_u$ is the shortest distance between $B(u)$ and $p$.
Let $v$ be the distance-node that contains $p$ and is removed in Step 2.2 of {\bf Algorithm \ref{alg-2}},
and $q$ be the closest point in $B(u)$ to $l(v)$. Then
\begin{equation}\label{eq:dis3}
D(u) \leq \frac{\norm{q - l(v)}\beta}{2}.
\end{equation}
By using Claim (\ref{eq:dis2}) in the proof of  Lemma \ref{lm:distance}, we can see that
\[
\norm{p - l(v)} \leq \frac{\norm{q - l(v)}\beta}{2}.
\]
Let $q'$ be the closest point in $B(u)$ to $p$. By the assumption of $\beta < \frac{1}{2}$ and the triangle inequality, we have
\[
\norm{p - l(v)} + \norm{q' - p} \geq \norm{q' - l(v)} \geq \norm{q - l(v)}, \mbox{and}
\]
\[
r_u = \norm{q' - p} \geq \norm{q - l(v)} - \norm{p - l(v)} \geq (1 - \frac{\beta}{2})\norm{q - l(v)} \geq \frac{3\norm{q - l(v)}}{4}.
\]
Plugging these into (\ref{eq:dis3}), we obtain $D(u) \leq \frac{2r_u\beta}{3}$.

Now compare $D(c)$ and $r$ with $D(u)$ and $r_u$. Since $c$ is contained inside $B(u)$, we have $D(c) \leq D(u)$ and $r \geq r_u$.
Thus, the lemma follows.
\qed
\end{proof}

The lemma below characterizes the type-1 cells.

\begin{lemma}\label{lm:cell1}

If $c$ is a type-1 cell dominated by a distance-node $v$, then
for any point $q \in c$ and any point $p' \in P\setminus P_{v}$,
\[
\frac{\norm{q - l(v)}}  {\norm{q - p'}} \leq \frac{\beta}{\mathcal{P}(n)}.
\]

\end{lemma}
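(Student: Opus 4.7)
Let $u$ be the box-node in $T_q$ whose processing in Algorithm~\ref{alg-2} declared $c$ a type-1 cell dominated by $v$. In either sub-case of Step~4.1 (namely 4.1.1 where $c = B(u)$, or 4.1.2 where $c = u_1 = B(u) \setminus B'$), we have $c \subseteq B(u)$. Step~4.1 only fires when $L = \{v\}$ and the threshold
\[
\frac{r_{min} + D(u)}{r_c} \;<\; \frac{\beta}{2\mathcal{P}(n)}
\]
is met, where $r_{min}$ is the shortest distance from $l(v)$ to $B(u)$ and $r_c$ is the running minimum of the recorded distances accumulated by Step~2.2 up to this point in the recursion. These two facts are the only things I need to extract from the algorithm.

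First I would bound the numerator $\norm{q - l(v)}$ from above. Since $q \in c \subseteq B(u)$ and $l(v)$ is within distance $r_{min}$ of $B(u)$, the triangle inequality gives $\norm{q - l(v)} \leq r_{min} + D(u)$. Next I would bound the denominator $\norm{q - p'}$ from below for an arbitrary $p' \in P \setminus P_v$. The key bookkeeping observation is that at the moment $u$ is processed $L$ contains only $v$, so by the same argument as in the proof of Lemma~\ref{lm:partition} (initial coverage of $P$ via the root of $T_p$, with the only exit route from $L$ being Step~2.2), every $p' \in P \setminus P_v$ must already have been recorded, either during the processing of $u$ itself or during that of some ancestor of $u$ in $T_q$. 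Let $x'$ be the recorded distance of $p'$; by construction of $r_c$ as a running minimum, $x' \geq r_c$. Lemma~\ref{lm:distance}, applied at the ancestor box where $p'$ was recorded (whose box contains $B(u)$, hence contains $q$), then yields
\[
\norm{q - p'} \;\geq\; (1 - \beta)\, x' \;\geq\; (1 - \beta)\, r_c \;>\; \tfrac{1}{2}\, r_c,
\]
where the last inequality uses the standing assumption $\beta < 1/2$.

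Combining the two bounds finishes the proof:
\[
\frac{\norm{q - l(v)}}{\norm{q - p'}} \;<\; \frac{r_{min} + D(u)}{r_c/2} \;=\; \frac{2(r_{min} + D(u))}{r_c} \;<\; \frac{\beta}{\mathcal{P}(n)}.
\]
The only genuinely subtle step is the bookkeeping claim that every $p' \in P \setminus P_v$ has indeed been recorded by the time $c$ is emitted, together with the companion fact that its recorded distance dominates $r_c$; once that is settled, the rest is just combining the Step~4.1 threshold with Lemma~\ref{lm:distance}. I would therefore expect the main obstacle to be writing out this bookkeeping cleanly, essentially mirroring Lemma~\ref{lm:partition}.
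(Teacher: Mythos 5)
Your proposal is correct and follows essentially the same route as the paper's proof: bound $\norm{q-l(v)}$ by $r_{min}+D(u)$, note that every $p'\in P\setminus P_v$ is already recorded so Lemma \ref{lm:distance} gives $\norm{q-p'}\geq(1-\beta)x'\geq(1-\beta)r_c$, and combine with the Step 4.1 threshold, absorbing the factor $1/(1-\beta)\leq 2$ via $\beta<1/2$. The only cosmetic difference is where you apply that factor-of-two slack (in the denominator bound rather than at the end), which changes nothing.
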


\begin{proof}
Since $c$ is a type-1 cell dominated by $v$,  $v$ must be the only element in $L$ after Step 2 of {\bf Algorithm \ref{alg-2}}.
This means that any $p' \in P\setminus P_{v} $ must have already been recorded with a distance, say $x$.
By Lemma \ref{lm:distance}, we know $\norm{q - p'} \geq (1 - \beta)x$.
Since $r_c$ maintains the minimum of all recorded distances, we have $x \geq r_c$.

By Step 4.1 of {\bf Algorithm \ref{alg-2}}, we know $\frac{r_{min} + D(u)}{r_{c}} < \frac{\beta}{2\mathcal{P}(n)}$,
where $r_{min}$ is the distance between $B(u)$
(which becomes $c$)
and $l(v)$, and $D(u)$ is the diameter of $B(u)$.
Since $q$ is in $B(u)$,  $\norm{q - l(v)} \leq r_{min} + D(u)$. Thus
\[
\frac{\norm{q - l(v)}}{\norm{q - p'}} \leq \frac{r_{min} + D(u)}{(1 - \beta)x} \leq
\frac{r_{min} + D(u)}{(1 - \beta)r_{c}} \leq \frac{\beta}{2(1 - \beta)\mathcal{P}(n)} \leq \frac{\beta}{\mathcal{P}(n)},
\]
where the last inequality is by the assumption of $\beta < \frac{1}{2}$. Hence the lemma holds.
\qed
\end{proof}

The following definition is mainly for the proof of Theorem \ref{the-aid} below.

\begin{definition}
In $\mathbb{R}^d$, let $C$ be a set of $k$ coincident points and $q$ be any query point. The
{\em maximum duplication function} $\rho$ for an influence function $F$ satisfying Property \ref{pro-3} is defined as $\rho(k)=|C_{m}(C,q)|$ ({\em i.e.}, the cardinality of $C_{m}(C,q)$).
For any set $C'$ of $k$ points in $\mathbb{R}^d$ (not necessarily coincident points), the {\em selection mapping} $\eta$ maps $C'$ to an arbitrary subset  $\eta(C')$ of $C'$ with cardinality  $\rho(k)$.
\end{definition}

Note that in the above definition, it is possible that, for some influence function $F$, the maximum influence of a set $C$ of $k$ coincident points on a query point $q$ is attained by a subset of  $C$. By Property \ref{pro-3}, we know that $\rho(k)$ depends only on the influence function $F$ and is independent of $C$ and $q$.

The following theorem ensures that all points in each cell generated by the AI decomposition have a common approximate maximum influence site ({\em i.e.,} the correctness of the AI decomposition).

\begin{theorem}
\label{the-aid}
Let $c$ be any cell generated by the AI-Decomposition algorithm with an error tolerance $\beta = \Delta^{-1}(\epsilon)$, where $\Delta$ is the error estimation function.  Then the following holds.
\begin{enumerate}

\item
If $c$ is a type-1 cell dominated by a distance-node $v$, then for any query point $q \in c$,
$F(\eta(P_v), q) \geq (1 - \epsilon)F(C_{m}(P,q),q)$.

\item
If $c$ is a type-2 cell and $q'$ is an arbitrary point in $c$, then $F(C_{m}(P,q'), q) \geq (1 - \epsilon)F(C_{m}(P,q),q)$ for any point $q \in c$. Furthermore, if there exists a subset $C \subseteq P$ such that $F(C, q') \geq (1 - \beta)F(C_{m}(P,q'),q')$ and $(C,q')$ is a stable pair, then $F(C, q) \geq (1 - \epsilon)F(C_{m}(P,q),q)$ for any point $q$ in $c$. 

\item
For any query point $q$ outsides the bounding box $B(u_{root})$,  $F(\eta(P_{v_{root}}), q) \geq (1 - \epsilon)F(C_{m}(P,q),q)$, 
where $u_{root}$ is the root of $T_{q}$ and $v_{root}$ is the root of $T_{p}$.

\end{enumerate}
\end{theorem}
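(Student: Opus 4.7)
\medskip

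\noindent\textbf{Proof plan.} The three parts all follow the same philosophy: use the geometric bounds supplied by Lemmas \ref{lm:cell1} and \ref{lm:cell2} to certify that the relevant configurations are ``close enough'' to idealized configurations, and then invoke Property \ref{pro-1} (via Lemma \ref{lem-app}) or Property \ref{pro-2} to transfer near-optimality between them. Throughout, I would use the key fact that by Definition \ref{def-if}, the influence function is translation invariant, so $F(C,q)=F(C+v,q+v)$ for any vector $v$.

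For Part 1, I would argue in two stages. First, let $v$ be the dominating distance-node of $c$. Lemma \ref{lm:cell1} gives $\mathcal{P}(n)\norm{q-l(v)} \le \beta\,\norm{q-p'}$ for every $p' \in P \setminus P_v$; since $l(v) \in P_v$, Property \ref{pro-2} (with $P' = P_v$) then yields $F(C_m(P_v,q),q) \ge (1-\beta)F(C_m(P,q),q)$. Second, I would show $P_v$ is a $\beta$-perturbation of the multiset $\tilde P_v$ consisting of $|P_v|$ coincident copies of $l(v)$, with witness $q$. This follows because, when $c$ lies outside $E(v)$ (Step 4.1.1 or the $u_1$ piece of Step 4.1.2), $\norm{q-l(v)} \ge 4s(v)/\beta$, so $\norm{p-l(v)} \le s(v) \le (\beta/4)\norm{q-l(v)}$ for every $p \in P_v$. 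By Property \ref{pro-3}, any subset of $\tilde P_v$ of size $\rho(|P_v|)$ achieves $F_{max}(\tilde P_v,q)$; the matching subset $\eta(P_v) \subseteq P_v$ is therefore, under the inverse perturbation, the image of a maximal pair. Applying Lemma \ref{lem-app} to this pair with witness $q$ upgrades $(1-\beta)$ to $(1-\Delta(\beta))=(1-\epsilon)$ and yields $F(\eta(P_v),q) \ge (1-\epsilon)F(C_m(P,q),q)$ (possibly after adjusting $\beta$ by a constant factor, which is absorbed into the definition of $\Delta$).

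For Part 2, I would exploit translation invariance to convert spatial proximity in $c$ into a point-set perturbation. Let $q,q' \in c$ and set $P' := P+(q-q')$, with the obvious bijection $\psi:P\to P'$. By Lemma \ref{lm:cell2}, for every $p \in P$ we have $\norm{\psi(p)-p}=\norm{q-q'} \le D(c) \le (2\beta/3)\norm{p-q}$, so $\psi$ is a $\beta$-perturbation with witness $q$, and equivalently $\psi^{-1}:P' \to P$ is a $\beta$-perturbation with witness $q$. Translation invariance gives $F(\psi(C),q)=F(C,q')$ for every $C \subseteq P$; in particular $C_m(P',q)=\psi(C_m(P,q'))$ and $F_{max}(P',q)=F_{max}(P,q')$. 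Since $(C_m(P',q),q)$ is a maximal pair, Lemma \ref{lem-app} applied with input set $P'$, perturbation $\psi^{-1}$, and near-optimal subset $C_m(P',q)$ gives $F(\psi^{-1}(C_m(P',q)),q) \ge (1-\Delta(\beta))F_{max}(P,q)$, and the left-hand side is exactly $F(C_m(P,q'),q)$. For the second sentence, the same translation trick transports the stable pair $(C,q')$ to the stable pair $(C+(q-q'),q)$ in $P'$ (the stability transfers because $\norm{\cdot}$ is translation invariant), after which Lemma \ref{lem-app} again yields the conclusion.

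Part 3 is essentially Part 1 applied at the root: since $P_{v_{root}}=P$ and $E(v_{root})=B(u_{root})$, any $q$ outside $B(u_{root})$ satisfies $\norm{q-l(v_{root})} \ge 4s(v_{root})/\beta$, so $P$ is a $\beta$-perturbation of $|P|$ coincident copies of $l(v_{root})$ with witness $q$, and the argument of Part 1 (without the preliminary Property \ref{pro-2} step, which is not needed because there is no ``outside'' set to dominate) yields the bound. The main obstacle I expect is in Part 1, in making precise the claim that $\eta(P_v)$ is near-optimal: one must check that the chosen subset of $P_v$ corresponds under the perturbation to a \emph{maximal} pair of the coincident configuration (so that Property \ref{pro-1} applies), and that the two error factors $(1-\beta)$ and $(1-\Delta(\beta))$ compose into a single $(1-\epsilon)$ after setting $\beta=\Delta^{-1}(\epsilon)$; this requires invoking the monotonicity and vanishing-at-zero hypotheses on $\Delta$ and possibly a mild shrinkage of $\beta$.
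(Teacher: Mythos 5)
Your overall strategy is the paper's: Parts 2 and 3 match the paper's proof almost verbatim (Part 2 is exactly the translation trick, with the only cosmetic difference that you translate $P$ by $q-q'$ and keep the witness $q$, whereas the paper translates by $q'-q$ and works at the witness $q'$; Part 3 is the collapse-to-$l(v_{root})$ argument). The one substantive deviation is the order of composition in Part 1. The paper first applies the collapse map $\psi_1$ (fixing $P\setminus P_v$, sending $P_v$ to $l(v)$), invokes Property \ref{pro-2} \emph{inside the collapsed configuration} $\psi_1(P)$ via Lemma \ref{lm:cell1} to get $F(\psi_1(\eta(P_v)),q)\ge(1-\beta)F_{max}(\psi_1(P),q)$, and then makes a \emph{single} application of Lemma \ref{lem-app} to the maximal, stable pair $(\psi_1(\eta(P_v)),q)$ with the $\beta$-perturbation $\psi_1^{-1}$; because Lemma \ref{lem-app} absorbs the $(1-\beta)$ near-optimality and the $\beta$-perturbation loss into one factor, this yields exactly $(1-\Delta(\beta))=(1-\epsilon)$ at the prescribed $\beta=\Delta^{-1}(\epsilon)$. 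Your version applies Property \ref{pro-2} in the original configuration and Lemma \ref{lem-app} only within $P_v$ versus its coincident copy, and then multiplies the two guarantees, giving $(1-\beta)\bigl(1-\Delta(\beta/4)\bigr)\approx(1-\Delta^{-1}(\epsilon))(1-\epsilon)$, which is strictly weaker than the stated $(1-\epsilon)$ for the algorithm run with $\beta=\Delta^{-1}(\epsilon)$. You flag this, but the remedy is not to ``absorb the constant into $\Delta$'' (the error estimation function $\Delta$ is fixed by Lemma \ref{lem-app} and already encodes the composition) nor an ad hoc shrinkage of $\beta$; it is simply to reorder as the paper does, so that the $(1-\beta)$ slack lives in the pre-image set of the one Lemma \ref{lem-app} application. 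Everything else in your outline (the $4s(v)/\beta$ bound outside $E(v)$, the role of $\rho$ and $\eta$ via Property \ref{pro-3}, stability of maximal pairs via Property \ref{pro-1}, and translation invariance of stability in the ``furthermore'' clause of Part 2) is correct and is what the paper uses.
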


\begin{proof}
For case 1 above, 
we define a mapping $\psi_1$ on $P$  as follows.
\[
\psi_1(p) =
\begin{cases}
p   &\text{if $p\not\in P_v$,}\\
l(v) &\text{if $p\in P_v$}.
\end{cases}
\]

Note that $\psi_{1}(P) = \psi_{1}(P_v) \cup \psi_{1}(P \setminus P_v)$ ($\psi_{1}(\cdot)$
is a multiset). By Lemma \ref{lm:cell1}, we know that for any $p \in \psi_{1}(P_v)$ and $p' \in \psi_{1}(P \setminus P_v)$,
\[
\frac{\norm{p - q}}{\norm{q - p'}} \leq \frac{\beta}{\mathcal{P}(n)}.
\]
Since $\psi_{1}(\eta(P_v))=C_{m}(\psi_{1}(P_v), q)$, by Property \ref{pro-2}, we have
\begin{equation}\label{eq:t1}
F(\psi_{1}(\eta(P_v)),q) \geq (1 - \beta)F_{max}(\psi_{1}(P),q).
\end{equation}
In this case, $c$ does not intersect $E(v)$ (by {\bf Algorithm \ref{alg-2}}). 
This means that the minimum distance between $q$ and $l(v)$ is greater than $\frac{4s(v)}{\beta}$.
By {\bf Algorithm \ref{alg-1}}, we know that the distance between $l(v)$ and any point in $P_v$ is upper-bounded by $s(v)$.
It is easy to see that the inverse $\psi_1^{-1}$ of $\psi_{1}$ is a $\beta$-perturbation with respect to $q$.
By (\ref{eq:t1}), Lemma \ref{lem-app}, and the fact that $(\psi_{1}(\eta(P_v)),q)$ is a maximal and stable pair, we know
$F(\psi_{1}^{-1}(\psi_{1}(\eta(P_v))),q) \geq (1 - \beta)F_{max}(\psi_{1}^{-1}(\psi_{1}(P)),q)$.
Thus $F(\eta(P_v), q) \geq (1 - \epsilon)F(C_{m}(P,q),q)$.

For case 3, note that $B(u_{root})$ is  simply $E(v_{root})$ and $P_{v_{root}} = P$. 
By the same argument as for case 1,  we can show that $F(\eta(P_{v_{root}}), q) \geq (1 - \epsilon)F(C_{m}(P,q),q)$ for any $q$ outsides $B(u_{root})$.

For case 2, we only prove the second part of this case since it implies the first part.
Let $q$ be any fixed point in $c$ and $\psi_2$ be a mapping which maps every point $p \in P$ to a point at the location of $\psi_2(p) = p + q' - q$ ({\em i.e.,} $\psi_{2}$ is a translation). 
Clearly, for any $P' \subseteq P$, $F(\psi_2(P'),q') = F(P',q)$ (by Property \ref{pro-3}).
By Lemma \ref{lm:cell2}, we know $\norm{q - q'} \leq \frac{2\norm{q - p}\beta}{3}$, for any $p \in P$.
This means that $\psi_2$ is a $\beta$-perturbation with respect to $q'$.
Since  $F(C, q') \geq (1 - \beta)F(C_{m}(P,q'), q')$,
by Lemma \ref{lem-app}, we have $F(\psi_2(C), q') \geq (1 - \epsilon)F(C_{m}(\psi_2(P),q'),q')$. If we translate all points back to their original positions, then $\psi_{2}(P)$ becomes $P$ and $q'$ becomes $q$. By Definition \ref{def-if} and Property \ref{pro-3}, we know that the influence remains the same under translation. Thus, we have  
 $F(C, q) \geq (1 - \epsilon)F(C_{m}(P,q),q)$.
Since $(C_{m}(P,q'), q')$ is a maximal pair and hence a stable pair by Property \ref{pro-1},
it follows that for any $q$ in $c$, $F(C_{m}(P,q'), q) \geq (1 - \epsilon)F(C_{m}(P,q),q)$.
\qed

\end{proof}

The following packing lemma is a key to upper-bounding the total number of type-1 and type-2 cells and the running time of the AI decomposition ({\em i.e.,} Theorem \ref{the-time} below).  It is also a key lemma for designing our efficient assignment algorithm for the vector CIVD problem.

\begin{lemma}[Packing Lemma]
\label{lem-ann}
Let $o_{c}$ be any point in 
$\mathbb{R}^d$, and $S_{in}$ and $S_{out}$ be two $d$-dimensional boxes ({\em i.e.,} axis-aligned hypercubes) co-centered at $o_{c}$
and with edge lengths $2r_{in}$ and $2r_{out}$, respectively, with $0 < r_{in} < r_{out}$.
Let $\mathcal{B}$ be a set of mutually disjoint $d$-dimensional boxes such that for any $B \in \mathcal{B}$, $B$ intersects the region $S' = S_{out} - S_{in}$ (i.e., the region sandwiched by $S_{in}$ and $S_{out}$) and  its edge length $L(B) \geq C \cdot r$, where $r$ is the minimum distance between $B$ and $o_{c}$ and $C$ is a positive constant.
Then $|\mathcal{B}| \leq C'(C,d)\log(r_{out}/r_{in})$, where $C'(C,d)$ is a constant depending only on $C$ and $d$.
\end{lemma}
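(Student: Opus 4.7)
The plan is to partition $\mathcal{B}$ according to the scale of $r(B)$ and show that each scale contributes only $O_{C,d}(1)$ boxes. Because every $B \in \mathcal{B}$ intersects $S'\subseteq S_{out}$, we have $r(B)\leq \sqrt{d}\,r_{out}$, so only $O(\log(r_{out}/r_{in}))$ dyadic scales are possible. Concretely, set $R_k=2^k r_{in}$ and define the dyadic groups
\[
G_k=\{B\in\mathcal{B}:r(B)\in[R_k,R_{k+1})\},\quad k=0,1,\ldots,K,
\]
with $K=O(\log(r_{out}/r_{in}))$, together with one residual group $G_{-1}=\{B\in\mathcal{B}:r(B)<r_{in}\}$. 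The lemma will follow once $|G_k|=O_{C,d}(1)$ is established for each $k\geq 0$ and $|G_{-1}|=O_{C,d}(1)$ is handled separately.

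For $k\geq 0$, I would use a straightforward volume-packing argument. For each $B\in G_k$, the hypothesis $L(B)\geq C\,r(B)\geq C R_k$ lets me choose a sub-hypercube $B^*\subseteq B$ of edge length $C R_k/2$ that contains the closest point $p^*(B)$ of $B$ to $o_c$. Since $\|p^*(B)-o_c\|=r(B)<2R_k$ and the diameter of $B^*$ is $\sqrt{d}\,C R_k/2$, the whole $B^*$ sits inside a single Euclidean ball $\mathcal{E}_k$ centered at $o_c$ of radius $2R_k+\tfrac{\sqrt{d}}{2}C R_k$. The $B^*$'s are pairwise disjoint (as subsets of the disjoint $B$'s), each has volume $(CR_k/2)^d$, and they all lie inside $\mathcal{E}_k$, so comparing volumes gives
\[
|G_k|\leq c_d\!\left(\frac{4+\sqrt{d}\,C}{C}\right)^{\!d},
\]
a constant depending only on $C$ and $d$, independent of $k$.

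The main obstacle is controlling $|G_{-1}|$, since the condition $L(B)\geq C\,r(B)$ degenerates to nothing when $r(B)$ is tiny. To bypass this, I would exploit the constraint $B\cap S'\neq\emptyset$ more carefully. If $r(B)<r_{in}$, then $p^*(B)$ lies in $S_{in}$ (its $\ell_\infty$-distance to $o_c$ is bounded by its Euclidean distance $r(B)$), yet $B$ also contains some point $q\in S_{out}\setminus S_{in}$; along the coordinate $i$ where $q$ exits $S_{in}$, I get $L(B)\geq |q_i-p^*_i|\geq r_{in}-r(B)$. Combining this with $L(B)\geq C\,r(B)$ and balancing the two bounds at $r(B)=r_{in}/(C+1)$ yields $L(B)\geq \tfrac{C}{C+1}\,r_{in}$ uniformly over $\{B\in G_{-1}:r(B)>0\}$. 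Applying the sub-hypercube and ball-volume packing argument with $R_k$ replaced by $r_{in}$ and $CR_k/2$ replaced by $\tfrac{C}{2(C+1)}\,r_{in}$ then gives $O_{C,d}(1)$ such boxes, while the degenerate subcase $o_c\in B$ contributes at most $2^d$ further boxes by disjointness. Summing $|G_{-1}|+\sum_{k=0}^{K}|G_k|$ yields the claimed bound $|\mathcal{B}|\leq C'(C,d)\log(r_{out}/r_{in})$.
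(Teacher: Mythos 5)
Your proof is correct, and it takes a genuinely different route from the paper's. The paper first reduces to balls and splits $\mathcal{B}$ into boxes fully inside the shell $S'$, boxes straddling the outer boundary, and boxes meeting $S_{in}$; the first family is counted by integrating the weight $f(p)=r(p)^{-d}$ over the shell (whose integral is $\Theta_d(\log(r_{out}/r_{in}))$) and showing each box captures a constant amount of this integral, while the two boundary families are bounded by constants via an iterative shrinking construction ($R(B')$) followed by a volume comparison inside a bounding box. You instead slice by dyadic distance scales $r(B)\in[2^k r_{in},2^{k+1}r_{in})$ and, within each scale, pack disjoint sub-hypercubes of edge $\Theta_{C,d}(R_k)$ anchored at the closest points into a ball of radius $\Theta_{C,d}(R_k)$, which bounds each scale by a constant; the only special case is $r(B)<r_{in}$, which you dispose of with the clean observation that a box whose closest point lies inside $S_{in}$ but which still meets $S'$ must have edge length at least $r_{in}-r(B)$, hence at least $\tfrac{C}{C+1}r_{in}$ after balancing against $L(B)\geq C\,r(B)$ (and the $o_c\in B$ case is trivially $O_d(1)$ by disjointness). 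Your shell-by-shell packing buys a more elementary and uniform argument: the sub-cube trick works whether or not a box protrudes beyond $S_{out}$, so you avoid the paper's separate and somewhat delicate treatment of the two boundary families; the paper's integral argument, in exchange, counts all interior boxes in one stroke without any dyadic bookkeeping. Both arguments produce constants exponential in $d$, and both, strictly speaking, yield $O_{C,d}(\log(r_{out}/r_{in})+1)$ — the additive constant (your residual scales and group $G_{-1}$, the paper's $\mathcal{B}_2$ and $\mathcal{B}_3$) is absorbed in the same way in the regime where the lemma is applied.
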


\begin{proof}
We prove a slightly different version of this lemma, where $S_{in}$ and $S_{out}$ are two $d$-dimensional balls co-centered at $o_{c}$ and with radii $r_{in}$ and $\sqrt{d} \times r_{out}$, respectively (see Fig. \ref{fig-ann}(a)).
The outer ball can be viewed as the minimum enclosing ball of the original outer box $S_{out}$ and the inner ball can be viewed as the maximum inscribed ball of the original
inner box $S_{in}$. Since the new region $S'= S_{out} - S_{in}$ contains the original region $S'$, any box intersecting the original $S'$ also intersects the new $S'$. Thus, the size of $\mathcal{B}$ can
only increase in the new version. The difference is that the radii $\sqrt{d} \times r_{out}$ and $r_{in}$ have changed by a constant factor depending on $d$. Thus, the new version of the lemma implies the original version.

Without loss of generality, we assume that $o_{c}$ is at the origin of $\mathbb{R}^{d}$.
We first consider the special case that every box of $\mathcal{B}$ is entirely contained inside $S'$. 

For each $B \in \mathcal{B}$, let $r_{max}(B)$ be the maximum distance from $B$ to $o_{c}$ and $r_{min}(B)$ be the minimum distance from $B$ to $o_{c}$.
By the statements of this lemma, we know that the edge length $L(B) \geq C \cdot r_{min}(B)$. Since $r_{max}(B) \leq r_{min}(B) + dL(B)$, 
we have $r_{max}(B) \leq (d + 1/C)L(B)$. Thus, $L(B) \geq C'r_{max}(B)$ for all $B \in \mathcal{B}$, where $C'=1/(d + 1/C)$. 

Define a function $f:\mathbb{R}^d \to \mathbb{R}$ as $f(p) = r(p)^{-d}$ for any $p \in \mathbb{R}^d$, where $r(p)$ is the distance between $p$ and $o_{c}$. Then, we have $\int\limits_{S'} f= C_d\log(r_{out}/r_{in}) $, where $\int\limits_{S'} f$ is the integration of $f$ over $S'$ and $C_d$ is a constant depending only on $d$.

Now consider $\sum_{B \in \mathcal{B}} \int\limits_{B} f$. Since all boxes in $\mathcal{B}$ are disjoint and completely contained in $S'$, we have $$\sum_{B \in \mathcal{B}} \int\limits_{B} f \leq \int\limits_{S'} f = C_d\log(r_{out}/r_{in}).$$

For each $B \in \mathcal{B}$, since $r(p) \leq r_{max}(B)$ for any $p\in B$, we have a lower bound, $(r_{max}(B))^{-d}$, on the value of $f$. 
This implies that $\int\limits_{B} f \geq (L(B))^d \cdot (r_{max}(B))^{-d}$. Since $L(B) \geq C'r_{max}(B)$, we have  $\int\limits_{B} f \geq {C'}^d$ for any $B \in \mathcal{B}$. Thus, $\sum_{B \in \mathcal{B}} {C'}^d \leq \sum_{B \in \mathcal{B}} \int\limits_{B} f \leq \int\limits_{S'} f = C_d\log(r_{out}/r_{in}) $. This means $|\mathcal{B}| \leq {C'}^{-d} C_d \log(r_{out}/r_{in})$.

Now, we consider the general case that $B$ may not be fully contained inside $S'$.  
$\mathcal{B}$ can be partitioned as $\mathcal{B} = \mathcal{B}_1 \cup \mathcal{B}_2 \cup \mathcal{B}_3$, where $\mathcal{B}_1$ is the set of boxes fully inside $S'$ (see Fig. \ref{fig-ann}(b)), $\mathcal{B}_2$ is the set of boxes intersecting both the inside and outside regions of $S_{out}$, and $\mathcal{B}_3$ is the set of boxes intersecting $S_{in}$ (see Fig. \ref{fig-ann}(a)). By the above discussion, we know 
$|\mathcal{B}_1| \leq C_1 \log(r_{out}/r_{in})$, where the constant $C_1$ depends only on $C$ and $d$. Below, we will show that $|\mathcal{B}_2|$ and $|\mathcal{B}_3|$ are both bounded by some constants depending only on $C$ and $d$.  

Let $B'$ be any box in $\mathcal{B}_2$. Then $B'$ intersects both the inside and outside regions of $S_{out}$. Consider the following process to determine a box $R(B')$ from $B'$. Let $B_0$ be $B'$. For $i=0, 1, \ldots$, iteratively 
divide $B_i$ into $2^d$ smaller boxes in a quad-tree decomposition fashion. 
At the $i$-th iteration, try to find a small box such that it intersects both the inside and outside regions of $S_{out}$, and its edge length is no smaller than $C$ times its closest distance to the origin. If such a small box exists, then let it be $B_{i+1}$ and continue to the next iteration. Repeat this process until no such small box exists.  We let the last $B_i$ be $R(B')$.
Note that  $R(B')$ must exist, since in each iteration, the edge length of the box is halved.
Eventually, the edge length of the box will be smaller than $C$ times its closest distance to the origin.

Let $\mathcal{B}'_2 = \{R(B') \mid B' \in \mathcal{B}_2 \}$. Let $B''$ denote $R(B')  \in \mathcal{B}'_2$. We claim that $B''$ is fully contained in a big box $B_{bound}$, where $B_{bound}$ is a box centered at the origin $o_{c}$ and with an edge length $L_{bound} = 2r_{out} + \max\{8r_{out}, 4Cr_{out}\}$. For contradiction, suppose this is not the case.
Then since $B''$ intersects $S_{out}$ and the outside region of $B_{bound}$, it implies that the edge length of $B''$ is no smaller than both $4r_{out}$ and $2Cr_{out}$.
Divide $B''$ into $2^d$ smaller boxes in a quad-tree decomposition fashion. 
Let $p$ be the point in $B''$ that is the closest to $o_{c}$. Then one of these smaller boxes, say $B_p$, contains $p$. Clearly, $B_p$ intersects $S_{out}$ and is also not completely inside 
$S_{out}$. This is due to the fact that the edge length of $B_p$ is no smaller than $2r_{out}$. 
Furthermore, we know that the edge length of $B_p$ is no smaller than $Cr_{out}$. But this is a contradiction, since $B_p$ satisfies the condition in the above iterative selection process and $R(B')$ should not be in $\mathcal{B}'_{2}$.

Note that for any $B'' \in \mathcal{B}'_2$, the edge length of $B''$ is larger than $(1/(d + 1/C))r_{max}(B'')$, where $r_{max}(B'')$ is the largest distance between a point in $B''$ and the origin $o_{c}$. Since $r_{max}(B'') \geq r_{out}$, the edge length of $B''$ is larger than $(1/(d + 1/C))r_{out}$. Also since all boxes in $\mathcal{B}'_{2}$ are disjoint and contained in $B_{bound}$, which has an edge length of $2r_{out} + \max\{8r_{out}, 4Cr_{out}\}$,
by comparing the volumes of $B''$ and $B_{bound}$, it is easy to see that the total number of such boxes is bounded by a constant depending only on $C$ and $d$ (as $r_{out}$ is canceled out). 

By a similar argument, we can show that  $|\mathcal{B}_3|$ is also bounded by a constant. Hence the lemma follows.
\qed
\end{proof}

\begin{figure}[h]
\centering
\includegraphics[height=2.2in]{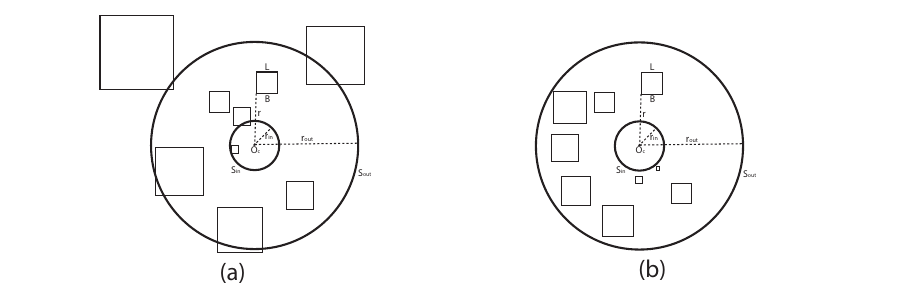}
\caption{An example illustrating Lemma \ref{lem-ann}.}
\label{fig-ann}
\end{figure}

The next lemma is needed by the proof of Theorem \ref{the-time}.

\begin{lemma}
The AI-Decomposition algorithm eventually stops.
\end{lemma}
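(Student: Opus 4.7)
The plan is to prove termination by showing that the recursion tree generated by \textbf{Algorithm \ref{alg-3}} contains no infinite path, using $\mathrm{edgeLength}(B(u))$ as the primary monovariant. First, I would verify that each single invocation of \emph{Decomposition} performs only finitely many internal operations before returning or recursing. The while loop of Step 1 strictly descends $T_p$: each iteration replaces a non-leaf $v \in L$ by its two children in $T_p$, and leaves cannot trigger the loop condition because for a leaf $v$ the box $E(v)$ is $0$-sized (by \textbf{Algorithm \ref{alg-1}}), so $B(u)\cap E(v)$ has no positive-length edge. Since $T_p$ is finite, this loop terminates; Steps 2--4 then do bounded work in $|L|$.

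Second, I would establish the halving invariant: every recursive call is issued on a box whose edge length is at most half that of $B(u)$. For Step 5 this is immediate. For Step 4.1.3, the argument uses the fact that Step 1 has just completed in this call, so no $v \in L$ satisfies the Step 1 criterion; hence every edge of $B(u) \cap E(v)$ is strictly shorter than $\mathrm{edgeLength}(B(u))/2$, and the smallest axis-aligned hypercube $B' \subseteq B(u)$ containing $B(u)\cap E(v)$, which is the box of the recursive call, inherits the same bound. This halving invariant is the only nontrivial ingredient and is the main obstacle.

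Finally, I would argue that geometric shrinkage of $B(u)$ must trigger a base case. Let $\delta_{\min} = \min_{p \neq p' \in P} \norm{p - p'}$, which is strictly positive since $P$ consists of finitely many distinct points, and let $s_{\min}$ be the smallest positive value of $s(v)$ over the internal nodes of $T_p$. Once $\mathrm{edgeLength}(B(u))$ falls below a sufficiently small constant multiple of $\min\{\delta_{\min}, \beta s_{\min}\}/\sqrt{d}$, any internal $v$ that still appears in $L$ is either removed by Step 2.2 (if $B(u)$ is disjoint from $E(v)$, then by \textbf{Algorithm \ref{alg-1}} $r_{\min} \ge 4s(v)/\beta$, and the condition $D(u) < r_{\min}\beta/2$ eventually holds) or replaced by its children via Step 1 of a subsequent call (if $B(u) \subseteq E(v)$). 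After at most $\mathrm{depth}(T_p)$ such replacements, $L$ consists only of leaves. Since $B(u)$ can contain at most one input point once its edge length is below $\delta_{\min}/(2\sqrt{d})$, every leaf $v$ with $l(v) \notin B(u)$ is removed by Step 2.2; we either reach Step 3 with an empty $L$, or end with a single leaf $v_p$ for which $r_{\min}$ and $D(u)$ tend to $0$ while $r_c$ stays bounded below by the recorded distance of some previously removed leaf, so Step 4.1 fires and Step 4.1.1 returns a type-1 cell. In either case a base case is reached, contradicting the existence of an infinite branch.
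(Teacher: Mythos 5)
Your overall strategy (the halving invariant forces the boxes to shrink, far-away distance-nodes get removed in Step 2.2, and the last surviving node triggers Step 4.1) is the same as the paper's, and your explicit verification that the Step 1 loop terminates and that the Step 4.1.3 box $B'$ also at least halves the edge length is a useful addition that the paper only asserts. However, there is a genuine gap at the decisive step. The claim that ``every leaf $v$ with $l(v)\notin B(u)$ is removed by Step 2.2'' once $\mathrm{edgeLength}(B(u))<\delta_{\min}/(2\sqrt d)$ is false: Step 2.2 removes $v$ only when $D(u)<r_{min}\beta/2$, and an input point lying just outside $B(u)$, at distance much smaller than $D(u)$, is never removed no matter how small the box is. This is exactly the regime in which the recursion keeps subdividing near the input points, so your conclusion ``either $L$ becomes empty or a single leaf $v_p$ remains'' is unsupported as written (for instance, two input points at distance $\delta_{\min}$ may both lie inside or just outside the current tiny box for many further levels). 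What is needed --- and what the paper's proof supplies --- is a limit argument along a hypothetical infinite branch: the nested boxes converge to a single point $p$, so for every surviving node with $l(v)\neq p$ the distance from $B(u)$ to $l(v)$ tends to $\norm{p-l(v)}>0$ while $D(u)\to 0$, hence Step 2.2 eventually fires; only a node with $l(v)=p$ can survive, and for it $r_{min}+D(u)\to 0$ while $r_c$ is eventually fixed (and positive, or $\infty$), so Step 4.1 fires and the branch ends. (An equivalent repair without full convergence: after finitely many replacements and removals the list $L$ is a fixed set; a leaf that is never removed satisfies $d(B(u),l(v))\le 2D(u)/\beta\to 0$, so two surviving leaves $p_1,p_2$ would give $\delta_{\min}\le d(B(u),p_1)+D(u)+d(B(u),p_2)\to 0$, a contradiction.)

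A second, smaller omission: your case analysis for internal nodes in the threshold regime covers only ``$B(u)$ disjoint from $E(v)$'' and ``$B(u)\subseteq E(v)$,'' skipping the case where $B(u)$ meets $E(v)$ but is not contained in it. This case is harmless but must be argued: if the Step 1 condition does not hold for $v$, then in some coordinate the overlap of $B(u)$ and $E(v)$ is shorter than $\mathrm{edgeLength}(B(u))/2$, which forces $B(u)$ to straddle a facet of $E(v)$, whence $d(B(u),l(v))\ge 4s(v)/\beta - D(u)$ and Step 2.2 fires once the box is small; as written, your dichotomy leaves this configuration unaccounted for.
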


\begin{proof}
For contradiction, suppose this is not the case. Then there must exist a chain of infinitely many box-nodes $u_1, u_2, \ldots$.
Clearly, after $i > M$ levels of recursion for some large enough integer $M>0$, the set of distance-nodes in $L$ will no longer change,  since otherwise $L$ will either 
become empty and therefore the algorithm stops, or contain every node in $T_{p}$ and become stable.
Let $v_1, v_2, \ldots, v_m$ be the set of unchanging distance-nodes in $L$. 
Since at each level of recursion, $B(u_{i+1})$ always halves the edge length of $B(u_{i})$ and is contained inside $B(u_{i})$ for each $i$,
$B(u_1), B(u_2), \ldots$, will eventually converge to a single point, say $p$, in $\mathbb{R}^d$.
If $p$ is not coincident with any of $l(v_1), l(v_2), \ldots, l(v_m)$, say $p \neq l(v_1)$, then since the sizes of $B(u_1), B(u_2), \ldots$, approach to zero,  the distance between each box $B(u_{i})$ and $l(v_1)$ converges to $\norm{p - l(v_1)}>0$.  After a sufficient number of recursion
levels, $B(u_i)$ will become small, comparing to the distance between $B(u_i)$ and $l(v_1)$,
and thus $v_1$ will be removed from $L$ in Step 2 of {\bf Algorithm \ref{alg-2}}. This is a contradiction.

Thus, the only remaining possibility is that $m = 1$ and $p$ is coincident with $l(v_1)$
({\it i.e.}, $l(v_1) = p$).
Since $r_c$ will no longer change after $i$ levels of recursion,
and the sizes of $B(u_1), B(u_2), \ldots$, and their distances to $l(v_1)$ all approach to zero,
there must be some $u_l$ such that the condition in Step 4.1 of {\bf Algorithm \ref{alg-2}} is satisfied,
which will result in the removal of $v_{1}$ from $L$ or the algorithm stops. 
This is a contradiction.
\qed
\end{proof}

The next lemma shows a property of the distance-tree $T_{p}$ that will be used in the proof of Theorem \ref{the-time}.

\begin{lemma}
\label{lem-wspd}
Let $v$ be any node in $T_{p}$ other than the root, and $r$ be the minimum distance between any input point in $P_v$ and any input point in $P\setminus P_{v}$.
Let $v'$ be the parent of $v$ in $T_{p}$.  Then $s(v') \leq 2nr$.
\end{lemma}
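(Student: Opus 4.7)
The plan is to express $s(v')$ as a sum of weights of $G(W)$-edges that drive merges in the subtree of $T_{p}$ rooted at $v'$, and then bound every such weight by $2r$, using the spanner property of $G(W)$ together with the length-ordered (Kruskal-style) processing in Algorithm~\ref{alg-1}.

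First, I would unroll the recurrence $s(v)=s(v_{1})+s(v_{2})+w(e)$ to obtain $s(v')=\sum_{e\in \mathcal{E}(v')} w(e)$, where $\mathcal{E}(v')$ is the set of $|P_{v'}|-1\le n-1$ edges of $G(W)$ that caused merges strictly inside the subtree rooted at $v'$. Let $e^{\star}$ denote the last edge extracted from $Q$ among them---the one whose processing actually merges $v$ with its sibling to create $v'$. Because $Q$ is a min-priority queue, every edge in $\mathcal{E}(v')$ is extracted no later than $e^{\star}$, so $w(e)\le w(e^{\star})$ for all $e\in\mathcal{E}(v')$. It therefore suffices to show $w(e^{\star})\le 2r$, from which $s(v')\le(|P_{v'}|-1)\cdot 2r\le 2nr$ follows.

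To bound $w(e^{\star})$, I would invoke the fact that $G(W)$ is a $2$-spanner of $P$ (a consequence of choosing a $12$-WSPD, as noted directly after Algorithm~\ref{alg-1}). Pick $p\in P_{v}$ and $q\in P\setminus P_{v}$ realizing $\norm{p-q}=r$, and take a $p$-to-$q$ path in $G(W)$ of total length at most $2r$. Walking along this path, let $e'$ be the first edge whose endpoints straddle $P_{v}$ (one in $P_{v}$, the other outside); then $w(e')\le 2r$. The remaining claim is $w(e^{\star})\le w(e')$: if instead $w(e')<w(e^{\star})$, then $e'$ is extracted from $Q$ strictly before $e^{\star}$, at which moment a tree containing a point of $P_{v}$ is still disjoint from any tree containing a point of $P\setminus P_{v}$, so processing $e'$ would force a merge that mixes a point of $P_{v}$ with a point of $P\setminus P_{v}$. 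But this contaminates the leaf set of any tree that could later be rooted at $v$ (whose leaves must be exactly $P_{v}$), contradicting the recorded structure of $T_{p}$.

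The step I expect to be the main obstacle is this final contradiction: one must carefully track the invariant that, up to the moment $v$ is created, every merge performed stays strictly inside $P_{v}$ or strictly outside $P_{v}$, and verify that the premature cross-boundary merge forced by $e'$ is incompatible with the later formation of $v$ regardless of whether the non-$P_{v}$ endpoint of $e'$ lies in the sibling block $P_{v'}\setminus P_{v}$ or entirely outside $P_{v'}$. Once that invariant is nailed down, the two inequalities $w(e^{\star})\le w(e')\le 2r$ combine with the summation from the first paragraph to yield $s(v')\le 2nr$.
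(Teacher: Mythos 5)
Your proof is correct and follows essentially the same route as the paper: unroll $s(v')$ into at most $n$ merge-edge weights, bound each by the weight of the edge that creates $v'$ (extraction order from the min-priority queue), and bound that edge by a $P_v$-crossing edge of length at most $2r$ obtained from the $2$-spanner property. The only difference is that you spell out, via the extraction-order/laminarity contradiction, the step the paper states without justification (that the edge creating $v'$ is no longer than the shortest crossing edge), which is a welcome addition rather than a deviation.
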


\begin{proof}
Let $r_G$ be the minimum length of any edge in the graph $G(W)$ connecting an input point in $P_v$ to an input point in $P \setminus P_v$.
Since $G(W)$ is a 2-spanner for $P$, $r_G \leq 2r$.
By {\bf Algorithm \ref{alg-1}},
we know that the parent node $v'$ of $v$ (and $P_{v'}$) is formed by a sequence of no more than $n$ merge operations on the nodes of $T_{p}$.
The last one of these operations extracts an edge connecting some input point in $P_v$ to some input point in $P \setminus P_v$,
whose length is no larger than $r_G$.
Each merge operation contributes to $s(v')$ a value no bigger than $r_G$, since the edge $e$ extracted from the min-priority queue $Q$ by {\bf Algorithm \ref{alg-1}} has a length $w(e)$ no larger than $r_G$.
Hence, $s(v') \leq nr_G \leq 2nr$.
\qed
\end{proof}

\begin{theorem}
\label{the-time}
For any set of $n$ input points in $\mathbb{R}^{d}$ and an influence function $F$ satisfying the three properties in Section \ref{sec-inf}, the AI-Decomposition algorithm yields $O(n \log n)$ type-1 and type-2 cells in $O(n\log n)$ time, where the constants hidden in the big-$O$ notation depend on the error tolerance $\beta$ and $d$.
\end{theorem}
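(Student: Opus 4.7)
My plan is to bound $|T_q| = O(n\log n)$ and the total running time by $O(n\log n)$ in a single stroke. I would first invoke the WSPD construction of Callahan and Kosaraju~\cite{CK95} as a black box to get a distance-tree $T_p$ with $O(n)$ nodes in $O(n\log n)$ preprocessing time. Since every call to Decomposition at a box-node $u$ does $O(|L_u|)$ work (Step 1 and Step 2 iterate over $L$, while Step 4 and Step 5 add only $O(1)$ on top), and every internal box-node spawns at most $2^d + 1$ children, both $|T_q|$ and the total Decomposition time are dominated by $\sum_{u\in T_q} |L_u|$. Swapping the order of summation, I would rewrite $\sum_u |L_u| = \sum_{v\in T_p} N(v)$, where $N(v)$ counts the box-nodes that carry $v$ in their list $L$ when they are processed. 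The theorem then reduces to the estimate $N(v) = O(\log n)$ for every distance-node $v$.

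To establish this per-$v$ estimate, I would first observe that the box-nodes carrying $v$ form a rooted subtree $T_v$ of $T_q$: $v$ enters $L$ exactly when its parent $v'$ is replaced (in Step 1 or Step 4.1.3) at some box-node $u^*$, and $v$ leaves $L$ in an analogous removal or replacement event at some descendant. Because $L$ is passed unchanged to children only in Step 5, which creates $2^d$ children, $T_v$ has branching factor $2^d$ at every internal node, so $|T_v|$ is within a $d$-dependent constant of the number of leaves of $T_v$. Moreover, distinct leaves of $T_v$ form an antichain in $T_q$ (if one were an ancestor of another, the descendant could not still carry $v$), so their box regions have pairwise disjoint interiors. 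It therefore suffices to bound $|\text{leaves}(T_v)| = O(\log n)$.

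I would attack the leaf count via the Packing Lemma (Lemma~\ref{lem-ann}) with center $o_c = l(v)$. The plan has two quantitative pieces. First, every leaf $u_{\text{leaf}}$ satisfies $L(B(u_{\text{leaf}})) \geq c_1\beta\cdot\operatorname{dist}(B(u_{\text{leaf}}), l(v))$ for a constant $c_1 = c_1(d)$: its parent $u_{\text{par}}$ in $T_v$ still carries $v$ and so fails the Step-2 removal test, giving $D(u_{\text{par}}) \geq \tfrac{\beta}{2}\cdot r_{\min}(u_{\text{par}},l(v))$; halving the edge length when descending to $u_{\text{leaf}}$ perturbs this inequality by only a $d$-dependent constant, yielding the desired lower bound. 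Second, the leaves lie inside $B(u^*)$, and $B(u^*)$ must overlap $E(v')$ substantially for $v$ to enter $L$ at $u^*$, which caps both its edge length and its distance to $l(v)$ by $O(s(v')/\beta)$; this gives the outer radius $r_{\text{out}} = O(s(v')/\beta)$. For the inner radius I would invoke Lemma~\ref{lem-wspd} ($s(v') \leq 2nr$ with $r$ the $P_v$ versus $P\setminus P_v$ separation) together with a direct geometric argument that only $O(1)$ leaves can live within distance $O(s(v)/\beta)$ of $l(v)$: such leaves must either contain $l(v)$ (at most one per depth in $T_q$, and the $L\geq c_1\beta\cdot\operatorname{dist}$ bound limits the total to $O(1)$), or be Step-1 leaves whose intersection with $E(v)$ combined with disjointness constrains their count. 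Discarding those $O(1)$ near leaves, the remaining leaves live in an annulus of aspect ratio $r_{\text{out}}/r_{\text{in}} = O(\operatorname{poly}(n))$, and the Packing Lemma yields $O(\log n)$ of them.

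The main obstacle in carrying out this plan will be the geometric bookkeeping near $l(v)$: leaves close to $l(v)$ can interact with $E(v)$ in multiple ways (they may be Step-1 leaves overlapping $E(v)$, Step-2 leaves that happen to sit just outside $E(v)$, or contain $l(v)$ outright), and I need a uniform, $n$-independent ceiling on their number before the Packing Lemma is cleanly applicable to the annular portion. Once this $O(1)$ near-count and the polynomial aspect ratio via Lemma~\ref{lem-wspd} are in hand, $N(v) = O(\log n)$ follows for every $v$; summing over the $O(n)$ distance-nodes of $T_p$ gives $\sum_u |L_u| = O(n\log n)$, simultaneously bounding the number of type-1 and type-2 cells and the total running time of Algorithm~\ref{alg-3} on top of the $O(n\log n)$ preprocessing of Algorithm~\ref{alg-1}.
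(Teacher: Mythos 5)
Your outer architecture is the same as the paper's: charge the work to (box-node, distance-node) reference pairs, reduce to showing each distance-node $v$ is carried by only $O(\log n)$ box-nodes, and prove that per-$v$ bound by applying the Packing Lemma (Lemma \ref{lem-ann}) around $l(v)$, with the outer box coming from $E(v')$ and the polynomial aspect ratio from Lemma \ref{lem-wspd}. (Two small slips in the reduction: because each recursive call owns a copy of $L$, the box-nodes carrying $v$ form a forest of subtrees rather than one rooted subtree, and your edge-length lower bound, derived from the parent's failure of the Step-2 test, does not cover leaves at which $v$ enters and is removed in the same call, e.g.\ after a cascade of Step-1 replacements, or the child created in Step 4; the paper's case analysis with $E'(v')$ and the map $\Phi$ exists precisely to handle these. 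These are repairable.)

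The genuine gap is the inner box. You propose inner radius $O(s(v)/\beta)$ and claim a ``direct geometric argument'' that only $O(1)$ carried leaves lie that close to $l(v)$. First, this radius gives no polynomial aspect ratio: $s(v)$ can be $0$ (any leaf distance-node) or arbitrarily smaller than $s(v')$, so $r_{out}/r_{in}$ is not $\mathrm{poly}(n)$ and Lemma \ref{lem-ann} yields nothing; Lemma \ref{lem-wspd} bounds $s(v')$ by $2nr'$ with $r'$ the $P_v$--$(P\setminus P_v)$ separation, so the inner radius must be polynomially comparable to $r'$, and the paper takes it to be roughly $\beta r'/\mathcal{P}(n)$. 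Second, at that correct scale the scarcity of carried boxes near $l(v)$ is not a consequence of disjointness plus the edge-length bound (which is vacuous for boxes touching $l(v)$, since their distance to $l(v)$ is $0$); it is exactly where the Local Domination property enters. The paper shows that if a counted box were generated by Step 5 inside the inner box, then at its parent every point of $P\setminus P_v$ is already recorded, $r_c$ is a $(1\pm\beta)$-approximation of the true distance to $P\setminus P_v$ by Lemma \ref{lm:distance}, and hence the Step 4.1 test $\frac{r_{min}+D(u')}{r_c} < \frac{\beta}{2\mathcal{P}(n)}$ must fire at the parent, contradicting the Step-5 generation; separate arguments rule out the root and Step-4-generated boxes. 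Without this mechanism nothing stops carried leaves from accumulating at arbitrarily fine scales around $l(v)$ (the number of scales is governed only by the spread of the input, not by $n$), so your $O(1)$ near-count cannot be established by packing alone. You flagged this region as the main obstacle, but the resolution you sketch does not supply the needed idea, which is the crux of the paper's proof of this theorem.
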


\begin{proof}
To prove this theorem, we need to bound only the running time since the total number of cells
cannot be larger than the running time.

We first introduce the following two definitions. A box-node $u$ is called the {\em current box-node} if {\bf Algorithm \ref{alg-2}} is executing on $u$. 
A box-node $u$ is said to {\em refer to} a distance-node $v$ if $v$ ever appears in $L$ while executing Step 1 to Step 3 of {\bf Algorithm \ref{alg-2}} on $u$, and equivalently,
$v$ is called a {\em reference} of $u$. Note that since  $v$ may not be removed from $L$ when  $u$ is the current box-node, it is possible that $u$ and its children or descendants all refer to $v$. 

By {\bf Algorithm \ref{alg-2}}, we know that the execution time on a box-node $u$ (excluding the time taken by the recursive calls) is linear in terms of the number  $ref_{u}$ of its references ({\em i.e.,} the number of distance-nodes in $L$ when $u$ is the current box-node), and therefore the running time of {\bf Algorithm \ref{alg-3}} is linear in the summation of the numbers of references over all box-nodes generated by the algorithm.
This means that to prove the theorem, it is sufficient to count the total number of references, $\sum_{u} ref_{u}$. By the linearity of the summation, we know $\sum_{u} ref_{u} = \sum_{v} refd_{v}$, where $refd_{v}$ is the number of box-nodes which refer to a distance-node $v$ during the entire execution time of {\bf Algorithm \ref{alg-3}}. Thus, if we can prove $refd_{v}=O(\log n)$,
then we immediately have the desired $O(n\log n)$ time bound for the theorem because there are only $O(n)$ distance-nodes in $T_{p}$.
Below, we show that $refd_{v}=O(\log n)$ is indeed true for any distance-node $v$.

To show $refd_{v}=O(\log n)$, we first consider the case that $v$ is the root of $T_{p}$. In this case,  $v$ is referred to only once, by the root of the box-tree $T_{q}$, and the statement is trivially true.
Thus, we assume that $v$ is an arbitrary distance-node other than the root of $T_p$.

To bound $refd_{v}$, we first observe that if a box-node $u$ refers to $v$,
then either all or none of $u$'s children refers to $v$ (the latter case happens if $v$ is removed from $L$ when $u$ is the current box-node). This means that we only need to count those box-nodes $u$ which refer to $v$ and have $v$ 
remove from $L$ when $u$ is the current box-node. The reason is that although we do not count those box-nodes, say $u'$, which do not remove 
$v$ from their $L$ lists when they become the current box-nodes, the number of box-nodes ({\it i.e.}, the $2^{d}$ children of $u'$) which refer to $v$ at the next level of recursion
increases exponentially. This implies that 
the total number of box-nodes which refer to $v$ but are not counted is no bigger than the total number of box-nodes which are counted. Thus, we can safely ignore those $u'$.
Let $U_{v}$ denote the set of box-nodes $u$ which are counted.

We define a mapping $\Phi$ on $U_{v}$. Let $u'$ be the parent of $u$ in $T_q$ (if existing). $\Phi(u)$ is defined as
\[
\Phi(u) =
\begin{cases}
B(u') & \text{if $u$ is generated in Step 4 of {\bf Algorithm \ref{alg-2}} during the processing of $u'$,} \\
B(u) & \text{otherwise.}
\end{cases}
\]
It is not hard to see that for $u_1, u_2 \in U_{v}$ and $u_1 \not= u_2$, 
$\Phi(u_1)$ and $\Phi(u_2)$ are disjoint.
Let $\mathcal{B} = \{\Phi(u) \mid u \in U_{v}\}$.
It is sufficient to show $\abs{\mathcal{B}} = O(\log n)$. Our strategy is to use Lemma \ref{lem-ann} for counting.
To do this, we prove that there exist boxes $B_{out}$ and $B_{in}$ with edge lengths $s_{out}$ and $s_{in}$ respectively and a constant $c_0$ depending only on $d$ and $\beta$ such that
all of the following hold:
\begin{enumerate}
\item
$B_{out}$ and $B_{in}$ are  co-centered at $l(v)$.
\item
Every box in $\mathcal{B}$ intersects $B_{out}$.
\item
No box in $\mathcal{B}$ is contained entirely in $B_{in}$.
\item
$\frac{s_{out}}{s_{in}}$ is bounded by some polynomial of $n$.
\item
For any $B \in \mathcal{B}$,  $s \geq c_{0} r$, where $r$ is the shortest distance between $B$ and $l(v)$, $s$ is the edge length of $B$, and $c_{0}$ is some positive constant depending on $d$ and $\beta$. 
\end{enumerate}
Clearly, if all of the above hold, then by Lemma \ref{lem-ann}, we have $\abs{\mathcal{B}} = O(\log n)$.

Let $r'$ be the minimum distance between a point in $P_v$ and a point in $P\setminus P_{v}$. 
Observe that by the way $T_{p}$ is built and the property of the well-separated pair decomposition, we have $s(v') \leq 2nr'$ (by Lemma \ref{lem-wspd}), where $v'$ is the parent of $v$ in $T_p$.

We first determine $B_{out}$. Let $v'$ be the parent of $v$ in $T_{p}$. Let $s'$ be the edge length of $E(v')$. We choose $s_{out} = 7s'$, and claim that for every box-node $u$ that 
refers to $v$, $B(u)$ is fully contained inside $B_{out}$.
Let $u'$ be the parent of $u$ such that either $v'$ is removed from $L$ in Step 1 of {\bf Algorithm \ref{alg-2}} when processing $u'$ or 
$u$ is created in Step 4 of {\bf Algorithm \ref{alg-2}} (where $v'$ is also removed from $L$).
Note that $u'$ must exist since these are the only two ways for $v$ to appear in $L$.
If $v'$ is removed from $L$ in Step 1, then we know that $B(u')$ intersects $E(v')$ and has at most twice the edge length of $E(v')$.
Therefore, $B(u')$ is contained entirely inside $B_{out}'$, where $B_{out}'$ is the box centered at $l(v')$ and with an edge length $5s'$.
If $v'$ is removed from $L$ in Step 4, then we know that $B(u)$ intersects $E(v')$ and has an edge length no bigger than that of $E(v')$.  
This means that $B(u)$ is contained inside $B_{out}'$ as defined above.
Thus, in either case, $B(u)$ is fully contained in $B_{out}'$.
Since  $\norm{l(v) - l(v')} \leq s(v') \leq \beta s'/8 \leq s'$, $B_{out}'$ is completely inside $B_{out}$.
Thus, the above claim is true. 

Based on this claim, it is clear that every box in $\mathcal{B}$ intersects $B_{out}$, whose edge length is 
$s_{out}= 7s' = \frac{56s(v')}{\beta} \leq \frac{112n}{\beta}r'$.

Let $\beta_0=\frac{2(1 + \beta)\mathcal{P}(n)}{\beta}$. We choose $s_{in}=\frac{r'}{6\sqrt{d}(1 + \beta_0)}$,
and claim that for every $u$ that refers to $v$, $\Phi(u)$ cannot be completely inside $B_{in}$.
Suppose this is not the case, and there exists such a box-node $u$ whose $\Phi(u)$ is fully contained inside $B_{in}$. 

First of all, it is easy to see that such a box-node $u$ cannot be the root of the box-tree $T_{q}$, since otherwise, $B(u)$ should be contained inside $B_{in}$. 
(Note that in this case, $\Phi(u)=B(u)$.) 
But this cannot be true, as $B(u)$ contains 
all input points and its size is obviously larger than that of $B_{in}$.

Next, we show that such a box-node $u$ (i.e., whose $\Phi(u)$ is inside $B_{in}$) is not generated in Step 4 of {\bf Algorithm \ref{alg-2}} when processing $u$'s parent $u'$ in $T_{q}$.
Suppose, for contradiction, $u$ is generated in Step 4. Let $v'$ be the parent of $v$ in $T_{p}$. Then $E(v')$ does not fully contain $B(u')$, since
otherwise $v'$ would have been deleted from $L$ in Step 1 of {\bf Algorithm \ref{alg-2}}, instead of Step 4, when processing $u'$. Note that since $v'$ contains
at least one input point that is not in $P_v$, the diameter of $E(v')$ must be greater than $r'$. This means that $E(v')$ is at least 6 times larger
than $B_{in}$ in edge length. The distance between $l(v)$ and $l(v')$ ({\em i.e.,} the centers of $B_{in}$ and $E(v')$, respectively) 
satisfies the inequalities $\norm{l(v) - l(v')} \leq s(v') \leq R\frac{\beta}{8} \leq \frac{R}{16}$, where
$R$ is the edge length of $E(v')$. This means that $B_{in}$ is fully contained in $E(v')$,
and therefore cannot contain $B(u')$, which is $\Phi(u)$. This is a contradiction, and thus $u$ cannot be generated in Step 4.

Finally, we show that $u$ cannot be generated in Step 5 of {\bf Algorithm \ref{alg-2}}. Suppose $u$ is generated in Step 5 by a quad-tree decomposition on $B(u')$, where $u'$ is the parent 
of $u$ in $T_{q}$. Since  
$B(u)=\Phi(u)$ is contained in $B_{in}$ (by assumption), we know that $B(u')$, which contains $B(u)$ and has an edge length twice of that of $B(u)$, must be contained in a box $B_{in}'$ centered at $l(v)$ and with an edge length
$\frac{r'}{2\sqrt{d}(1 + \beta_0)}$.
This means $D(u') \leq \frac{r'}{2(1 + \beta_0)}$, where $D(u')$ is the diameter of $B(u')$. Let $r''$ be the distance between $l(v)$ and $B(u')$.  Then, by the fact that $B_{in}'$ contains $B(u')$, we have $r'' \leq \frac{r'}{2\sqrt{d}(1 + \beta_0)}$.
Combining the above two inequalities, we get $r'' + D(u') \leq \frac{r'}{(1 + \beta_0)}$. 
For any point $p \in P\setminus P_{v}$, let $r_{p}$ be the distance between $p$ and $B(u')$, and $q'$ be the closest point on $B(u')$ to $p$. Then by the triangle inequality, we know that the distance $\norm{l(v)-q'}$ between $l(v)$ and $q'$ is no larger than $r''+D(u')$. Thus, we have
$\norm{l(v)-q'} \leq \frac{r'}{(1 + \beta_0)}$. Also, by the definition of $r'$, we know that the distance $\norm{p-l(v)}$ between $p$ and $l(v)$ is no smaller than $r'$.
By the triangle inequality (in the triangle $\Delta l(v)pq'$), we know $r_{p} =\norm{p-q'}\geq \norm{p-l(v)}-\norm{l(v)-q'}\geq r'-\frac{r'}{(1 + \beta_0)}=
\frac{\beta_0 r'}{(1 + \beta_0)}$.
Therefore, we have $\frac{r'' + D(u')}{r_p} \leq \frac{1}{\beta_0} = \frac{\beta}{2(1 + \beta)\mathcal{P}(n)}$.
This implies $\frac{D(u')}{r_p} \leq \frac{1}{\beta_0} \leq \frac{\beta}{2}$. 
Since the above inequality holds for every point in $P \setminus P_{v}$, this indicates that every such point must be recorded for $u'$ (see  Step 2 of {\bf Algorithm \ref{alg-2}}).  
By {\bf Algorithm \ref{alg-2}}, we know that $r_c$ stores the minimum recorded distance.
Also, note that a point in $P$ is recorded for $u'$ if and only if it is in $P \setminus P_{v}$.
Therefore, some $p \in P \setminus P_{v}$ gives rise to the recorded distance $r_c$.
By Lemma \ref{lm:distance}, we know $r_p \leq (1 + \beta)r_c$.
Thus, we have $\frac{r'' + D(u')}{r_c} \leq \frac{\beta}{2\mathcal{P}(n)}$.
Since each point $p \in P \setminus P_{v}$ is recorded for $u'$ and  $v$ is referred to by $u$ ($u$ is a child of $u'$), it must be the case that after finishing Step 2 of {\bf Algorithm \ref{alg-2}}
in the recursion for $u'$, $v$ is the only distance-node in $L$. Then, by the fact of $\frac{r'' + D(u')}{r_c} \leq \frac{\beta}{2\mathcal{P}(n)}$, we know that $u'$ will be processed in Step 4, which includes the generation of the node $u$, instead of Step 5.
This is a contradiction.

Summarizing the above three cases, we know that every box in $\mathcal{B}$ is not fully contained in $B_{in}$.

 From the above discussion, we know that the edge lengths of $B_{out}$ and $B_{in}$ satisfy the following inequality
\[
	\frac{s_{out}}{s_{in}} \leq \frac{672 \sqrt{d} n (1 + \frac{2(1 + \beta)\mathcal{P}(n)}{\beta})}{\beta}.
\]
This means that the ratio of $\frac{s_{out}}{s_{in}}$ is bounded by a polynomial of $n$.

The only remaining issue now is to show that for any $u \in U_{v}$, the edge length $s$ of $\Phi(u)$ and the distance $r$ 
between $\Phi(u)$ and $l(v)$ satisfy the relation of $s \ge c_0 r$ for some constant $c_0>0$.
Note that such a relation is trivially true for any $c_0$ if $u$ is the root of $T_{q}$, 
since in this case $B(u)=\Phi(u)$ contains all input points and the distance $r$ is $0$
({\em i.e.}, the distance of $B(u)$ to $l(v)$ is $0$).  Hence, we assume below that $u$ is not the root of $T_q$ and has a parent  $u'$ in $T_{q}$.

For any box-node $u_{0} \in T_{q}$ and any distance-node $v_{0} \in T_{p}$, let 
$r(u_0,v_0)$ be the shortest distance between $B(u_0)$ and $l(v_0)$.  We consider two possible cases. 
\begin{enumerate}
\item  $u$ is generated in Step 4 when processing $u'$. In this case, $\Phi(u) = B(u')$. Let $v'$ be the parent of $v$ in $T_{p}$.
We consider two possible sub-cases, depending on whether $E'(v')$ intersects $B(u')$ (see {\bf Algorithm \ref{alg-1}} for the definition of $E'(v')$).
\begin{enumerate}
\item $E'(v')$ intersects $B(u')$. In this sub-case, since $v'$ is not removed from $L$ in Step 1 when processing
$u'$,  some part of $B(u')$ must be outsides $E(v')$. ($E'(v')$ is co-centered at $l(v')$ with $E(v')$ and is of half the edge length of $E(v')$. If $B(u')$ is fully inside $E(v')$, then an edge length of $B(u')\cap E(v')$ will be larger than half
the edge length of $B(u')$, and hence $v'$ will be removed from $L$ in Step 1.)  This means that the edge length of $B(u')$ is at least half the edge length
of $E'(v')$, which is $\frac{2s(v')}{\beta}$. Thus, the diameter $D(u')$ of $B(u')$ exceeds $\frac{2\sqrt{d}s(v')}{\beta}$. 
Furthermore, since $E'(v')$ intersects $B(u')$, we have $r(u',v') \leq \frac{2\sqrt{d}s(v')}{\beta}$ (by the definition of $r(u',v')$ and the size of $E'(v')$).  Also, since $P_{v'}$ contains both $l(v)$ and $l(v')$, the distance between $l(v)$ and $l(v')$ is upper-bounded by the diameter $s(v')$ of $P_{v'}$, i.e.,  $\norm{l(v) - l(v')} \leq s(v')$. Thus, we have 
$r(u',v) \leq \norm{l(v) - l(v')} + r(u',v') \leq s(v') + \frac{2\sqrt{d}s(v')}{\beta}
\leq \frac{4\sqrt{d}s(v')}{\beta}$. Therefore, we have $edgeLength(B(u')) \ge c_0 r(u',v)$ if we choose 
$c_0 \leq \frac{1}{2\sqrt{d}}$.

\item $E'(v')$ does not intersect $B(u')$. In this sub-case, we have $r(u',v') \geq \frac{2s(v')}{\beta}$ (by the fact that $E'(v')$ is centered at $l(v')$ and with an edge length of $\frac{4s(v')}{\beta}$).  Since $v'$ is not removed from $L$ in Step 2 when processing $u'$,  the diameter $D(u')$ of $B(u')$ must exceed
$r(u',v')\frac{\beta}{2}$.
Note that $s(v') \leq \frac{2s(v')}{\beta}$,
and thus $s(v') \leq r(u',v')$.
Then $r(u',v) \leq \norm{l(v) - l(v')} + r(u',v') \leq 2r(u',v')$.
This means that the diameter $D(u')$ of $B(u')$ exceeds
$r(u',v')\frac{\beta}{2} \geq r(u',v)\frac{\beta}{4}$.
From this, we immediately know $edgeLength(B(u')) \ge c_0 r(u',v)$ if $c_0 \leq \frac{\beta}{4\sqrt{d}}$.

\end{enumerate}

\item $u$ is generated in Step 5 when processing $u'$. In this case, $\Phi(u) = B(u)$. Let $v'$ be the distance-node in $L$ when processing $u'$ which is either an ancestor of $v$ in $T_p$ or $v$ itself.
For this case, we also consider two possible sub-cases, depending on whether $E'(v')$ intersects $B(u')$. 
\begin{enumerate}
\item $E'(v')$ intersects $B(u')$. In this sub-case, by exactly the same argument given above for Case 1(a), we know that 
the diameter $D(u')$ of $B(u')$ is at least $\frac{r(u',v)}{2}$.
Then, $ r(u,v) \leq D(u') + r(u',v) \leq 3D(u') $. Also, note that
$D(u') = 2D(u)$. Thus, $D(u) \geq \frac{r(u,v)}{6}$. In this sub-case, we can choose $c_0 \leq \frac{1}{6\sqrt{d}}$.

\item $E'(v')$ does not intersect $B(u')$.
By the same argument given above for Case 1(b), we know
$D(u') \geq r(u',v)\frac{\beta}{4}$.
Thus, $r(u,v) \leq D(u') + r(u',v) \leq \frac{4 + \beta}{\beta}D(u')$.
Since $D(u') = 2D(u)$, we have $D(u) \geq \frac{\beta r(u,v)}{8 + 2\beta}$.
This means that we can choose $c_0 \leq \frac{\beta}{(8 + 2\beta)\sqrt{d}}$.
\end{enumerate}
\end{enumerate}

Based on the above discussion, we know that if we choose $c_0$ as the minimum of the four possible choices, we have the desired bound $s \ge c_0 r$ for the edge length $s$ of each box in $\mathcal{B}$.
This means that the theorem then follows from Lemma \ref{lem-ann}.
\qed
\end{proof}

\section{Vector CIVD}
\label{sec-ex1}

In this section, we show that the AI decomposition can be combined with an assignment algorithm to compute a $(1-\epsilon)$-approximate CIVD for  the vector CIVD problem.
We first give the problem description and show that its influence function satisfies the three
properties given in Section \ref{sec-inf}. We then present our assignment algorithm. An overview of the 
assignment algorithm is given in Section \ref{sec-vecov}.

\subsection{Problem Description and Properties of the Influence Function}

Let $P$ be a set of $n$ points in $\mathbb{R}^d$ and $F$ be the influence function. For each point $p \in P$ and a query point $q$ in $\mathbb{R}^d$, the influence
 $F(\{p\},q)$ is a vector in the direction of $p-q$ (or $q-p$) and with a magnitude of $\norm{p-q}^{-t}$ for some constant $t \ge1$.
Such a vector may represent force-like influence between objects, such as the
gravity force between planets and stars (with $t=d-1$) or electric force between physical bodies like electrons and protons (with $t=2$).
For a cluster site $C$ of $P$, the influence  from $C$ to a query point $q$ is the vector sum of the individual influence  from each point of $C$ to $q$, {\em i.e.,} $F(C,q)=\sum_{p \in C} F(\{p\},q)$.
Note that for ease of discussion, in the remaining of this section, we also use $F(C,q)$ to denote the magnitude of the influence  ({\em i.e.,} $F(C,q) = \norm{F(C,q)} = \norm{\sum_{p \in C} F(\{p\},q)}$) when there is no ambiguity about its direction. 
The vector CIVD problem is to partition the
space $\mathbb{R}^d$ into Voronoi cells such that each cell is the union of all points whose maximum influence comes from the same cluster site of $P$ (see Fig.~\ref{fig-exact} for examples of the exact vector CIVD in $\mathbb{R}^{2}$).

\begin{figure}
\center
\includegraphics[height=3.3in]{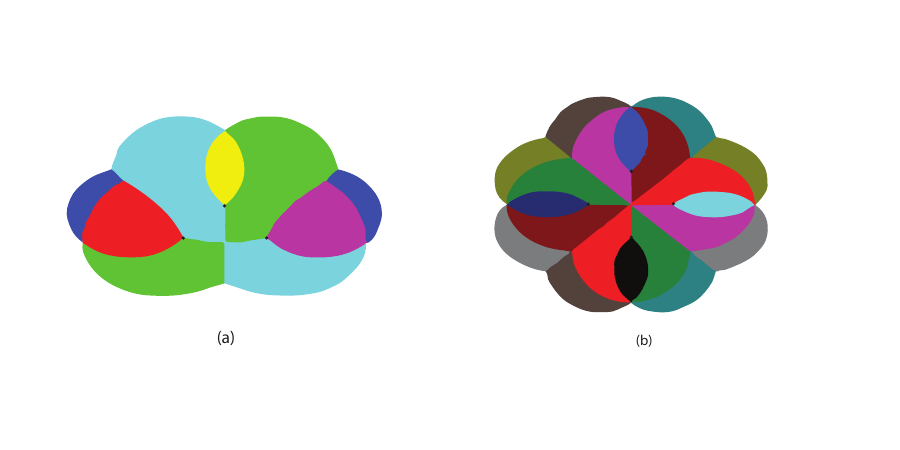}
\caption{Examples of the exact vector CIVD with $t=2$ in $\mathbb{R}^{2}$, where the regions with the same color form a Voronoi cell:
(a) The vector CIVD of 3 input points; (b) the vector CIVD of 4 input points.}
\label{fig-exact}
\end{figure}

Our objective for vector CIVD is to obtain a $(1-\epsilon)$-approximate CIVD, in which each cell $c$ is associated with a cluster site whose influence to every point $q\in c$ is no smaller than $(1-\epsilon)F_{max}(q)$. 
To  make use of the AI decomposition, we first show that the vector CIVD problem satisfies the three properties in Section \ref{sec-inf}.

\begin{theorem}
The vector CIVD problem satisfies the three properties in Section \ref{sec-inf} for any constant $t \geq 1$.
\end{theorem}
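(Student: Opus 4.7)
The plan is to verify each of the three properties (similarity invariance, local domination, and locality) in turn for the vector influence function, with the locality property being the main technical challenge.

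For Property \ref{pro-3} (similarity invariance), the argument is essentially a linearity check. Under a scaling about $q$ by factor $s$, each vector $p - q$ becomes $s(p - q)$, so each individual influence $F(\{p\},q)=\|p-q\|^{-t}\hat{u}_{p}$ has its magnitude scaled by $s^{-t}$ while its direction is preserved; under a rotation about $q$, every individual vector rotates by the same angle with its magnitude preserved. Since the joint influence is linear in the individual vectors, the entire sum $F(C,q)$ transforms in the same uniform way, so the magnitude ratio $\|F(\phi(C),q)\|/\|F(C,q)\|$ depends only on $\phi$ (equal to $s^{-t}$ for a scaling by $s$, and $1$ for any rotation).

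For Property \ref{pro-2} (local domination), the plan is to exploit the polynomial decay of the pointwise magnitude. Assume $p \in P'$ with $\mathcal{P}(n)\|q-p\| < \epsilon\|q-p'\|$ for every $p' \in P\setminus P'$. Each singleton influence from $P\setminus P'$ has magnitude at most $(\epsilon/\mathcal{P}(n))^{t}\|q-p\|^{-t}$, which is bounded by $(\epsilon/\mathcal{P}(n))\|q-p\|^{-t}$ since $t\ge 1$ and $\epsilon/\mathcal{P}(n)\le 1$. Summing over at most $n$ such points and applying the triangle inequality, the contribution of $C_{m}(P,q) \cap (P\setminus P')$ to $F(C_{m}(P,q),q)$ has magnitude at most $(n\epsilon/\mathcal{P}(n))\,\|q-p\|^{-t}$. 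Combining this with the bound $F_{max}'(q)\ge F(\{p\},q)=\|q-p\|^{-t}$, one obtains $F_{max}(q)\le \bigl(1 + n\epsilon/\mathcal{P}(n)\bigr)F_{max}'(q)$, so choosing $\mathcal{P}(n)=cn$ for a sufficiently large constant $c$ yields the desired strict inequality $F_{max}'(q)>(1-\epsilon)F_{max}(q)$.

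The main obstacle is Property \ref{pro-1} (locality). Given a maximal pair $(C,q)$ and an $\epsilon$-perturbation $\psi$ of $C$, I first aim for a per-point estimate: writing $v_{p}=F(\{p\},q)$ and $v_{p'}=F(\{\psi(p)\},q)$, the distance $\|\psi(p)-q\|$ lies in $[(1-\epsilon)\|p-q\|,\,(1+\epsilon)\|p-q\|]$ and the sine of the angle between $p-q$ and $\psi(p)-q$ is at most $\epsilon/(1-\epsilon)$; decomposing $v_{p'}-v_{p}$ into components parallel and perpendicular to $v_{p}$ yields $\|v_{p'}-v_{p}\|\le c_{1}(t)\,\epsilon\,\|v_{p}\|$ for small $\epsilon$. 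The hard part is then controlling $\sum_{p\in C}\|v_{p}\|$ in terms of $\|F(C,q)\|$, because the per-point bound alone could be arbitrarily large relative to $\|F(C,q)\|$ when the $v_{p}$'s nearly cancel. Here the maximal pair hypothesis is essential: for each $p\in C$, expanding $\|F(C\setminus\{p\},q)\|^{2}\le\|F(C,q)\|^{2}$ gives $v_{p}\cdot F(C,q)\ge \tfrac{1}{2}\|v_{p}\|^{2}$, and summing over $p$ yields $\sum_{p}\|v_{p}\|^{2}\le 2\|F(C,q)\|^{2}$, after which Cauchy--Schwarz produces $\sum_{p}\|v_{p}\|\le\sqrt{2|C|}\,\|F(C,q)\|$. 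Projecting $F(C',q)$ onto the unit vector along $F(C,q)$ and combining with the per-point estimate then shows $\bigl|\,\|F(C',q)\|-\|F(C,q)\|\,\bigr|\le c_{2}(t)\,\epsilon\sqrt{n}\,\|F(C,q)\|$, giving $(\delta,\gamma)$-stability at every maximal pair with $\delta(\epsilon)=c_{2}(t)\sqrt{n}\,\epsilon$ and $\gamma$ chosen small enough that $\delta(\gamma)<1$ and $\lim_{x\to 0}\delta(x)=0$, as required by the property.
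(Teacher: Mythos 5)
Your treatments of Property \ref{pro-3} and Property \ref{pro-2} are fine (for the latter you choose $\mathcal{P}(n)=cn$ where the paper uses $n^{1/t}$, but the decay-plus-triangle-inequality argument is the same in spirit and correct). The genuine gap is in Property \ref{pro-1}. Your bound on $\sum_{p\in C}\norm{v_p}$ uses maximality only against the singleton-removal subsets $C\setminus\{p\}$, which yields $\sum_p\norm{v_p}\le\sqrt{2|C|}\,\norm{F(C,q)}$ and hence a stability function $\delta(\epsilon)=c_2(t)\sqrt{n}\,\epsilon$. That is not the locality property as stated or as used: Property \ref{pro-1} requires a single continuous monotone $\delta$ and a \emph{constant} $\gamma\in(0,1)$ with $\delta(\gamma)<1$, independent of $n$, whereas your $\gamma$ must shrink like $1/\sqrt{n}$. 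This is not a cosmetic issue, because downstream the paper sets the AI-decomposition tolerance $\beta=\Delta^{-1}(\epsilon)$ (with $\Delta$ built from $\delta$ in Lemma \ref{lem-app}) and the cell count and running time in Theorem \ref{the-time} hide constants depending on $\beta$ and $d$ through the packing lemma; with $\beta\sim\epsilon/\sqrt{n}$ those ``constants'' become polynomial in $n$ and the nearly linear bounds, and hence the $O(n\log^{\max\{3,d+1\}}n)$ result for vector CIVD, no longer follow.

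The missing idea is to exploit maximality against larger subsets of $C$, not just $C\setminus\{p\}$. The paper's proof writes each influence vector $b_i=v_{p_i}$ in coordinates $b_i=\sum_j c_{ij}\epsilon_j$ and, for each coordinate $j$, splits $C$ into the points with $c_{ij}>0$ and those with $c_{ij}\le 0$. If the absolute sum of either group's $j$-th components exceeded $K=\norm{F(C,q)}$, that group would be a subset of $C$ with influence magnitude larger than $K$, contradicting maximality; hence $\sum_i\abs{c_{ij}}\le 2K$ for every $j$, and summing over the $d$ coordinates gives $\sum_i\norm{b_i}\le 2dK$. Combined with your (correct) per-point perturbation estimate $\norm{v_{p'}-v_p}\le c_1(t)\,\epsilon\,\norm{v_p}$, this gives $\delta(\epsilon)=O(d)\bigl(1-(1-\epsilon)^t\bigr)$, which depends only on $d$ and $t$ and admits a constant $\gamma$, as the property and the subsequent complexity analysis require.
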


\begin{proof}
We first prove Property \ref{pro-1}. Consider a set of vectors in $\mathbb{R}^d$, $A = \{a_{1}, a_{2}, \ldots, a_{m}\}$,
such that $(\{a_{1}+q, a_{2}+q, \ldots, a_{m}+q\}, q)$ is a maximal pair for some $q$.  Note that $p_{i}=a_{i}+q$ is a point in $P$ and $C=\{a_{1}+q, a_{2}+q, \ldots, a_{m}+q\}$ is a cluster site of $P$.
Let $b_{i}$ be the vector that has the same direction as $a_{i}$ and a length $\norm{a_{i}}^{-t}$ ({\em i.e.}, $b_{i}$ is the influence from $p_{i}$ to $q$).
Let $B = \{b_{1},b_{2},\ldots, b_{m}\}$.
We assume that $\norm{\sum_{i=1}^m b_{i}} = K$. Let $\epsilon_{i} =(\underbrace{0, \ldots,0, 1}_{i}, 0, \ldots, 0), i = 1,2, \ldots,d$, be a standard basis of the $\mathbb{R}^d$ space.  Then
each $b_{i}$ can be written as a linear combination of the basis, $c_{i1}\epsilon_{1}+c_{i2}\epsilon_{2}+\cdots+c_{id}\epsilon_{d}$.

We claim that for every $j$, $\sum_{i=1}^{m} c_{ij} \leq 2K$. To prove this claim by contradiction, we assume that there exists some $j$ such that $\sum_{i=1}^{m} c_{ij} > 2K$. Consider two subsets $B_{+}$ and $B_{-}$ of $B$, where $B_{+} = \{b_{i} \mid c_{ij}>0\}$ and $B_{-} = B\setminus B_{+}$. Since $\sum_{i=1}^{m} c_{ij} > 2K$, we have
$\sum_{i:b_{i}\in B_{+}}c_{ij}+\sum_{i:b_{i}\in B_{-}}c_{ij}>2K$. This means that $\abs{\sum_{i:b_{i}\in B_{+}}c_{ij}}>K$ or $\abs{\sum_{i:b_{i}\in B_{-}}c_{ij}}>K$. Without loss of generality,  we assume that the latter case occurs.
Then, we have $\norm{\sum_{i:b_{i}\in B_{-}}b_{i}}>K$. This implies that the influence from the subset $\{p_{i} \ | \ b_{i} \in B_{-}\}$ to $q$ is larger than $K$. But this contradicts with
the fact that $(C,q)$ is a maximal pair.  

Therefore, we have  $\sum \norm{b_{i}} \leq \sum_{i}\sum_{j} \norm{c_{ij}} \leq 2dK$. If we change every $b_{i}$ by adding a vector with a length not larger than $\epsilon' \norm{b_{i}}$ for some small constant $\epsilon'>0$, then the total change will be no larger than $\epsilon' \sum\norm{b_{i}} \leq 2\epsilon' dK$.

Now consider what happens if we change each $a_{i}$ by adding a vector with a length smaller than $\epsilon\norm{a_{i}}$. It can be verified that, with a
sufficiently small $\epsilon$, the corresponding $b_{i}$ will be changed by no more than $O(1)(1-(1-\epsilon)^t)\norm{b_{i}}$, and therefore the sum of $b_{i}$ will change by no more than $O(1)2d(1-(1-\epsilon)^t)K$. This proves Property \ref{pro-1}.

To prove Property \ref{pro-2}, consider a point $q \in \mathbb{R}^d$ and
a subset of input points
$P'$ such that there exists $a \in P'$ satisfying the inequality $n^{\frac{1}{t}}\cdot\norm{q-a} < \epsilon\norm{q-a'}$ for every $a' \in P \setminus P'$,
where $0 < \epsilon < 1$ is a small constant and $n$ is the number of input points.
Then, we have $\norm{q - a'}^{-t} \leq \epsilon^t\cdot\norm{q-a}^{-t}/n \leq \epsilon F(\{a\},q)/n$ for every $a' \in P \setminus P'$. 
Since $\abs{P \setminus P'} \leq n$ and the maximum influence $F_{max}(q)$ of $q$  is clearly no smaller
than $F(\{a\},q)$, we immediately know that
the maximum  influence from any subset of $P'$
is smaller than $F_{max}(q)$ by at most $\sum_{a' \in P \setminus P'} \norm{q-a'}^{-t} \leq \epsilon F(\{a\},q)$, and is therefore no smaller than $(1-\epsilon)F_{max}(q)$.

Property \ref{pro-3} is obvious since after a rotation about $q$ or a scaling, for any point $p \in P$, $F(\{p\},q)$ is changed by a factor that depends only on the rotation or scaling itself.\qed
\end{proof}

The above theorem implies that the AI decomposition can be applied to the vector CIVD problem.
We assume that  $\beta$ in the AI decomposition is set to $\Delta^{-1}(\epsilon)$,
where $\epsilon$ is the error tolerance in the vector CIVD  and $\Delta$ is the error estimation function for the problem.

\subsection{Overview of the Assignment Algorithm}
\label{sec-vecov}

As discussed in Section \ref{sec-AI}, the AI decomposition only gives a space partition; an assignment algorithm is still needed to determine an appropriate cluster site for each Voronoi cell.
By Theorem \ref{the-aid} and {\bf Algorithm \ref{alg-2}}, we know that  each type-1 cell is dominated by a distance-node $v$, and $P_{v}$ (or a subset of $P_{v}$) is its approximate maximum influence site. Thus, we only need to consider those type-2 cells.
By Theorem \ref{the-aid}, we know that to determine an approximate maximum influence site for a type-2 cell $c$, it is 
sufficient to pick an arbitrary point $q \in c$ and find a cluster site which gives $q$ the maximum influence.

To assign a cluster site to a query point $q$ in a type-2 cell, our main idea is to transform the assignment problem to an {\em optimal hyperplane partition} (OHP)
problem, which uses a hyperplane passing through $q$ to partition the input points so as to identify the maximum influence site of $q$. Optimally solving the OHP problem in a straightforward
manner takes $O(n^{d})$ time. To improve the running time, our idea is to significantly reduce the number of input points involved
in the OHP problem. Our main strategy for reducing the number of input points involved is to perturb the aggregated input points so that
each aggregated point cluster is mapped to a single point. Also, those input points that are far away from $q$ and have little influence on $q$ are ignored. In this way, we can reduce the number of input points  from $n$ to $O(\log n)$. 
A quad-tree decomposition based {\em aggregation-tree} $T$  is built to help identify those point clusters that can be perturbed.
The to-be-perturbed point clusters  form an {\em effective cover} in the aggregation-tree $T$. Straightforwardly computing the effective cover takes $O(n)$ time. To improve the time bound, we first present a slow method called {\em SlowFind} to shed some light on how to speed up the computation. 
The main obstacle is how to avoid recursively searching on a {\em long path} (with a possible length of $O(n)$) in the aggregation tree. To overcome this long-path difficulty, we use a number of techniques, such as the
majority path decomposition, to build some auxiliary data structures for $T$ so that we can perform binary search on such long path and therefore speed up the computation from $O(n)$ time
to $O(\log^{2} n)$. Combining this with a key fact that the effective cover has a size of $O(\log n)$, we obtain an assignment algorithm which assigns a $(1-\epsilon)$-approximate maximum influence site to any type-2 cell in $O(\log^{\max\{2,d\}}n)$ time.

\subsection{Assignment Algorithm}

To develop the assignment algorithm, we first give the following key observation.

\begin{observation}
\label{obs-h}
In the vector CIVD problem, if a subset $C$ of $P$ is the maximum influence site of a query point $q$, then there exists a hyperplane $H$ passing through $q$ such that all points of $C$ lie on one side of $H$ and all points of $P \setminus C$ lie on the other side of $H$.
\end{observation}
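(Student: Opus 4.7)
My plan is to use a local exchange argument together with the fact that the magnitude of a vector sum depends monotonically on whether a new summand has positive or negative inner product with the existing sum.

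First, I would set $\vec{v} = F(C,q) = \sum_{p \in C} F(\{p\},q)$ and take $H$ to be the hyperplane through $q$ whose normal is $\vec{v}$. Since, for any $p \neq q$, the vector $F(\{p\},q)$ is collinear with $p-q$ (pointing either from $q$ to $p$ or from $p$ to $q$), the signs of $F(\{p\},q)\cdot \vec{v}$ and $(p-q)\cdot \vec{v}$ agree up to a common global sign. So the side of $H$ a point $p$ lies on is determined by the sign of $F(\{p\},q)\cdot \vec{v}$.

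Next I would use the identity
\[
\norm{\vec{v}\pm F(\{p\},q)}^{2} \;=\; \norm{\vec{v}}^{2} \;\pm\; 2\,\vec{v}\cdot F(\{p\},q) \;+\; \norm{F(\{p\},q)}^{2}.
\]
From this, two exchange arguments pin down the configuration. If some $p\in C$ satisfied $\vec{v}\cdot F(\{p\},q)\le 0$, then removing $p$ from $C$ would yield $\norm{\vec{v}-F(\{p\},q)}^{2}\ge \norm{\vec{v}}^{2}+\norm{F(\{p\},q)}^{2}>\norm{\vec{v}}^{2}$, contradicting maximality. Symmetrically, if some $p\in P\setminus C$ satisfied $\vec{v}\cdot F(\{p\},q)>0$, then $C\cup\{p\}$ would strictly increase the magnitude. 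Hence every point of $C$ lies on the open half-space $\{x:(x-q)\cdot \vec{v}>0\}$ (modulo the sign convention) and every point of $P\setminus C$ lies on the closed complementary half-space.

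Finally, to get a clean strict separation as the observation states, I would handle the degenerate case in which some $p\in P\setminus C$ happens to lie on $H$ itself by infinitesimally rotating $H$ about $q$. Because $P$ is finite, there is an open cone of normals near $\vec{v}$ that still strictly separates the (already strictly separated) points of $C$ from those of $P\setminus C$, and within that cone we can choose one that places any boundary points of $P\setminus C$ strictly on the $P\setminus C$ side. I do not expect any serious obstacle: the whole argument reduces to the one-line inequality above. The only subtlety is keeping the direction convention for $F(\{p\},q)$ consistent so that the sign of the inner product corresponds to the correct side of $H$, and handling the measure-zero case of input points lying exactly on $H$ by a small perturbation.
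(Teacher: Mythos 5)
Your proof is correct and takes essentially the same route as the paper: choose $H$ through $q$ perpendicular to the vector $F(C,q)$ and run the add/remove exchange argument showing any misplaced point would strictly increase the influence magnitude, contradicting maximality of $C$. Your explicit algebraic identity and your perturbation handling of the degenerate case where a point of $P\setminus C$ lies exactly on $H$ are small refinements of details the paper's proof leaves implicit, but the core argument is the same.
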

\begin{proof}
Consider the hyperplane $H$ that passes through $q$ and is perpendicular to the influence (vector) $F(C,q)$ from $C$ to $q$. If there is an input point $p \not\in C$ that lies on the same side of $H$ as $C$ (which is the side of $H$ pointed by $F(C,q)$), then adding $p$ to $C$ will only increase the magnitude of the influence. If there is an input point of $C$ lying on the side of $H$ opposite to the influence's direction, then deleting this point from $C$ will only increase the magnitude of the influence. Thus the observation is true.
 \qed
\end{proof}

The above observation suggests that to find $C_{m}(P,q)$ for a query point $q$, we can try all possible partitions of $P$ by using hyperplanes passing through $q$ and pick the best partition. We call this problem the {\em optimal hyperplane partition} (OHP) problem.  Since there are $n$ input points, we may need to consider a total of $O(n^d)$ such hyperplanes in order to optimally solve the problem.  Thus straightforwardly solving this problem could be too costly. 
To obtain a faster solution, our idea is to treat those aggregating points as a single point so as to reduce the total number of points that need to be considered for the sought hyperplane.

To implement this idea, we first build a tree structure $T$ called {\it aggregation-tree}, in which each node is associated with a set of input points. {\bf Algorithm \ref{alg-4}} below generates the aggregation-tree $T$.(See Also \textbf{Figure} \ref{fig-treebuild}.)

\begin{algorithm}
\caption{Tree-Build$(v, R(v))$}
\textbf{Input:} A node $v$ of the aggregation-tree $T$, together with the bounding box $R(v)$ of
its associated input points.\\
\textbf{Output:} A subtree of $T$ rooted at $v$.
\label{alg-4}
\begin{algorithmic}[1]

\State{If $v$ contains only one input point, return.}

\State{Quad-tree decompose $R(v)$ into $2^{d}$ smaller boxes $R'(\cdot)$.}

\State{Create nodes $v_1,v_2,\ldots,v_l$ as the children of $v$ in $T$, each child corresponding to a smaller box $R'(v_{i})$ containing at least one input point of $v$.}

\State{For each $1 \leq i \leq l$, let $R(v_i)$ be the smallest hypercube box containing all points in $v_i$, $S(v_{i})$ be the edge length of $R(v_{i})$, and $L(v_{i})$ be a representative point of $v_{i}$.}

\State{For each $i$, call Tree-Build$(v_i, R(v_{i}))$.}
\end{algorithmic}
\label{algo-build}
\end{algorithm}

\begin{figure}
\center
\includegraphics[height=2.0in]{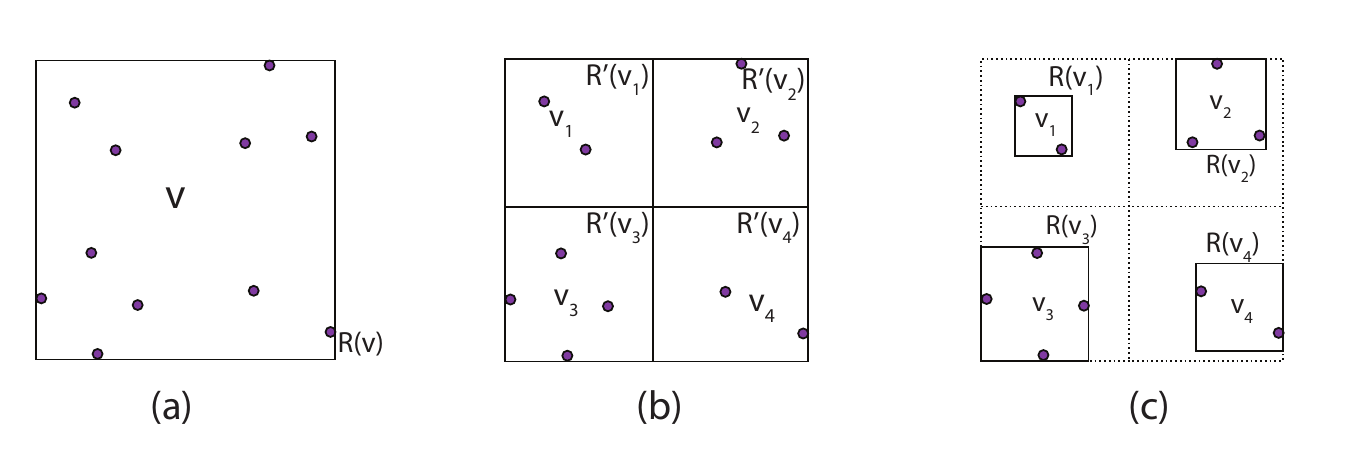}
\caption{Examples of first few steps of building an aggregation tree.
It shows how children of $v$ are determined by quad-tree decomposition and shrinking.}
\label{fig-treebuild}
\end{figure}

To build the whole aggregation-tree $T$, we simply run {\bf Algorithm} Tree-Build$(v_{r}, R(v_{r}))$, where $v_{r}$ is a (root) node constructed for representing $P$ and $R(v_{r})$ ($R'(v_{r})$ as well)  is the smallest bounding box 
of $P$. Let $S(v_{r})$ denote the edge length of $R(v_{r})$. For each node $v$ of $T$, let $v$ also denote the set of input points associated with the node $v$ and $\abs{v}$ denote its cardinality.

In the aggregation-tree $T$, we may view all input points in some node $v$ as $\abs{v}$ coincident points at its representative point $L(v)$. In this way, we reduce  the total number of points that need to be considered for the optimal hyperplane partition problem.  (Later, we will discuss how to identify such nodes $v$ in $T$.) 

Let $c$ be a type-2 cell produced by the AI decomposition, and $q$ be an arbitrary point in $c$. The following lemma enables us to bound the error incurred by viewing all input points in a node of the aggregation-tree $T$ as a single point.

\begin{lemma}
\label{lem-mgc}
Let $\psi$ be a perturbation on a set (possibly multiset) $P'$ of input points with a witness point $q$ in a type-2 cell $c$ and an error ratio $\frac{\Delta^{-1}(\epsilon)}{3}$.
Let $C \subseteq P'$ be a cluster site such that $(C,q)$ is a stable pair and has influence $F(C,q) \geq (1 - \Delta^{-1}(\epsilon))F(C_{m}(P',q),q)$.
Then for any point $q'$ in $c$, $F(\psi(C),q') \geq (1 - \epsilon) F(C_{m}(\psi(P'), q'),q')$.
\end{lemma}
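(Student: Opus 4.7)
The plan is to reduce the conclusion about an arbitrary $q' \in c$ back to the witness point $q$ via a translation, and then invoke Lemma \ref{lem-app} once with a \emph{single} combined perturbation whose total error ratio is exactly $\beta_{0} := \Delta^{-1}(\epsilon)$. To that end, set $t(x) = x + (q - q')$ so that $t(q') = q$, and let $\tilde\psi = t \circ \psi$. By Definition \ref{def-if}, $F$ is translation-invariant (the multiset $G(\cdot,\cdot)$ depends only on differences), so $F(\psi(C), q') = F(\tilde\psi(C), q)$ and, applying the same invariance to every subset of $\psi(P')$ and its translate in $\tilde\psi(P')$, also $F(C_{m}(\psi(P'), q'), q') = F(C_{m}(\tilde\psi(P'), q), q)$. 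Hence it suffices to prove $F(\tilde\psi(C), q) \ge (1-\epsilon)\, F(C_{m}(\tilde\psi(P'), q), q)$.

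The next step is to verify that $\tilde\psi$ is a $\beta_{0}$-perturbation on $P'$ with witness $q$. For every $p \in P'$, the triangle inequality gives $\norm{\tilde\psi(p) - p} \le \norm{q - q'} + \norm{\psi(p) - p} \le D(c) + (\beta_{0}/3)\norm{q - p}$. Since $c$ is a type-2 cell and $q \in c$, Lemma \ref{lm:cell2} applied with this $p$ bounds $D(c) \le (2\beta_{0}/3)\,\mathrm{dist}(c, p) \le (2\beta_{0}/3)\norm{q - p}$. Adding the two contributions yields $\norm{\tilde\psi(p) - p} \le \beta_{0}\norm{q - p}$, exactly the definition of a $\beta_{0}$-perturbation with respect to $q$.

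Finally, I would apply Lemma \ref{lem-app} to $\tilde\psi$ on $P'$ with error ratio $\beta_{0}$ and witness $q$. The three hypotheses hold: $(C, q)$ is a stable pair by assumption; $F(C, q) \ge (1 - \beta_{0})\, F(C_{m}(P', q), q)$ is precisely the given influence lower bound, since $\beta_{0} = \Delta^{-1}(\epsilon)$; and $\beta_{0}$ lies inside the stability range by the standing smallness convention $\Delta^{-1}(\epsilon) < 1/2$ introduced right after Lemma \ref{lem-app}. The lemma then delivers $F(\tilde\psi(C), q) \ge (1 - \Delta(\beta_{0}))\, F(C_{m}(\tilde\psi(P'), q), q) = (1 - \epsilon)\, F(C_{m}(\tilde\psi(P'), q), q)$, which is the reduced goal. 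Translating back via the first step completes the proof.

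The main obstacle is the bookkeeping of perturbation error: the proof must fuse the ``spatial'' perturbation from $\psi$ with the ``translation'' perturbation induced by replacing $q$ with $q'$, and keep their combined error with respect to $q$ bounded by $\beta_{0}$. The factor $1/3$ baked into the hypothesis on $\psi$ is tuned precisely to leave a budget of $2\beta_{0}/3$ free, which is exactly the worst case absorbed by the type-2 diameter bound $D(c) \le 2\beta_{0}r/3$ of Lemma \ref{lm:cell2}; any weaker tuning would force the final error through $\Delta(\cdot)$ to exceed $\epsilon$, and no direct application of Lemma \ref{lem-app} would close the argument.
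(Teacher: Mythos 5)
Your proof is correct and follows essentially the same route as the paper: fuse the given perturbation with the translation $q'\to q$ (bounded by the type-2 diameter bound of Lemma \ref{lm:cell2}) into a single $\Delta^{-1}(\epsilon)$-perturbation with witness $q$, then apply Lemma \ref{lem-app} and translation invariance of $F$. Your write-up is if anything slightly more careful than the paper's, since you make the composite map $\tilde\psi$ explicit and measure the error relative to $\norm{p-q}$ exactly as Definition \ref{def-per} requires.
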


\begin{proof}
For every point $p \in C$, consider the difference between the two vectors, $\psi(p) - q$ and $p - q'$. By the perturbation $\psi$, we have $\norm{\psi(p) - p}/\norm{\psi(p) - q} \leq  \Delta^{-1}(\epsilon)/3$. Since $c$ is a type-2 cell, by Lemma \ref{lm:cell2}, we also have $\norm{q-q'} \leq 2r_{min}\Delta^{-1}(\epsilon)/3 \leq 2\norm{\psi(p)-q}\Delta^{-1}(\epsilon)/3$, where $r_{min}$ is the distance from $q$ to any input point in $C$. Combining the above two inequalities,  we get $\norm{(\psi(p) - q) - (p - q')} \leq \Delta^{-1}(\epsilon)\norm{\psi(p)-q}$.
Since $F(C,q) \geq (1 - \Delta^{-1}(\epsilon))F(C_{m}(P',q),q)$, by Lemma \ref{lem-app} and the fact that $F$ is invariant under translation, 
we have $F(\psi(C),q') \geq (1 - \epsilon) F(C_{m}(\psi(P'), q'),q')$.
\qed
\end{proof}

Based on Lemma \ref{lem-mgc}, we can assign an approximate maximum influence site to a type-2 cell $c$ using the following approach. 
\begin{enumerate}
\item  Take an arbitrary point $q_{c}$ in $c$. 
\item Identify a set of pairwise disjoint subsets/nodes $\{v_{1},v_{2},\ldots,v_{m}\}$ in the aggregation-tree $T$ satisfying the condition of $S(v_i) \leq \Delta^{-1}(\epsilon)\norm{q_{c}- L(v_i)}/(3d)$. 
\item Define a perturbation $\psi: \mathbb{R}^{d}\rightarrow \mathbb{R}^{d}$ 
which maps each point $p$ in $v_{i}$ to $\psi(p) = L(v_{i})$ for every $i = 1,2,\ldots, m$. Let $P' = \psi(P)$. 
\item Find a subset $C' \subseteq P'$ so that $F(C', q_{c}) \geq (1 - \Delta^{-1}(\epsilon))F(C_{m}(P',q_{c}),q_{c})$. 
\item 
Map $C'$ back to $C$.
\end{enumerate}

In the above approach,  $C'$ is determined by solving the optimal hyperplane partition problem on $P'$ and $q_{c}$. 
Since $\psi$ maps all points in each $v_{i}$ to a single point $L(v_{i})$, the total number of distinct points in $P'$ is significantly reduced from that of $P$. 

The number of distinct points in $P'$ could still be too large even after  the perturbation. 
To further reduce the size of $P'$, we consider those points far away from $q_{c}$. 
Particularly, we consider a point $p' \in P'$ whose distance to $q$ is 
at least $r_s = (\Delta^{-1}(\epsilon))^{-1/t}n^{1/t}r_{min}$, where $r_{min}$ is the shortest distance from $q_c$ to $P'$. Let $p_{min}$ be the point in $P'$ which has the closest  distance to $q_{c}$. 
Since $F(\{p',\},q_c) \leq \Delta^{-1}(\epsilon)F(\{p_{min}\},q_c)/n$ and the number of such points $p'$ is smaller than $n$,  the influence of any set of such points $p'$ is no bigger than $\Delta^{-1}(\epsilon)F(\{p_{min}\},q)$, and hence is also smaller than $\Delta^{-1}(\epsilon)F(C_{m}(P',q_c),q_{c})$.  This means that we can remove all such far away points from $P'$ before searching  for $C'$ in $P'$.

Below we discuss how to efficiently implement the above approach. We start with the following definition.

\begin{definition}
\label{def-ec}
Let $c$ be a type-2 cell of the AI decomposition and $q_{c}$ be any point in $c$. 
A set $V$ = $\{v_1,v_2,\ldots,v_m\}$ of nodes in the aggregation-tree $T$ is called an \emph{effective cover} for $q_{c}$ if it satisfies the following conditions.
\begin{enumerate}
\item
$v_1,v_2,\ldots,v_m$ are pairwise disjoint when viewed as sets of input points.

\item
Let $B$ be the box centered at $q_{c}$ and with an edge length that is at least $4(\Delta^{-1}(\epsilon))^{-1/t}n^{1/t}r_{min}$ and is $O((\Delta^{-1}(\epsilon))^{-1/t}n^{1/t}r_{min})$,
where $r_{min}$ is the shortest distance between $q_c$ and $P$.
The union of $v_1, v_2, \ldots, v_m$ contains all points in $P \cap B$.

\item
$S(v) \leq \Delta^{-1}(\epsilon)\norm{q_{c} - L(v)}/(3d)$ for every $v \in V$.

\end{enumerate}
\end{definition}

See also \textbf{Figure} \ref{fig-ecover}.

\begin{figure}[ht]
\centering
\includegraphics[height=2.5in]{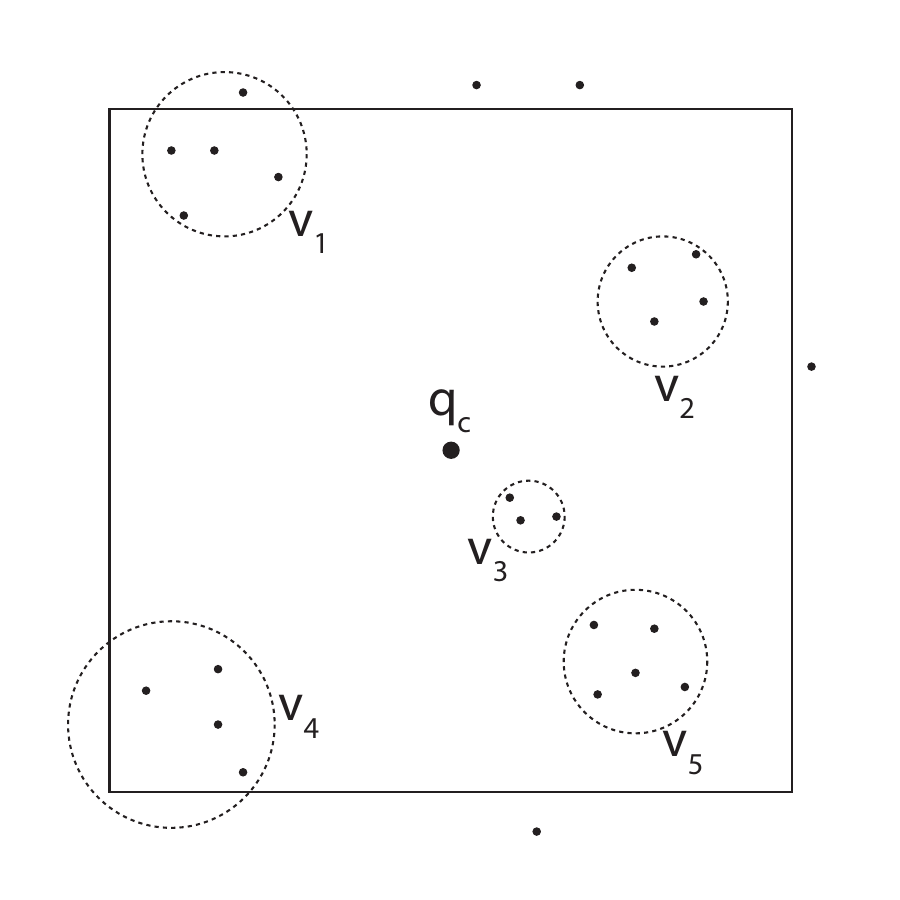}
\caption{An example of effective cover. Instead of considering all input points in order to find the optimal hyperplane,
we can consider only $v_1,\ldots,v_5$, each viewed as 1 ``heavy" point.
This significantly reduce the time of searching.}
\label{fig-ecover}
\end{figure}

An effective cover $V$ in the aggregation-tree $T$ can be used to find the approximate maximum influence site $C$ for $c$. Below are the main steps of the assignment algorithm; the implementation of Find$(v_{r},q_{c})$ will be discussed later.

\begin{algorithm}
\caption{Assign$(c)$}
\textbf{Input:} A type-2 cell $c$ of the AI decomposition.\\
\textbf{Output:} A set of nodes in the aggregation-tree $T$ whose union forms the approximate maximum influence site for $c$.
\label{alg-5}
\begin{algorithmic}[1]

\State{Pick an arbitrary point $q_{c}$ in $c$.}

\State{Call Find$(v_{r}, q_{c})$ to find an effective cover for $q_{c}$. Let $V = \{v_1,v_2,\ldots,v_m\}$ be the resulted effective cover.}

\State{For each partition of $V$ induced by a hyperplane $H$ passing through $q_{c}$, let  $V' = \{v_{i_1},v_{i_2},\ldots,v_{i_k}\}$ be the subset of $V$ on one side of $H$ and having a larger influence on $q_{c}$.
Let $V_{max}$ be the $V'$ having the largest influence $F(P(V'), q_{c})$ on $q_{c}$ among all possible hyperplane partitions, 
where $P(V')$ is a multiset of points with the following form
\[
\{\underbrace{L(v_{i_1}), L(v_{i_1}), \ldots, L(v_{i_1})}_{|v_{i_{1}}|},
   \underbrace{L(v_{i_2}), L(v_{i_2}), \ldots, L(v_{i_2})}_{|v_{i_{2}}|},
   \ldots,
   \underbrace{L(v_{i_k}), L(v_{i_k}), \ldots, L(v_{i_k})}_{|v_{i_{k}}|}
\}.
\]
}

\State{Output $V_{max}$.}

\end{algorithmic}
\end{algorithm}

The following lemma ensures the correctness of the above assignment algorithm.

\begin{lemma}
\label{lem-assign}
Let $c$ be a type-2 cell of the AI decomposition and $V_{max} = \{v_{i_1},v_{i_2},\ldots,v_{i_k}\}$ be the output of Assign$(c)$.
Let $C = \cup_{v \in V_{max}} v$. Then, $F(C,q_{c}) \geq (1-\epsilon)F(C_{m}(P,q_{c}),q_{c})$ for any point $q_{c} \in c$.
\end{lemma}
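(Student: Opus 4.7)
My plan is to show that the multiset $\psi(C)$ corresponding to $V_{max}$ is a $(1-\Delta^{-1}(\epsilon))$-approximate maximum influence site for $q_c$ in the perturbed point set, then apply Lemma \ref{lem-mgc} to transfer this approximation to the original (unperturbed) set $P$ and to all points $q' \in c$. Specifically, I will build the perturbation $\psi$ so that it sends every $p \in v_i$ to $L(v_i)$ for each $v_i \in V$ (where $V$ is the effective cover produced in Step 2), and keeps every point of $P$ outside $\cup_{v \in V} v$ fixed. The first thing to verify is that $\psi$ is a valid perturbation with error ratio at most $\Delta^{-1}(\epsilon)/3$ with respect to $q_c$: for any $p \in v_i$, $\|p - L(v_i)\| \le \sqrt{d}\,S(v_i)$, and by property (3) of Definition \ref{def-ec}, $S(v_i) \le \Delta^{-1}(\epsilon)\|q_c - L(v_i)\|/(3d)$, which together with a short triangle-inequality argument bounds $\|p - \psi(p)\|/\|p - q_c\|$ by (essentially) $\Delta^{-1}(\epsilon)/3$.

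Second, I would argue that Step 3 of Assign$(c)$ actually finds a $(1 - \Delta^{-1}(\epsilon))$-approximate maximum influence site of $q_c$ with respect to $\psi(P)$. By Observation \ref{obs-h}, the exact maximum influence site of $q_c$ over any finite point set is induced by a partitioning hyperplane through $q_c$. Since $\psi$ collapses $v_i$ to the single location $L(v_i)$, the optimal cluster site restricted to $\cup_{v \in V}\psi(v)$ is exactly a subset of $\{L(v_1),\dots,L(v_m)\}$ on one side of some hyperplane, and this is precisely what Step 3 enumerates; so $V_{max}$ is the true maximum over $\psi(\cup_{v \in V}v)$. The remaining contribution comes from points of $P$ lying outside the box $B$ of Definition \ref{def-ec}: by the construction of $B$ (edge length $\ge 4(\Delta^{-1}(\epsilon))^{-1/t} n^{1/t} r_{min}$), each such point $p'$ satisfies $\|q_c - p'\|^{-t} \le \Delta^{-1}(\epsilon)\,r_{min}^{-t}/n$, so summing over at most $n$ such points contributes at most $\Delta^{-1}(\epsilon)\,r_{min}^{-t} \le \Delta^{-1}(\epsilon) F(C_m(\psi(P),q_c),q_c)$ to the total magnitude; hence ignoring them loses at most a $\Delta^{-1}(\epsilon)$ factor.

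Finally, I would invoke Lemma \ref{lem-mgc} with $P'= P$, the perturbation $\psi$, witness $q_c$, and the cluster $C=\cup_{v \in V_{max}} v$. The hypothesis of the lemma is exactly what the previous paragraph establishes: $(\psi(C),q_c)$ achieves at least $(1-\Delta^{-1}(\epsilon))F(C_m(\psi(P),q_c),q_c)$, and $(C,q_c)$ is a stable pair because $(C,q_c)$ is a maximal pair by Observation \ref{obs-h} together with Property \ref{pro-1}. The conclusion of Lemma \ref{lem-mgc} then yields $F(C, q') \ge (1-\epsilon) F(C_m(P, q'), q')$ for every $q' \in c$, which is exactly the stated bound.

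The main obstacle I expect is the bookkeeping around the ``far away'' points outside $B$: we need the optimal-over-$V$ hyperplane partition together with the truncation error for points outside $B$ to simultaneously sit inside the $1-\Delta^{-1}(\epsilon)$ slack that Lemma \ref{lem-mgc} demands. Handling this cleanly requires checking that the truncation loss bounds the \emph{vector-sum} magnitude rather than scalar sums (using the same argument that justifies Property \ref{pro-2} for vector CIVD), and then combining two $(1-\Delta^{-1}(\epsilon))$-factor losses into a single factor that still meets the hypothesis of Lemma \ref{lem-mgc}; the other steps are essentially geometric and follow from Definition \ref{def-ec} and Lemma \ref{lm:cell2}.
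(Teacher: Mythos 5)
Your proposal follows essentially the same route as the paper's proof: collapse each cover node $v_i$ to $L(v_i)$, argue that the hyperplane enumeration over the effective cover recovers a $(1-\Delta^{-1}(\epsilon))$-approximate maximum influence site of $q_c$ in the collapsed point set (with the points outside $B$ contributing at most a $\Delta^{-1}(\epsilon)$ fraction of the vector-sum magnitude), and then transfer back via Lemma \ref{lem-mgc}. One small correction to the last step: Lemma \ref{lem-mgc} must be invoked with $P'=\psi(P)$, cluster $\psi(C)$, and the perturbation $\psi^{-1}$ (mapping each collapsed point $L(v_i)$ back to its original position), not with $P'=P$ and $\psi$ as you wrote --- otherwise the hypothesis would concern $F(C,q_c)$ versus $F(C_m(P,q_c),q_c)$ and the conclusion would be about the collapsed set rather than $P$. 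The inequality you actually establish, $F(\psi(C),q_c)\ge(1-\Delta^{-1}(\epsilon))F(C_m(\psi(P),q_c),q_c)$, is exactly the hypothesis for that correct instantiation, and the stable pair to verify is $(\psi(C),q_c)$, which is maximal (hence stable by Property \ref{pro-1}) because $\psi(C)$ is the exact maximum influence site over the collapsed covered points.
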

\begin{proof}
Let $V = \{v_1,v_2,\ldots,v_m \}$ be the effective cover obtained in Step 2 of  {\bf Algorithm} Assign.
Let $\psi$ be a mapping on $P$ defined as follows.
\[
\psi(p) =
\begin{cases}
L(v) & \text{if $p$ is covered by $V$, \emph{i.e.}, $p \in v$ for some $v \in V$.}\\ 
p & \text{Otherwise.}
\end{cases}
\]
By Definition \ref{def-ec}, we know that $\psi^{-1}$ is a perturbation with an error ratio $\Delta^{-1}(\epsilon)/3$ and a witness point $q_c$, where $\psi^{-1}$ is a loosely defined inverse of $\psi$ which maps $\psi(p)$ back to $p$ for each $p \in P$.

We now show that the output $V_{max}$ of {\bf Algorithm} Assign$(c)$ satisfies the inequality $$F(\psi(U(V_{max})),q_c) \geq (1 - \Delta^{-1}(\epsilon))F(C_m(\psi(P),q_c),q_c), $$
where $U(V_{max}) = \cup_{v \in V_{max}} v$.
Let $r_{min}$ and $r'_{min}$ denote the shortest distances
from $q_c$ to $P$ and $\psi(P)$, respectively, and $p_{min}$ and $p'_{min}$ be $q_{c}$'s closest points in $P$ and $\psi(P)$, respectively.
Then, by the triangle inequality, we have
\begin{eqnarray}
r'_{min} &\leq& \norm{\psi(p_{min}) - q_c} \leq \norm{\psi(p_{min}) - p_{min}} + \norm{p_{min} - q_c}. \label{for-1}
\end{eqnarray}
By Definition \ref{def-ec} and the assumption of $\Delta^{-1}(\epsilon) \leq 1/2$, we know $$\norm{\psi(p_{min}) - p_{min}} \leq \Delta^{-1}(\epsilon)\norm{q_{c} - \psi(p_{min})}/3 \leq
\norm{q_{c} - \psi(p_{min})}/6.$$ 
Then by the triangle inequality, we have $$\norm{p_{min} - q_c} \geq \norm{q_{c} - \psi(p_{min})}  - \norm{\psi(p_{min}) - p_{min}} \geq 5\norm{q_{c} - \psi(p_{min})}/6.$$
 Thus, $$\norm{\psi(p_{min}) - p_{min}} \leq \norm{p_{min} - q_c}/5.$$
Plugging the above inequality into (\ref{for-1}), we have $$r'_{min} \leq \norm{\psi(p_{min}) - p_{min}} + \norm{p_{min} - q_c} \leq (1 + 1/5)\norm{p_{min} - q_c} \leq 2\norm{p_{min} - q_c}
\leq 2r_{min}.$$
By Definition \ref{def-ec}, we know that any point $p'$ of $P$ not covered by $V$ is outside $B$. Hence, 
$$\norm{p' - q_c} \geq 2(\Delta^{-1}(\epsilon))^{-1/t}n^{1/t}r_{min} \geq (\Delta^{-1}(\epsilon))^{-1/t}n^{1/t}r'_{min}, $$
which implies that $\norm{p' - q_c}^{-t} \leq \Delta^{-1}(\epsilon){r'}^{-t}_{min}/n$.

Let $C'_m$ denote $C_m(\psi(P_{cov}),q_c)$, where $P_{cov} \subseteq P$ is the set of input points that is covered by $V$.
Then $\psi(U(V_{max})) = C'_m$. Let $C''_m$ denote $C_m(\psi(P),q_c) \cap \psi(P_{cov})$.
Then we know that $F(C''_m,q_c) \leq F(C'_m,q_c)$.
By the definition of the influence function of the vector CIVD and the above discussion, we know that
\begin{eqnarray*}
F(C_m(\psi(P),q_c),q_c) & \leq & F(C''_m,q_c) + \sum_{p \in P \setminus P_{cov}} \norm{p - q_c}^{-t} \leq
F(C''_m,q_c) + \Delta^{-1}(\epsilon){r'}_{min}^{-t} \\
&=&  F(C''_m,q_c) + \Delta^{-1}(\epsilon)F(\{p'_{min} \},q_c) \leq
F(C''_m,q_c) + \Delta^{-1}(\epsilon)F(C_m(\psi(P),q_c),q_c).
\end{eqnarray*}
This means that $$F(C''_m,q_c) \geq (1 - \Delta^{-1}(\epsilon))F(C_m(\psi(P,q_c),q_c).$$
Thus, we have $F(\psi(U(V_{max})),q_c) \geq (1 - \Delta^{-1}(\epsilon))F(C_m(\psi(P),q_c),q_c)$.

The lemma then follows from Lemma \ref{lem-mgc} with the perturbation $\psi^{-1}$.
\qed
\end{proof}

\subsection{Finding an Effective Cover}

We now discuss how to implement the procedure of $Find(v_{r}, q_{c})$ in {\bf Algorithm \ref{alg-5}} for generating an effective cover. 

By Definition \ref{def-ec}, we know that an effective cover can be found straightforwardly by searching the aggregation-tree $T$ in a top-down fashion. We start at the root $v_r$. If $R(v_r)$ is small enough or it is disjoint with $B$ ({\em i.e.}, the box in Definition \ref{def-ec}), then we are done. Otherwise, we recursively search all its children.
A major drawback of this simple approach is that it could take too much time ({\em i.e.,} $O(n)$ in the worst case). Thus, a faster method is needed. 

To design a fast method, we first introduce two definitions.

\begin{definition}
An internal node $v$ of the aggregation-tree $T$ is {\em splittable} if the box $B$ (in Definition \ref{def-ec}) intersects at least two of the $2^{d}$ sub-boxes resulted from a quad-tree decomposition on $R(v)$.
\end{definition}

\begin{definition}
A node $v$ of the aggregation-tree $T$ {\em touches} $B$ if $R'(v)$ intersects $B$.
\end{definition}

To obtain a fast method for computing an effective cover, we first present a slow algorithm called {\em SlowFind} which may shed some light on how to speed up the computation.

\begin{algorithm}[]
\caption{SlowFind($v$, $q_{c}$)}
\textbf{Input:} A node $v$ of the aggregation-tree $T$ and a query point $q_{c}$.\\
\textbf{Output:} Part of an effective cover for $q_{c}$ in the subtree of $T$ rooted at $v$.
\begin{algorithmic}[1]

\State{If $R(v)$ does not intersect $B$, return.}

\State{If $R(v)$ is small enough, {\em i.e.}, $S(v) \leq \norm{q_c - L(v)}\Delta^{-1}(\epsilon)/(3d)$, report $v$ as one of the output nodes, return.}

\State{If $v$ is splittable, call SlowFind($v_{i}$, $q_{c}$) on each of $v$'s children, $v_{i}$, in the aggregation-tree $T$ that touches $B$, return.}

\State{Let $R$ be one of the $2^{d}$ sub-boxes resulted from a quad-tree decomposition on $R(v)$ that intersects $B$.  If $R$ contains no input point, return.}

\State{Let $v_{1}$ be the child of $v$ whose $R(v_{1})$ is contained inside $R$. 
For $l=1,2, \ldots$, do  

 Perform Steps 1 to 4 on $v_l$. If it does not return, this means that $v_l$ is  non-splittable and exactly one of its children intersects $B$. Let $v_{l+1}$ be that child, and $l=l+1$. Continue the loop (see Fig.~\ref{fig-longpath}).} 
\end{algorithmic}

\end{algorithm}

\begin{figure}
\center
\includegraphics[height=3.5in]{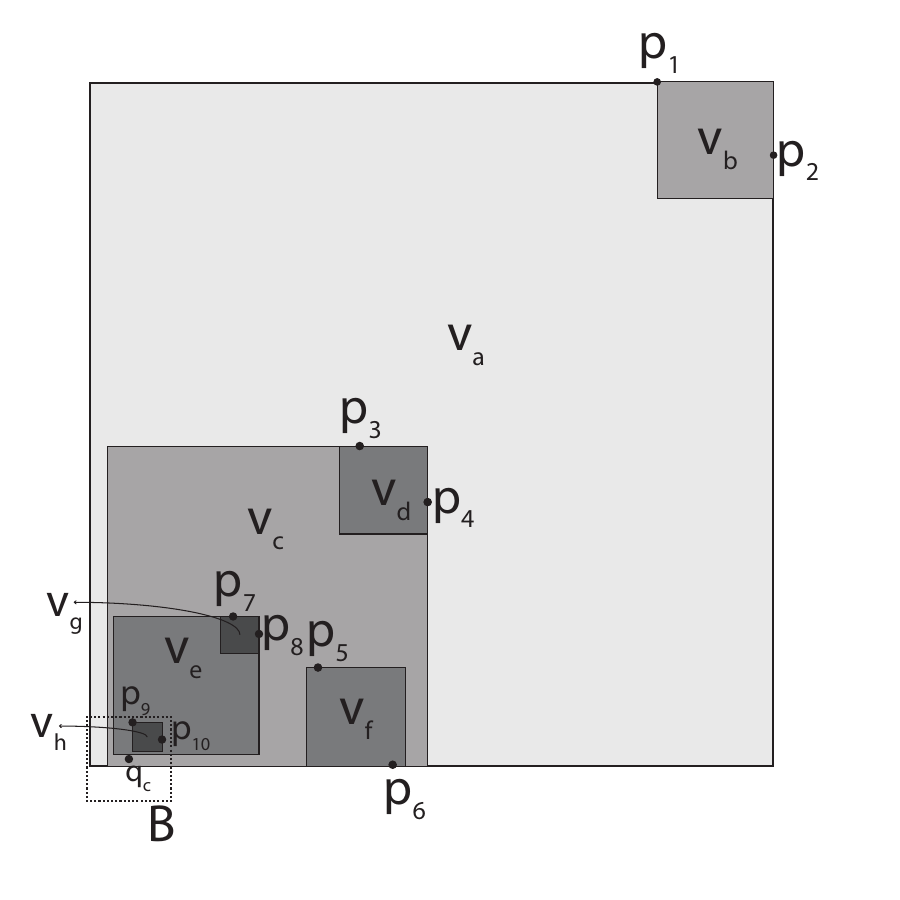}
\caption{An example illustrating Step 5 of {\bf Algorithm} SlowFind. Box $B$ (bounded by dotted line segments) intersects a sequence of nodes in the aggregation-tree $T$
(see \textbf{Figure} \ref{fig-longpath2})
which form a long path
(enclosed by dashed curves) in the aggregation-tree $T$.}
\label{fig-longpath}
\end{figure}

\begin{figure}
\center
\includegraphics[height=3.2in]{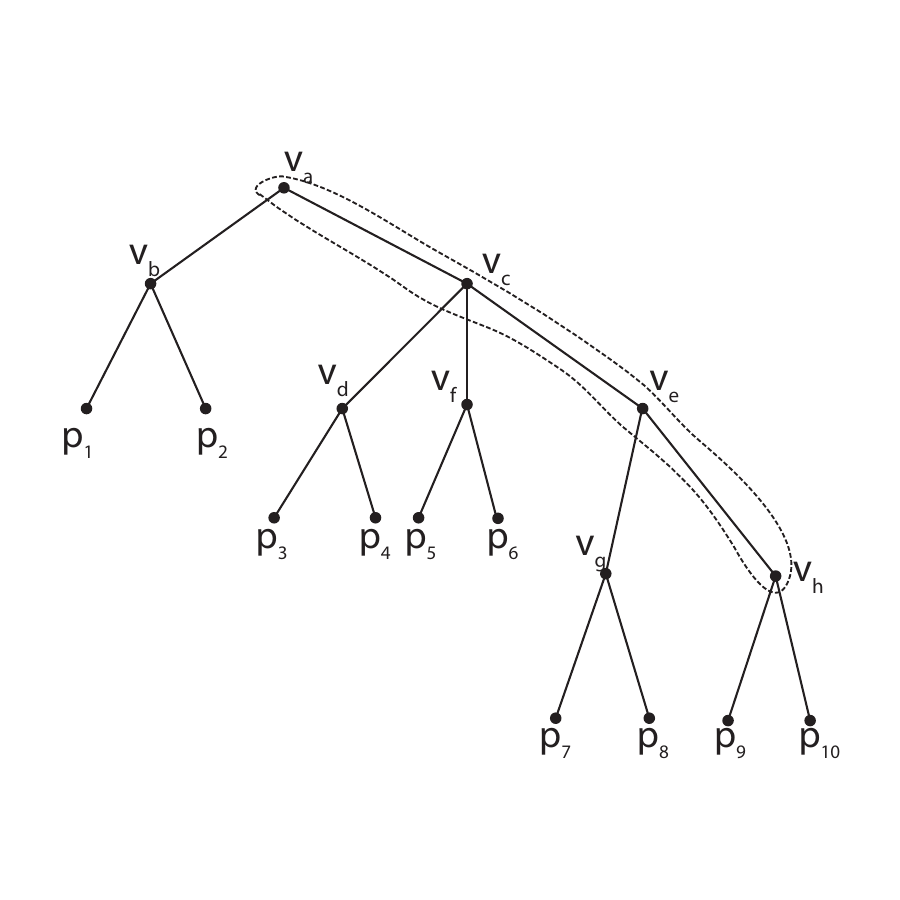}
\caption{The aggregation-tree $T$ for \textbf{Figure} \ref{fig-longpath}.}
\label{fig-longpath2}
\end{figure}

It should be pointed out that in the above SlowFind procedure, we use a loop, instead of recursive calls, in Step 5 to avoid the case that the recursion of SlowFind forms a possible {\em long path} in the aggregation-tree $T$ (see Fig. \ref{fig-longpath} and
\ref{fig-longpath2}). Searching through a long path would be the most time consuming computation in finding an effective cover. We call it the {\em long path} problem. Later, we will show how to overcome this main obstacle.

To obtain an effective cover, we can run SlowFind$(v_{r}, q_{c})$ on the root $v_{r}$ of the aggregation-tree $T$.
Below we show that for a properly chosen box $B$, the size of the recursion tree of SlowFind is only $O(\log n)$. 

\begin{lemma}\label{lem-slow}
The size of the recursion tree of SlowFind is $O(\log n)$,
if the size of $B$ is bounded by $c_{\epsilon}(\Delta^{-1}(\epsilon))^{-1/t}n^{1/t}r_{min}$, where $c_{\epsilon} > 0$ is a constant depending only on $\epsilon$.
\end{lemma}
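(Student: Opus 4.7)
The plan is to bound the leaves of the recursion tree via the packing lemma (Lemma \ref{lem-ann}) and then convert this leaf count into a bound on the total size by tree arithmetic.

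The first step is a geometric observation: the quadrant boxes $\{R'(u)\}$ over the leaves $u$ of the recursion tree form a pairwise disjoint family. Two distinct leaves $u_1, u_2$ share some Step 3 call on a splittable $T$-node $v_T$ as their lowest common ancestor in the recursion tree, and their two descent paths in $T$ go through distinct $T$-children $w_1, w_2$ of $v_T$; since $R(w_1)$ and $R(w_2)$ sit inside disjoint quadrant sub-boxes of $R(v_T)$ and each $R'(u_i)$ is contained in the $R$-box of some $T$-descendant of $w_i$ (hence in $R(w_i)$), the $R'(u_i)$ are disjoint.

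The second step is to verify the hypotheses of Lemma \ref{lem-ann} for these leaf boxes. Each $R'(u)$ intersects $B$ because Step 3 calls only on children whose quadrant touches $B$. Each $R'(u)$ also satisfies $S'(u) \geq C \cdot r'(u)$, with $r'(u) = \mathrm{dist}(R'(u), q_c)$ and a constant $C = C(\epsilon, d) > 0$: the triggering splittable $T$-parent $v_T$ was not output by Step 2, giving $S(v_T) > \Delta^{-1}(\epsilon)\,\norm{q_c - L(v_T)}/(3d)$; combined with $L(v_T) \in P$ (so $\norm{q_c - L(v_T)} \geq r_{min}$) and $L(u) \in R(v_T)$ (so $\norm{q_c - L(u)} \leq \norm{q_c - L(v_T)} + \sqrt{d}\,S(v_T)$), one obtains simultaneously a uniform lower bound $S'(u) = S(v_T)/2 = \Omega_{\epsilon,d}(r_{min})$ and the desired proportionality. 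Invoking Lemma \ref{lem-ann} with $o_c = q_c$, $r_{in} = \Theta_{\epsilon,d}(r_{min})$ (so that no leaf $R'(u)$ is contained in $S_{in}$) and $r_{out} = \Theta(|B|)$ (so that every leaf $R'(u)$ intersects $S_{out}$), the ratio is $r_{out}/r_{in} = O(c_\epsilon (\Delta^{-1}(\epsilon))^{-1/t}\, n^{1/t})$, giving $O(\log n)$ leaves, with constants hidden depending only on $\epsilon$ and $d$.

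The third step converts the leaf bound into a bound on the total recursion-tree size. Step 3 internal nodes with $\geq 2$ recursion-tree children number at most (leaves $-\,1$) by standard tree arithmetic. Single-child Step 3 firings can arise only when a splittable $T$-node has an empty sub-box touching $B$ alongside a unique non-empty sub-box touching $B$; I would first argue that such a node must satisfy $R(v) \not\subseteq B$ (otherwise the smallest-enclosing-hypercube property of $R(v)$ forces at least two non-empty sub-boxes, all inside $B$, both touching $B$, contradicting single-childness), and that along any root-to-leaf chain of such nodes $S(v)$ strictly halves at each descent while staying $\Omega_{\epsilon,d}(r_{min})$, so each chain has length $O(\log(|B|/r_{min})) = O(\log n)$. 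The main obstacle of the proof is to control the total contribution of all such single-child chains together rather than chain-by-chain; I plan to handle this by re-running a packing-type argument at each $S$-scale on the $R(v)$-boxes of single-child firings (using the same $S(v) \geq C \cdot r(v)$ condition inherited from not being output at Step 2) and summing over the $O(\log n)$ scales, thereby absorbing the contribution into the overall $O(\log n)$ bound on the recursion tree size.
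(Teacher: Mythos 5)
Your Steps 1 and 2 are essentially the paper's argument: the paper also associates with each leaf of the recursion tree the box $R'(v)$, notes that these boxes are pairwise disjoint and all intersect $B$, derives $S'(v)\ge c\cdot\mathrm{dist}(R'(v),q_c)$ from the fact that the $T$-parent survived Step~2 (so $S(v_p)\ge \norm{q_c-l(v_p)}\Delta^{-1}(\epsilon)/(3d)$), and invokes Lemma~\ref{lem-ann} with an inner/outer ratio polynomial in $n$. (Your uniform lower bound $S'(v)=\Omega_{\epsilon,d}(r_{min})$ for choosing $r_{in}$ is a slight variant of the paper's ``no leaf box lies in $B_{min}$'' and is, if anything, cleaner.) The divergence, and the genuine gap, is in Step~3, the conversion from leaves to total size. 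The paper sidesteps the single-child issue entirely: it modifies SlowFind so that, at a splittable node, every sub-box that touches $B$ but contains no input point spawns a \emph{dummy} child that returns immediately. After this modification every internal node of the recursion tree has at least two children, so the total size is $O(\#\text{leaves})$, and the dummy leaves are simply additional boxes in the same packing family.

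Your direct attack on single-child firings does not go through as written. A single-child splittable node only requires that $B$ straddle a central hyperplane of $R(v)$ while exactly one touched sub-box is non-empty; nothing forces $S(v)=O(\mathrm{edge}(B))$, because the aggregation-tree boxes can be arbitrarily large relative to $B$ and $r_{min}$ (the spread of $P$ is not polynomially bounded in $n$, which is exactly why the AI decomposition needs the distance-tree machinery). Consequently the claim that each chain has length $O(\log(\lvert B\rvert/r_{min}))$ is unjustified, and the final ``sum over the $O(\log n)$ scales'' fails for the same reason: the number of dyadic scales between $r_{min}$ and the root box size is not $O(\log n)$, and your per-scale packing gives only $O(1)$ boxes \emph{per scale}, not a bound on the number of scales. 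Moreover, as you note yourself, the chain-by-chain route (there are $O(\log n)$ maximal chains, one per leaf/branching node) would only yield $O(\log^2 n)$ even if the chain length were controlled. The argument can be repaired --- e.g., by showing that along one nested chain at most $d$ single-child nodes can have edge length exceeding twice that of $B$, since $B$ can ``newly'' straddle each coordinate axis only once, which bounds the large-scale contribution by $O(d\log n)$ --- but some such additional idea is needed; the simplest fix is the paper's dummy-children device, which removes single-child internal nodes altogether.
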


\begin{proof}
First, we slightly change the SlowFind procedure. Note that in Step 3, it is possible that $v$ is splittable, but has less than two children touching $B$.
This is because some sub-box of $R(v)$ that intersects $B$  may not contain any input point and thus it does not correspond to a child of $v$ in the aggregation-tree $T$.
If this happens, we make a dummy child $v'$ of $v$ for this sub-box. The execution of SlowFind on a dummy child does not do anything and returns immediately.

Clearly, such a change can 
only increase the size of the recursion tree.
Below we show that the modified SlowFind has a recursion tree of size $O(\log n)$.
Note that we only need to prove that there are $O(\log n)$ leaves in the recursion tree,
since  every node in the recursion tree has either $0$ ({\em i.e.,} a leaf node) or  at least two children.

We associate each leaf node, SlowFind($v$, $q_{c}$), in the recursion tree with the box $R'(v)$.
Let $\mathcal{B}$ denote the set of such associated boxes.
It is easy to see that the following holds (except for the trivial case in which $v$ is the root  $v_{r}$ of $T$ and $B$ is disjoint with $R'(v_{r})$; in this case, the lemma is trivially true).
\begin{enumerate}
\item
The boxes in $\mathcal{B}$ are disjoint with each other.

\item
All boxes in $\mathcal{B}$ intersect $B$.

\item
Every box in $\mathcal{B}$ is not completely contained in $B_{min}$, where $B_{min}$ is a box centered at $q_c$ with an edge length of $\frac{r_{min}}{\sqrt{d}}$ 
(since otherwise it contradicts with the assumption that $r_{min}$ is the minimum distance between $q_c$ and all input points of $P$).
\end{enumerate}

Note that the ratio of the sizes of $B$ and $B_{min}$ is a polynomial of $n$. Hence, if we can prove that every box in $\mathcal{B}$ is big enough (comparing to its distance to $q_c$), 
then the lemma follows from Lemma \ref{lem-ann}.

Let SlowFind($v$, $q_{c}$) be a leaf node of the recursion tree and $v_p$ be the parent of $v$ in the aggregation-tree $T$. We assume that $v$ is not the root of the aggregation-tree $T$, since in this case the lemma is trivially true.
Clearly, $S(v_p) \geq \norm{q_c - l(v_p)}\Delta^{-1}(\epsilon)/(3d)$ (since otherwise, it is a leaf node in the recursion tree). Let $r_v$ denote the distance between $R'(v)$ and $q_c$, and $S'(v)$ denote the edge length of $R'(v)$.
Then, $r_v \leq \norm{q_c - l(v_p)} + \frac{\sqrt{d}S(v_p)}{2}$. 
Since $S'(v) = \frac{S(v_p)}{2}$, it is easy to see that
$S'(v) \geq r_v\frac{\Delta^{-1}(\epsilon)}{6d+\Delta^{-1}(\epsilon)}$.
Thus the lemma follows from the above discussion.
\qed

\end{proof}

The above lemma indicates that 
to find an appropriate $B$, it is sufficient to use an approximate value of $r_{min}$. Note that for a type-2 cell, all input points are recorded,
and $r_c$ is the smallest recorded distance. Let $r'_{min}$ be the value of $r_c$ at the time when $c$ becomes a cell ({\em i.e.,} no longer be partitioned), $p'_{min}$ be the input point with the recorded distance $r'_{min}$,
$p_{min} \in P$ be the input point such that $\norm{q_c - p_{min}} = r_{min}$, and $r_p$ be the recorded distance of $p_{min}$ for $c$.
By Lemma \ref{lm:distance}, we know that $r_{min} \geq  (1 - \beta) r_{p} \geq (1 - \beta) r'_{min}$,
and $r'_{min} \geq \frac{1}{1+\beta} \norm{q_c - p'_{min}} \geq \frac{1}{1+\beta} \norm{q_c - p_{min}} = \frac{1}{1+\beta} r_{min}$.
This means that $r'_{min}$ can be used as a good approximation of $r_{min}$.
We can set the edge length of $B$ as $ 4(1 + \Delta^{-1}(\epsilon))(\Delta^{-1}(\epsilon))^{-1/t}n^{1/t}r'_{min}$.
The value of $r'_{min}$ can be easily obtained from the AI Decomposition algorithm ({\em i.e.,} in $O(1)$ time).

SlowFind is  slow since Step 5 may take $O(n)$ time (due to the long path problem). Note that for some node $v$, after $l$ iterations in the loop of Step 5, SlowFind either returns or continues its recursion on the children of $v_l$.
If we can somehow find $v_l$ without actually iterating through the loop, then SlowFind will be much more efficient.
To solve this long path problem, we present below an improved method to search for the last $v_l$ (also denoted as $v_{l}$) in Step 5 (see Fig.~\ref{fig-longpath}, in which $v_{l}$ is $v_{h}$). 
Each search in the new method takes $O(\log n)$ time. Thus, the running time of SlowFind is improved to $O(\log^2 n)$ time.

\subsubsection{Long Path Problem:}

To solve the long path problem, we first label each edge in the aggregation-tree $T$ with a number in $\{1,2,\ldots,2^{d}\}$. The number is determined by a child's relative position in the box of its parent. This means that we label each $v$ of the $2^{d}$ possible children of the parent node $v_{p}$ based on the relative position of  the box $R'(v)$ of $v$ in the box $R(v_{p})$. We say that $v$ is the $i$-child  of $v_{p}$ if the edge connecting $v$ to its parent $v_{p}$ is labeled with the number $i$.

Consider a list of nodes $v_1',v_2',\ldots,v_m'$ in the aggregation-tree $T$, where $v_j'$ is the parent of $v_{j+1}'$ for each $j=1, 2, \ldots, m-1$.
If $v_1'$ is not an $i$-child of its parent for some $i$, $v_m'$ does not have an $i$-child, and $v_{j+1}'$ is the $i$-child of $v_j'$  for every $1 \leq j \leq m-1$, then such a path in $T$ is called an \emph{$i$-path} (see Fig.~\ref{fig-2path} and Fig.~\ref{fig-direc}).
\begin{figure}[h]
\centering
\includegraphics[height=2.8in]{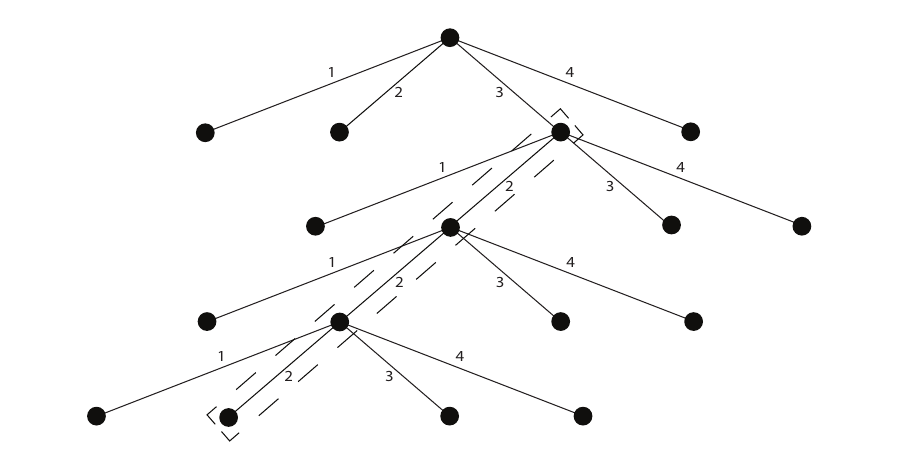}
\caption{An example of a 2-path (enclosed by the dashed line segments).}
\label{fig-2path}
\end{figure}

\begin{figure}[h]
\centering
\includegraphics[height=3.0in]{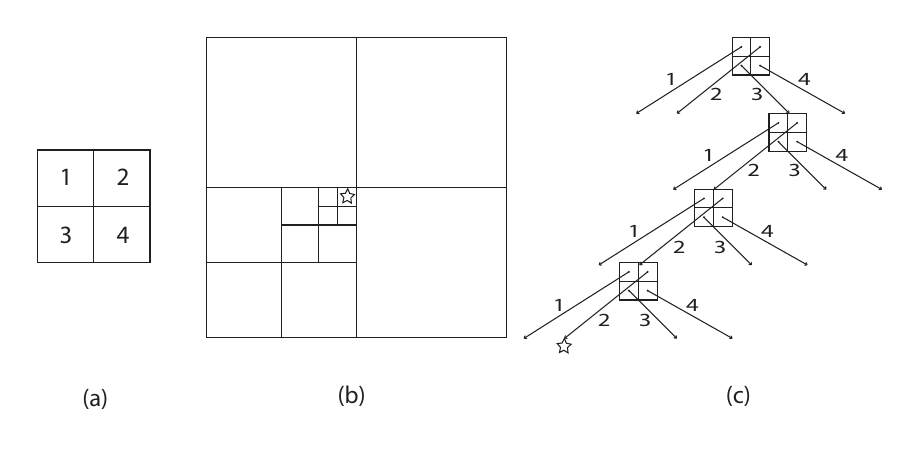}
\caption{A 2D example:
\textit{(a) Assign numbers to four sub-boxes}; \textit{(b) mark a quad-tree box with a star}; \textit{(c) the path in $T$ to the quad-tree box marked with a star.}}
\label{fig-direc}
\end{figure}

\begin{definition}
Let $e$ be an edge of the box $B$ and $v$ be a node in the aggregation-tree $T$. 
We say that $R(v)$ \emph{cuts} $e$ if $e$ intersects $R(v)$ and is not contained entirely in $R(v)$;
$e$ \emph{passes through} $R(v)$ if $e$ intersects $R(v)$ and none of its end vertices is inside $R(v)$.
\end{definition}

Now we discuss how to quickly find $v_l$ for a non-splittable node $v\in T$ in Step 5 of SlowFind. First, we consider the case that $R(v)$ cuts every edge of $B$ that intersects it (later, we will consider the case in which some edge of $B$ is fully contained in $R(v)$).  Note that in this case, it is impossible that an edge of $B$ passes through $R(v)$, since otherwise $v$ would be splittable. In this case, it means that exactly one vertex, say $u$, of $B$ is contained in $R(v)$. Let $i'$ be the label of the sub-box of $R(v)$ which contains $u$.
Let $P_I(v)$ be the $i'$-path in $T$ containing $v$.
We have the following three claims which can be easily verified.
\begin{claim}
$v_l$ must be in $P_I(v)$.
\end{claim}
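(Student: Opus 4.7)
The plan is to show by induction on $j$ that each transition $v_j\to v_{j+1}$ in the Step 5 loop is along an $i'$-edge of the aggregation-tree $T$. Once this is established, the entire path $v=v_0\to v_1\to\cdots\to v_l$ is a chain of $i'$-edges, so $v_l$ lies in the maximal $i'$-path through $v$, which by definition is $P_I(v)$.

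The key geometric lemma I would prove first is a characterization of non-splittability in the Step 5 configuration. Because $R(v)$ cuts every edge of $B$ that it intersects and $u$ is the only vertex of $B$ in $R(v)$, the box $B$ extends from $u$ into a fixed orthant $\Omega$, determined by the corner of $B$ diagonally opposite to $u$, and this orthant does not change as we descend in $T$. Hence, for every $v_j$ in the chain, $B\cap R(v_j)$ is the box with one corner at $u$ and the opposite corner at the $\Omega$-facing corner of $R(v_j)$. The lemma to prove is: $v_j$ is non-splittable if and only if, for every coordinate $k$, the point $u$ lies on the $\Omega$-side of the center $c_j$ of $R(v_j)$. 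The ``if'' direction is immediate, since then $B\cap R(v_j)$ sits entirely on one side of each coordinate hyperplane through $c_j$. The ``only if'' direction is a per-coordinate argument: if $u_k$ were on the wrong side of $c_{j,k}$ in some coordinate $k$, then the segment of $B\cap R(v_j)$ from $u$ to the $\Omega$-facing corner of $R(v_j)$ would cross the hyperplane $x_k=c_{j,k}$, forcing at least two sub-boxes of $R(v_j)$ to meet $B$. Consequently the sub-box of $R(v_j)$ containing $u$ is always the one in the orthant $\Omega$ relative to $c_j$, and its label in the quad-tree is the same $i'$ for every $j$.

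With the lemma in hand the induction is short. For the base case, $v$ itself is non-splittable by hypothesis, so the unique sub-box of $R(v)$ that meets $B$ is the one containing $u$, which has label $i'$ by definition; hence $v_1$ is the $i'$-child of $v$. For the inductive step, assume the loop has not yet returned at $v_j$. Then $v_j$ is non-splittable, and $R(v_j)$, being the $i'$-sub-box of $R(v_{j-1})$, still has the $\Omega$-facing corner of $R(v_{j-1})$ as one of its vertices, so the Step 5 hypotheses (edges of $B$ cut by $R(v_j)$, and $u$ the only vertex of $B$ inside $R(v_j)$) are preserved; by the lemma, $u$ lies in the $i'$-sub-box of $R(v_j)$, and hence $v_{j+1}$ is the $i'$-child of $v_j$. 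The main obstacle I expect is executing the per-coordinate non-splittability argument cleanly in arbitrary dimension $d$ and handling the borderline case in which $u$ lies exactly on some coordinate hyperplane through a $c_j$; a fixed tiebreaking rule in the quad-tree labeling resolves the latter, after which the conclusion $v_l\in P_I(v)$ follows immediately.
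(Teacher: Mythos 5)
Your overall strategy --- fix the orthant $\Omega$ into which $B$ extends from its unique vertex $u \in R(v)$, and show by induction that at every iteration of the Step~5 loop the unique sub-box meeting $B$ is the $\Omega$-quadrant, so every edge traversed carries the same label $i'$ --- is the natural one (the paper states this claim without proof). However, your inductive step has a genuine gap: you treat $R(v_j)$ as if it were the $i'$-sub-box of $R(v_{j-1})$ produced by the quad-tree split, and you assert that the hypotheses ``the edges of $B$ meeting the box are cut'' and ``$u$ is the only vertex of $B$ inside the box'' are preserved along the chain. Neither is guaranteed: in Tree-Build, after the quad-tree split each child's box is \emph{shrunk} to the smallest hypercube containing its input points, so $R(v_j)$ is in general a strict sub-box of $R'(v_j)$ that need not have the $\Omega$-facing corner of $R(v_{j-1})$ as a vertex and need not contain $u$ at all (it may contain no vertex of $B$). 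Consequently your key lemma, phrased as ``$v_j$ is non-splittable iff $u$ lies on the $\Omega$-side of the center $c_j$ of $R(v_j)$,'' cannot be invoked at lower levels, because its implicit hypothesis $u \in R(v_j)$ can fail even though the loop continues (Step~1 only requires $R(v_j) \cap B \neq \emptyset$).

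The argument can be repaired by replacing ``$u \in R(v_j)$'' with the invariant that actually propagates. Since every edge of $B$ incident to $u$ is cut by $R(v)$, the box $B$ extends beyond the $\Omega$-face of $R(v)$ in every coordinate, hence beyond the corresponding faces of every $R(v_j) \subseteq R(v)$; therefore $B \cap R(v_j)$, whenever nonempty, is an axis-aligned box that contains the $\Omega$-facing corner of $R(v_j)$. If this box crossed a center hyperplane of $R(v_j)$ in some coordinate, it would meet two sub-boxes and $v_j$ would be splittable; so whenever the loop proceeds from $v_j$, the intersection lies in the $\Omega$-quadrant of $R(v_j)$, which is then the unique sub-box touching $B$ and carries the label $i'$ (the same label as the sub-box of $R(v)$ containing $u$, by your top-level argument, which is valid since there $u \in R(v)$ is given). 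With this reformulated invariant, your induction, together with the tie-breaking convention you mention for the degenerate case of $u$ lying on a splitting hyperplane, does give $v_l \in P_I(v)$.
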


\begin{claim}
For any node $v'$ lying strictly between $v$ and $v_l$ in $P_I(v)$, $v'$ will not satisfy the conditions ({\em i.e.,} in the ``if'' parts) in Steps 1 to 3 of SlowFind.
\end{claim}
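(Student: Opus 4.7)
The plan is to identify the loop's iterates $v_1, v_2, \ldots, v_l$ from Step~5 of SlowFind with the consecutive nodes of $P_I(v)$ descending from the $i'$-child of $v$ down to $v_l$, and then to read off the three required properties from the mere fact that the loop did not terminate at any intermediate iterate. Since each $v_{j+1}$ is by construction the unique child of $v_j$ in $T$ whose box intersects $B$, to identify this child with the $i'$-child of $v_j$ I will propagate a downward invariant: the vertex $u$ of $B$ that lies inside $R(v)$ continues to lie in the quad-tree sub-box $R'(v_{j+1})$, and hence $B \cap R(v_j)$ is confined to the $i'$-labeled sub-box of $R(v_j)$'s decomposition, forcing $v_{j+1}$ to be the $i'$-child of $v_j$. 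Combined with Claim~1, which already places $v_l$ on $P_I(v)$, this identifies the entire loop trace with an initial segment of $P_I(v)$.

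With this identification in place, let $v'$ be any node strictly between $v$ and $v_l$ on $P_I(v)$. Then $v' = v_j$ for some $1 \le j < l$, and since the loop did not return when Steps~1--4 were executed on $v_j$, none of the ``if'' conditions of Steps~1--3 of SlowFind can have been satisfied at $v_j$. Concretely, Step~1 failing gives $R(v') \cap B \neq \emptyset$, Step~2 failing gives $S(v') > \norm{q_c - L(v')}\Delta^{-1}(\epsilon)/(3d)$, and Step~3 failing gives that $v'$ is non-splittable. These are exactly the three conclusions asserted by the claim.

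The main obstacle is justifying the label-persistence invariant, since the tightened bounding box $R(v_j)$ is strictly contained in the quad-tree sub-box $R'(v_j)$, so the vertex $u$ need not itself lie in $R(v_j)$. I would argue, using the standing assumption that $R(v)$ cuts every edge of $B$ meeting it together with the non-splittability of every $v_{j-1}$ (which confines $B$'s intrusion to a single sub-box at each step), that the portion $B \cap R(v_j)$ sits in the corner of $R(v_j)$ matching the octant of $u$; this corner still carries the label $i'$ under $R(v_j)$'s own quad-tree decomposition because the labeling depends only on the relative position of a sub-box within its parent, and $u$'s relative corner is preserved as the boxes shrink.
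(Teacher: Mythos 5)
Your reduction is the right one, and since the paper offers no proof of these claims beyond ``can be easily verified,'' it is essentially the intended verification: once the loop trace $v_1,\ldots,v_{l-1}$ of Step~5 is identified with the nodes of $P_I(v)$ strictly between $v$ and $v_l$, the claim is immediate from the fact that Steps~1--4 did not return at those iterates (and Claim~1 then comes for free rather than being an input). The gap is in your justification of the label-persistence invariant, which is the entire substance of the matter. The clause ``$u$'s relative corner is preserved as the boxes shrink'' is not a valid freestanding fact: $R(v_j)$ is the shrunken bounding box of the input points in $v_j$, its center is dictated by the data, not by $u$, so the octant of $R(v_j)$ containing $u$ can differ from the octant of $R(v)$ containing $u$; similarly, that ``$B\cap R(v_j)$ sits in the corner of $R(v_j)$ matching the octant of $u$'' does not follow from the parent's non-splittability, which is what you invoke. (In configurations where $u$'s octant does change, $v_j$ turns out to be splittable, so the loop stops there and the invariant survives---but that is precisely what must be proved, and your sketch assumes it.) Even the opening step, ``$u$ continues to lie in $R'(v_{j+1})$,'' is a consequence of the same forcing argument rather than something available up front.

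Here is how to close it. After reflecting coordinates, assume $u$ is the maximal corner of $B=\prod_k[b_k,u_k]$, and write $R(v)=\prod_k[\ell_k,h_k]$. Since we are in the case that no edge of $B$ lies inside $R(v)$, and $v$ is non-splittable with $B\cap R(v)\neq\emptyset$, exactly one vertex of $B$ lies in $R(v)$, namely $u$; looking at the vertex adjacent to $u$ in coordinate $k$ (all of whose other coordinates lie in the respective extents of $R(v)$) gives $b_k<\ell_k$ for every $k$. Consequently, for \emph{any} box $R''=\prod_k[\ell''_k,h''_k]\subseteq R(v)$ with $B\cap R''\neq\emptyset$ we have $B\cap R''=\prod_k[\ell''_k,\min(u_k,h''_k)]$, which contains the minimal corner of $R''$; note this needs neither $u\in R''$ nor any property of $R''$'s parent. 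Now apply this with $R''=R(v_j)$ and use non-splittability of $v_j$ \emph{itself} (available because the loop continued past $v_j$): the unique sub-box of $R(v_j)$ meeting $B$ must be the minimal-corner sub-box, whose label is the same at every level. Applying the same statement at $v$, that sub-box also contains $u$, so this common label is exactly $i'$. Hence $v_{j+1}$ is the $i'$-child of $v_j$ for every $j<l$, the loop trace is the initial segment of $P_I(v)$ below $v$, and your concluding paragraph then yields the claim as stated.
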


\begin{claim}
If $v''$ is a proper descendant of $v_l$ in $P_I(v)$, then
$v''$ must satisfy the condition in at least one of the first three steps of SlowFind. Furthermore, if $v_{l}$ is splittable and $R(v'')$ intersects $B$, then $v''$ is also splittable.   
\end{claim}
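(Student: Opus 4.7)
To prove Claim 3, I would case-split on which of Steps 1--4 of SlowFind halted the iteration at $v_l$. In every case the key geometric input is Claim 1: each proper descendant $v''$ of $v_l$ in $P_I(v)$ is reached from $v_l$ by taking successive $i'$-children, so $R(v'')$ is contained in the $i'$-sub-box of $R(v_l)$, and in particular $R(v'')\subseteq R(v_l)$. The Step 4 case is vacuous: if Step 4 fires, the $i'$-sub-box of $R(v_l)$ holds no input point, so $v_l$ has no $i'$-child and $P_I(v)$ terminates at $v_l$. The Step 1 case is immediate: $R(v_l)\cap B=\emptyset$ together with $R(v'')\subseteq R(v_l)$ gives $R(v'')\cap B=\emptyset$ and Step 1 fires at $v''$.

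For the Step 2 case, with $S(v_l)\le c\,\norm{q_c-L(v_l)}$ and $c=\Delta^{-1}(\epsilon)/(3d)$, I use that $L(v'')$ is an input point of $v_l$ and hence lies in $R(v_l)$. The triangle inequality together with the diameter bound $\norm{L(v_l)-L(v'')}\le\sqrt{d}\,S(v_l)\le\sqrt{d}\,c\,\norm{q_c-L(v_l)}$, which is strictly smaller than $\norm{q_c-L(v_l)}$ under the standing assumption $\Delta^{-1}(\epsilon)<1/2$, yields $\norm{q_c-L(v'')}\ge(1-\sqrt{d}\,c)\norm{q_c-L(v_l)}$. Combined with $S(v'')\le S(v_l)/2$ (since $v''$ lies strictly below $v_l$ in $T$), Step 2's inequality propagates to $v''$.

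The main case is Step 3, which also delivers the ``furthermore'' clause. My plan is to show that under the ``cuts every edge'' hypothesis the $i'$-sub-box of $R(v_l)$ lies in $B$; this forces $R(v'')\subseteq B$, so every sub-box of $R(v'')$ meets $B$ in positive measure, making $v''$ splittable. The geometric reasoning: since no edge of $B$ passes through $R(v_l)$ (a property inherited from $R(v)$), the only ways $v_l$ can be splittable are either (a) $R(v_l)\subseteq B$ outright, in which case the $i'$-sub-box is trivially in $B$, or (b) $R(v_l)$ straddles $\partial B$ near the unique vertex $u$ of $B$ that it contains, with $R(v_l)$'s own center in the interior of $B$---for otherwise the unique sub-box meeting $B$ would be the one containing $u$, contradicting splittability. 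In subcase (b), the label $i'$ encodes the direction from $u$ into the interior of $B$, so the $i'$-sub-box of $R(v_l)$---the quadrant of $R(v_l)$ on the $B$-side of its center---sits inside $B$, using the size lower bound $\Omega(n^{1/t}r_{min})$ on $B$'s edge length against the edge length of $R(v_l)\subseteq R(v)$, which is bounded because $R(v)$ contains only one vertex of $B$.

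The main obstacle I anticipate is precisely the geometric bookkeeping in subcase~(b): verifying that splittability plus the ``cuts every edge'' regime forces the center of $R(v_l)$ into the interior of $B$, and that the $i'$-sub-box is then pinned inside $B$ by the size comparison. Secondary technicalities include fixing a tie-breaking convention when $u$ lies on the shared boundary of several sub-boxes, so that the ``$i'$-child'' is unambiguous throughout the induction used in Claim 1.
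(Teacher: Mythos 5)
Your Step~1, Step~2 and Step~4 cases are fine (in the Step~2 case the constants do close, since $\sqrt{d}\cdot\Delta^{-1}(\epsilon)/(3d)\le 1/6$, so the halving $S(v'')\le S(v_l)/2$ absorbs the loss in $\norm{q_c-L(v'')}$). The gap is in the Step~3 case, which is exactly the case that carries the ``furthermore'' clause. Your dichotomy is false: splittability of $v_l$ does not force the center of $R(v_l)$ into the interior of $B$, and the $i'$-sub-box of $R(v_l)$ need not lie inside $B$. Concretely, in the plane let $B=[0,10]^2$ with $u=(0,0)$, so $i'$ is the ``upper-right'' label, and let $R(v_l)=[-0.6,0.4]\times[-0.05,0.95]$. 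Then $u\in R(v_l)$ and exactly the two right-hand quadrants of $R(v_l)$ meet $B$, so $v_l$ is splittable (and the scales can be arranged so that Steps~1--2 do not fire at $v_l$), yet its center $(-0.1,0.45)$ lies outside $B$ and its upper-right quadrant $[-0.1,0.4]\times[0.45,0.95]$ protrudes past the facet $x=0$ of $B$. The flaw is the step ``otherwise the unique sub-box meeting $B$ would be the one containing $u$'': $B\cap R(v_l)$ is the box spanned by $u$ and the corner of $R(v_l)$ pointing into $B$, so it stays inside $u$'s quadrant only when the center of $R(v_l)$ fails to dominate $u$ in \emph{every} coordinate; if it fails in some coordinates but not in others, as above, $v_l$ is splittable while its center is outside $B$. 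Consequently $R(v'')\subseteq B$ is not available, and your mechanism for the ``furthermore'' clause collapses: a descendant $v''$ inside the $i'$-sub-box can intersect $B$ while sticking out of it.

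The claim itself is true, but it needs a different observation. Since $R(v)\supseteq R(v_l)$ contains only the vertex $u$ of $B$, the box $R(v_l)$ cannot reach the far facets of $B$; hence when $u\in R(v_l)$, $B\cap R(v_l)$ is the box spanned by $u$ and the extreme corner of $R(v_l)$ toward the interior of $B$, and splittability of $v_l$ forces $u$ to lie \emph{outside} the $i'$-sub-box of $R(v_l)$. As $R(v'')$ is contained in that sub-box, $R(v'')$ contains no vertex of $B$ (this also holds trivially if $u\notin R(v_l)$; and if $R(v_l)\subseteq B$ then $R(v'')\subseteq B$ and you are done as you argued). Now use the elementary fact that an axis-aligned box which intersects $B$ but contains no vertex of $B$ is splittable: either it lies inside $B$, so all $2^d$ sub-boxes meet $B$, or the lowest-dimensional face of $B$ meeting it has dimension at least one and, having no boundary point inside the box, spans the box's full extent in each coordinate parallel to that face, so $B$ meets two sub-boxes on opposite sides of a mid-hyperplane. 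Thus if $R(v'')$ meets $B$ then Step~3 fires at $v''$, which gives the ``furthermore'' clause, and otherwise Step~1 fires; combined with your Step~1, 2 and 4 cases this is what the claim actually requires.
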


Based on the above claims, we can perform a binary search on the $i'$-path $P_I(v)$ and find $v_{l}$ in $O(\log n)$ time. To do this, we need to prepare a data structure in the preprocessing.
The data structure stores every $i$-path of the aggregation-tree $T$, for $i=1, \ldots, 2^{d}$, in an array, and for every node $v$, stores a pointer pointing to the location of $v$ in each path containing $v$. Clearly, this data structure can be constructed in $O(n)$ time and space. To search for $v_{l}$, we just need to first find the  $i$-path $P_{I}(v)$, and use the three claims above to do binary search for $v_{l}$ on $P_{I}(v)$ ({\em i.e.,} use the conditions in Steps 1 to 3 to decide whether each searched node is an ancestor or descendant of $v_{l}$).

Next, we consider the case in which at least one edge of $B$ is fully contained in $R(v)$.
First, we give the following easy observations for any node $v_{0}$ in the aggregation-tree $T$. 
\begin{enumerate}
\item
If $R(v_0)$ does not fully contain an edge $e$ of $B$, then
for any descendant $v'$ of $v_0$ in $T$, $R(v')$ does not fully contain $e$.

\item
If $R(v_0)$ fully contains an edge $e$ of $B$, then
there is at most one child of $v_0$, say $v'$, whose $R(v')$ fully contains $e$.

\item
If $R(v_0)$ fully contains an edge $e$ of $B$, then
for any ancestor $v_a$ of $v_0$ in $T$, $R(v_{a})$ must fully contain $e$.
\end{enumerate}

By the above observations, we know that if $R(v)$ fully contains an edge $e$ of $B$, then all nodes $v'$ of $T$ whose $R(v')$ fully contains $e$ 
form a path in the aggregation-tree $T$ which starts at the root of $T$, reaches $v$, and may continue on some of $v$'s descendants. Clearly, it takes only $O(1)$ time to decide whether an edge $e$ of $B$ is fully contained in $R(v')$ for any node $v'$.
Let $Y(v)=\{e_1, e_2, \ldots, e_m\}$ be the set of edges of $B$ fully contained in $R(v)$, and $Z(v)$ be the path formed by the nodes in the aggregation-tree $T$ (starting at the root) whose corresponding boxes fully contain all edges in $Y(v)$. 

Since $|Y(v)|$ is a constant, for any descendant node $v'$ of $v$ in the aggregation-tree $T$, it is possible to decide in $O(1)$ time whether $v' \in Z(v)$.
Let $X(v)$ be the last node of $Z(v)$.
Then we have the following lemma.

\begin{lemma}
\label{lem-xvlog}
There is a data structure which can be pre-processed in $O(n\log n)$ time and $O(n)$ space, and can be used to find $X(v)$ in $O(\log n)$ time.
\end{lemma}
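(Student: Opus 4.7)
The plan is to split the computation of $X(v)$ into $O(1)$ independent single-edge searches, combine them with one LCA query on $T$, and use a heavy-path decomposition augmented with fractional cascading to meet the $O(\log n)$ per-query budget.

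Step 1 (reduction to single-edge queries plus one LCA). For each axis-aligned segment $e \in Y(v)$, let $D(e)=\{w\in T : R(w)\supseteq e\}$. By Observations~2 and~3 preceding the lemma, $D(e)$ is a single root-to-node path of $T$ passing through $v$; let $w_e$ denote its deepest node. The paths $D(e)$ for different $e\in Y(v)$ all contain the root-to-$v$ prefix, and the intersection of two such paths is again a prefix of each. Hence $Z(v)=\bigcap_{e\in Y(v)}D(e)$ is the root-to-$\operatorname{LCA}(\{w_e\})$ path, giving
\[
X(v)\;=\;\operatorname{LCA}\bigl(\{w_e : e\in Y(v)\}\bigr).
\]
Since $|Y(v)|\le d\cdot 2^{d-1}=O(1)$ and a Harel--Tarjan LCA structure can be built on $T$ in $O(n)$ time and $O(n)$ space answering $O(1)$-time queries, the combination step costs $O(1)$ per query. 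It therefore suffices to compute each individual $w_e$ in $O(\log n)$ time.

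Step 2 (searching for one $w_e$ via heavy paths). I would build a heavy-path decomposition of $T$ in $O(n)$ time and $O(n)$ space. Because every $R(v_i)$ is nested inside $R(v)$ by Algorithm~\ref{alg-4}, the predicate ``$R(w)\supseteq e$'' is monotone along every heavy path: once it fails at a node $w$, it fails at every descendant of $w$ that stays on the path. Storing each heavy path as an array sorted by depth, and storing, for each heavy path, the sorted lists of the $2d$ boundary coordinates of the $R$-boxes on it, the deepest node of the path at which the predicate still holds can be located by binary search using $O(1)$ coordinate comparisons per step against the $2d$ endpoint coordinates of $e$. This yields $O(\log n)$ per heavy path, and since descending along $D(e)$ from $v$ crosses $O(\log n)$ heavy paths (each light-edge transition at least halves the remaining subtree size), a plain implementation already gives $O(\log^{2} n)$ per edge.

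Step 3 (fractional cascading to shave the last $\log n$). To reach $O(\log n)$ per edge I would overlay a fractional-cascading catalog on the hierarchical tree of heavy paths. The key point is that along the entire descent a fixed $O(1)$-sized set of query values---the $2d$ endpoint coordinates of $e$---is searched in a sequence of sorted coordinate lists, one per heavy path, linked through the heavy/light edges of $T$; after the first $O(\log n)$-time binary search, each subsequent path search becomes $O(1)$ amortized. The catalog can be precomputed in $O(n\log n)$ time and $O(n)$ space, fitting the lemma's budget. Combining this with the $O(1)$ LCA query in Step~1 yields $O(\log n)$ per $X(v)$ query.

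The principal obstacle is precisely Step~3: verifying that the searches triggered while walking down $D(e)$ really fit the fractional-cascading framework. Two properties must be certified---that the query values are a single $O(1)$-sized set independent of which heavy path is being searched, and that the sorted lists on successive heavy paths are linked in a bounded-degree catalog graph (the tree of heavy paths). Both follow from monotonicity of the $R$-boxes along tree edges, but the careful bookkeeping---making sure that the ``shrunken'' bounding boxes $R(v)$ rather than the quad-tree boxes $R'(v)$ behave monotonically in the required way, and that predecessor queries on each coordinate axis compose correctly through light-edge transitions---is where I expect the main technical effort.
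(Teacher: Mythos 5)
Your Steps 1 and 2 are sound and essentially follow the paper's route: the reduction $X(v)=\mathrm{LCA}(\{w_e : e\in Y(v)\})$ is a correct (and rather clean) consequence of the observations stated before the lemma, an $O(n)$-preprocessing LCA structure fits the budget, and the heavy-path decomposition you use is exactly the paper's ``majority path'' decomposition, with the per-path binary search and the light-edge halving argument correctly giving $O(\log^2 n)$ per query. The lemma, however, asks for $O(\log n)$, and that is exactly where your argument has a genuine gap.

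The problem is Step 3. Fractional cascading only yields $O(1)$ amortized time per additional catalog when the catalog graph has \emph{locally bounded degree}, and your catalog graph --- the tree of heavy paths --- does not: the degree of a heavy path is the number of light children hanging off it, which can be $\Theta(n)$ (think of a caterpillar-shaped aggregation tree). This is a structural property of the decomposition and has nothing to do with monotonicity of the $R$-boxes, so the two properties you defer to ``careful bookkeeping'' cannot both be certified; the second one is simply false in general. The standard repair --- replacing each high-degree catalog vertex by a balanced tree of dummy vertices --- makes the degree constant but lengthens the traversed path by $\Theta(\log(\mathrm{degree}))$ per light-edge transition, and since the degrees along the descent can stay polynomial this can re-introduce a $\Theta(\log^2 n)$ cost; making those dummy trees biased by subtree size would fix it, but at that point you have reconstructed, with heavier machinery, what the paper does directly. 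The paper's proof of Lemma~\ref{lem-xvlog} needs no cascading of search keys at all: since $|Y(v)|=O(1)$, membership of a node in $Z(v)$ is testable in $O(1)$ time, and for each majority path it builds a \emph{weighted} binary search tree whose node weights are the sizes of the subtrees hanging off the path. The search in the $i$-th visited path then costs $O(1+\log(W_i'/W_i))$, where $W_i'$ is the total weight of that path's tree and $W_i$ the weight of the node found, and because $W_{i+1}'\le W_i$ the sum telescopes to $O(\log n)$ over the $O(\log n)$ visited paths, within $O(n\log n)$ preprocessing time and $O(n)$ space. To close your gap you would need either this weighted-search telescoping argument or a genuinely bounded-degree catalog graph; as written, Step 3 does not deliver the claimed $O(\log n)$ bound.
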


\begin{proof}
First, we describe the data structure for the search.
Consider the following procedure for partitioning the aggregation-tree $T$ into a set of chains.
Starting at the root of $T$, we walk down the tree by always choosing the child whose subtree has the largest number of nodes.
When a leaf node is reached, the path that we just walked is one of the chains to be produced.
Now if we take out the chain from the tree, the tree will be split into a set of subtrees.
Recursively perform the procedure on each of the subtrees.
We call the resulted chains the {\em majority paths} (see Fig.~\ref{fig-mp}). For each node in the aggregation-tree $T$, we assume that there is a pointer pointing to its location in the majority path containing it.

\begin{figure}
\center
\includegraphics[height=2.8in]{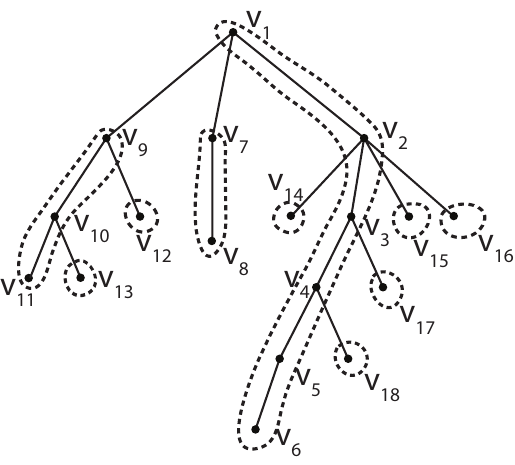}
\caption{An example of the majority path decomposition. Each majority path is enclosed by a dashed curve.}
\label{fig-mp}
\end{figure}

For a path $Z(v)$ where we want to find the tail $X(v)$,
the majority path decomposition decompose it into sub-paths.
Each sub-path ${v_1,v_2,\ldots,v_t}$ is the intersection of $Z(v)$ and some majority path $P_{m}$.
By performing a binary search on $P_m$,
the tail of the sub-path, $v_t$, can be identified.(Details will be shown below.)
Then either $v_t$ is $X(v)$,
or after $v_t$ the path $Z(v)$ enter another sub-path which is
the intersection of $Z(v)$ and another majority path.
We repeat the above process on the new sub-path until we find $X(v)$.
To make the strategy possible it suffices to build a binary search data structure for every majority path.

For any node $v'$ of the aggregation-tree $T$, let $T(v')$ denote the subtree of $T$ rooted at $v'$. For every majority path $P_{m}$, we build a binary tree $T_{P_{m}}$ for its nodes, say ${v_1,v_2,\ldots,v_m}$.
First we assign a weight to each node $v_i$. Let $\{v_1', v_2' \ldots, v_k'\}$ be all children of $v_i$ that are not in the majority path $P_{m}$.
The weight of $v_i$ is $1$ plus the total size of $T(v_1'), T(v_2'), \ldots, T(v_k')$, where the size of a subtree is the number of its nodes. 
To build the binary tree $T_{P_{m}}$ for ${v_1,v_2,\ldots,v_m}$, we first find the weighted median node $v_{j'}$ and
make $v_{j'}$ the root of $T_{P_{m}}$; then recursively build a subtree for ${v_1,v_2,\ldots,v_{j'-1}}$ and let its root be the left child of $v_{j'}$;
also recursively build a subtree for ${v_{j'+1},v_{j'+2},\ldots,v_m}$ and let its root be the right child of $v_{j'}$.

Let $Z'(v')$ be the sub-path of $Z(v)$ starting at node $v' \in Z(v)$ and ending at the last node $X(v)$ of $Z(v)$. Consider the following FindTail procedure.

\begin{algorithm}[h]
\caption{FindTail($T_s$, $v_s$)}
\textbf{Input:} A subtree $T_s$ of the aggregation-tree $T$ with its root being the head of
a majority path in $T$. A node $v_s$ of $T$ in $Z(v)$.\\
\textbf{Output:} $X(v)$.
\begin{algorithmic}[1]

\State{Let $T_{P_{m}(v_{s})}$ be the binary tree built for the majority path $P_{m}(v_{s})$ containing $v_{s}$. 
Conduct a binary search on $T_{P_{m}(v_{s})}$ 
to find the last node $w$ in the majority path $P_{m}(v_{s})$ which also appears in $Z'(v_s)$.}

\State{If all children of $w$
in $T$
are not in $Z(v)$, return $w$ as the last node $X(v)$ of $Z(v)$.}

\State{Otherwise, exactly one child of $w$, say $w_Z$, is in $Z(v)$. Call FindTail($T(w_Z)$, $w_Z$).}

\end{algorithmic}

\end{algorithm}

(Note: Input $T_s$ above is for purpose of analysis. When FindTail($T_s$, $v_s$) is called,
it means $X(v)$ is found to be in $T_s$ and FindTail will search in $T_s$ for $X(v)$.
However $T_s$ is not actually used in the procedure.)

Call FindTail($T$, $v$) to find the last node $X(v)$ of $Z(v)$.

As stated earlier, for any descendant node $v'$ of $v$ in the aggregation-tree $T$, it takes
$O(1)$ time to decide whether $v' \in Z(v)$.  Using this as a basic decision operation,
in the procedure FindTail above, the binary search in Step 1 is performed as follows. Start at the root $v_{r,s}$ of the binary tree $T_{P_{m}(v_{s})}$. If $v_{r,s}$ is a $Z(v)$ node and is also
the last $Z(v)$ node in the majority path $P_{m}(v_{s})$, then we are done with the binary search on $T_{P_{m}(v_{s})}$. 
If $v_{r,s}$ is a $Z(v)$ node but is not the last $Z(v)$ node in the majority path $P_{m}(v_{s})$
({\em i.e.,} this can be decided by checking whether the child $v'$ of $v_{r,s}$ in $T$ along the
path $P_{m}(v_{s})$ appears in $Z(v)$), then
search recursively on the right child of $v_{r,s}$ in the binary tree $T_{P_{m}(v_{s})}$.
If $v_{r,s}$ is not a $Z(v)$ node, then search recursively on the left child of $v_{r,s}$ in the binary tree.
From this, it is clear that calling FindTail($T$, $v$) will eventually find the last node $X(v)$ of $Z(v)$.

Clearly, in the procedure FindTail($T_s$, $v_s$),
$T(w_Z)$ is at most of half the size of $T_s$ due to the property of a majority path.
Thus, after each recursion of FindTail,
the search space is reduced by at least half of the size.
Since the aggregation-tree $T$ has $O(n)$ nodes, FindTail takes $O(\log n)$ recursions.

Let $W$ be the weight of the node $w$ ($w$ is found in Step 1 of FindTail($T_s$, $v_s$))
and $W'$ be the size of $T_s$. Clearly, $W'$ is equal to the total weight of
nodes of
$T_{P_{m}(v_{s})}$ (since the root of $T_{s}$ is the head of $P_{m}(v_{s})$). 
It is easy to see that the binary search in Step 1 takes $O(1) + O(\log \frac{W'}{W})$ time.
Also, $W$ is larger than the size of $T(w_Z)$.
FindTail($T$, $v$) produces a sequence of recursive calls.
Let $W_1, W_2, \ldots, W_a$ and $W_1',W_2', \ldots, W_a'$ be the values of $W$ and $W'$, respectively, in the sequence of FindTail calls, sorted by the time of the calls.
Note that $a = O(\log n)$ by the above discussion. Then the running time of FindTail($T$, $v$) is $$a \times O(1) + O(\log \frac{W'_1}{W_1} + \cdots + \log \frac{W'_a}{W_a})$$
 $$= O(\log n) + O(\log W'_1 - \log W_1 + \log W'_2 - \log W_2 + \cdots + \log W'_a - \log W_a)
\leq O(\log n) + O(\log W'_1).$$
The last inequality follows  from the fact that $W'_{i+1} \leq W_i$.
Since $W_1$ is the size of $T$, which is $O(n)$, it then follows that the running time of FindTail($T$, $v$) is $O(\log n)$.

The time and space of the preprocessed data structure are clearly $O(n \log n)$ and $O(n)$, respectively. Thus the lemma  is true.
\qed
\end{proof}

\begin{lemma}
\label{lem-xv}
If $Y(v)$ is not empty, then $v_l$ is either $X(v)$ or a descendant of $X(v)$ in $T$.
\end{lemma}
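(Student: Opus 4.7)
My plan is to argue by contradiction that the chain $v_1,v_2,\ldots,v_l$ from Step~5 of SlowFind cannot terminate at a node that is strictly above $X(v)$ on $Z(v)$. Note first that $Y(v)\neq\emptyset$ trivially puts $v$ on $Z(v)$, so either $v=X(v)$---in which case $v_1$, being a child of $v$, is already a descendant of $X(v)$ and so is $v_l$---or $X(v)$ is a proper descendant of $v$. Assume the latter.

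The first step is to show the chain follows $Z(v)$ downward. Suppose $v_i\in Z(v)$ is non-splittable and $v_i\neq X(v)$, and let $v_i^*$ denote the unique child of $v_i$ on $Z(v)$. Because $Y(v)\subseteq R(v_i^*)\subseteq R'(v_i^*)$, the sub-box $R'(v_i^*)$ meets $B$; non-splittability then pins it as the unique sub-box of $R(v_i)$ meeting $B$, so $v_{i+1}=v_i^*\in Z(v)$. Applying this inductively starting with $v$ itself yields $v_1,v_2,\ldots\in Z(v)$ so long as we stay strictly above $X(v)$ and each chain element is non-splittable.

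The heart of the argument is a geometric claim: every $v'\in Z(v)$ with $v'\neq X(v)$ is in fact non-splittable. Letting $v^*$ be the child of $v'$ on $Z(v)$, we have $Y(v)\subseteq R'(v^*)$, a single sub-box of the hypercube $R(v')$ of edge length $S(v')/2$; thus each edge of $B$ of length $S_B$ appearing in $Y(v)$ fits inside $R'(v^*)$, forcing $S_B\le S(v')/2$. For every coordinate $i$ in which $B$ lies entirely inside $R(v')$, an axis-$i$ edge of $Y(v)$ pins $B_i$ into $R'(v^*)_i$. For every coordinate $j$ in which $B$ sticks out of $R(v')$, the in-side endpoint $b_j^*$ satisfies $b_j^*>R(v')_j^{\max}-S_B\ge R(v')_j^{\mathrm{mid}}$ (or the symmetric inequality when $B$ sticks out on the low side), placing $b_j^*$, and hence $B_j\cap R(v')_j$, in the same half of $R(v')_j$ as $R'(v^*)_j$. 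Taking the Cartesian product over all $d$ coordinates yields $B\cap R(v')\subseteq R'(v^*)$, so $B$ meets only one sub-box of $R(v')$ and $v'$ is non-splittable.

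It remains to rule out each return condition of Steps~1--4 at a hypothetical terminal $v_l\in Z(v)$ with $v_l\neq X(v)$. Step~1 fails since $R(v_l)\supseteq Y(v)$ meets $B$. Step~3 fails by the non-splittability just proved. Step~4 fails because the unique sub-box of $R(v_l)$ meeting $B$ is $R'(v_l^*)$, which contains the input points of $v_l^*$. For Step~2, $v_l\in Z(v)$ forces $S(v_l)\ge S_B$, while $L(v_l)\in R(v_l)$ together with $R(v_l)\cap B\neq\emptyset$ gives $\norm{q_c-L(v_l)}\le \sqrt{d}\,S_B/2+\sqrt{d}\,S(v_l)$; substituting into the small-enough inequality and rearranging collapses it to $S_B\le \Delta^{-1}(\epsilon)S_B/(5\sqrt{d})$, which is impossible under the standing assumption $\Delta^{-1}(\epsilon)<1/2$. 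Hence the loop cannot terminate strictly above $X(v)$, and $v_l$ must be $X(v)$ itself or a descendant reached after the chain leaves $Z(v)$ at $X(v)$. The main obstacle is the geometric non-splittability claim in the third paragraph, for which the hypercube shape of $R(v')$ and the half-space pinning for stuck-out coordinates are both essential.
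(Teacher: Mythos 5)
Your argument is correct in substance, but its geometric heart is genuinely different from the paper's. Both proofs share the same skeleton: show that at a node of $Z(v)$ strictly above $X(v)$ none of the return conditions of Steps 1--4 of SlowFind can fire, so the Step-5 chain tracks $Z(v)$ downward and can only stop at $X(v)$ or below (the paper works backward from the terminal node $v_l$ via its last $Z(v)$-ancestor, you induct forward along the chain; this difference is cosmetic). Where you really diverge is on non-splittability: the paper argues by contradiction that a splitting facet (of edge length $S(v_l)/2 \ge S_B$) would cut an edge of $B$ not in $Y(v)$, which would then have to pass through a sub-box and hence be longer than $S(v_l)/2$; you instead prove the stronger containment $B\cap R(v')\subseteq R'(v^*)$ coordinate by coordinate. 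Your route buys an explicit identification of the unique sub-box meeting $B$, which is precisely what lets you dismiss the Step-4 return (sub-box with no input point) -- a case the paper's proof passes over silently -- and it avoids the somewhat delicate ``facet must meet an edge of $B$'' and ``passes through'' reasoning; the paper's facet argument is shorter if one only wants splittability ruled out. Two points in your write-up should be tightened. First, the claim ``every $v'\in Z(v)$ with $v'\neq X(v)$ is non-splittable'' is false for proper ancestors of $v$, where the set of edges of $B$ inside $R(v')$ can strictly contain $Y(v)$ and the Case-A pinning breaks down; state and use it only for nodes of $Z(v)$ at or below $v$ (all your induction needs), for which the edges of $B$ inside $R(v')$ are exactly $Y(v)$. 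Second, the existence of an axis-$i$ edge of $Y(v)$ for each coordinate $i$ with $B_i\subseteq R(v')_i$ is asserted, not proved; it follows from a one-line corner swap: take any $e\in Y(v)$ with axis $i_0$, let the new edge vary along axis $i$, reuse $e$'s corner values in coordinates other than $i,i_0$, and pick either endpoint of $B_{i_0}$ for coordinate $i_0$ (both lie in $R(v')_{i_0}$); the resulting edge of $B$ lies in $R(v')\subseteq R(v)$ and hence in $Y(v)$. With these repairs the proof is complete; your Step-2 computation collapsing the ``small enough'' test to an impossibility under $\Delta^{-1}(\epsilon)<1/2$ is sound even though your constant differs slightly from the paper's.
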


\begin{proof}
First, we show that $v_l$ is either a node in $Z(v)$ or a descendant of $X(v)$. Suppose this is not the case. Let $v_P$ be the last node of $Z(v)$ such that $v_P$ is an ancestor of $v_l$ in $T$.
Since $v \in Z(v)$ is an ancestor of $v_l$, such a node $v_P$ must exist. Since $v_P$ is not $X(v)$ and $v_P$ is not $v_l$, there are two distinct children of $v_P$, say $v_1$ and $v_2$, such that $v_1$ is $v_l$ or an ancestor of $v_l$, and $v_2$ is $X(v)$ or an ancestor of $X(v)$. This means that both $R'(v_1)$ and $R'(v_{2})$ intersect $B$ (by the definitions of $v_l$ and $X(v)$). 
But this contradicts with the fact that $v_{P}$ is non-splittable. 

Next, we show that if $R(v_l)$ is disjoint with $B$ or $S(v_l) \leq \norm{q_c - L(v_l)}\Delta^{-1}(\epsilon)/(3d)$, then $R(v_l)$ does not fully contain all edges in $Y(v)$. The case where $R(v_l)$ is disjoint with $B$ is trivial. Thus we focus only on the case of $S(v_l) \leq \norm{q_c - L(v_l)}\Delta^{-1}(\epsilon)/(3d)$. Suppose by contradiction $R(v_l)$ fully contains all edge in $Y(v)$. Then 
the edge length of $R(v_l)$ is larger than that of $B$.
Recall that $\Delta^{-1}(\epsilon)$ is set to be no bigger than $1/2$.  It is impossible that the edge length of $R(v_l)$ is larger than that of $B$ and also satisfies the inequality $S(v_l) \leq \norm{q_c - L(v_l)}\Delta^{-1}(\epsilon)/(3d)$. The reason is the following.  Since $R(v_l)$ intersects $B$, $\norm{q_c - L(v_l)}$ is no larger than the edge length $D(B)$ of $B$ plus the diameter of $R(v_l)$.
From this, we know that $S(v_l) \leq \norm{q_c - L(v_l)}\Delta^{-1}(\epsilon)/(3d)$ implies $$S(v_l) \leq (D(B) + dS(v_l))\Delta^{-1}(\epsilon)/(3d) \leq (S(v_l) + dS(v_l))\Delta^{-1}(\epsilon)/(3d)
\leq (1+d)S(v_l)/(6d).$$
This is impossible for any $d \ge 1$.
Therefore, if $R(v_l)$ is disjoint with $B$ or $S(v_l) \leq \norm{q_c - l(v_l)}\Delta^{-1}(\epsilon)/(3d)$, then $v_l$ is not in $Z(v)$ and must be a descendant of $X(v)$.

Finally, if $R(v_l)$ intersects $B$ and $S(v_l) > \norm{q_c - l(v_l)}\Delta^{-1}(\epsilon)/(3d)$, then $v_l$ must be splittable
(otherwise, $v_{l}$ will not be the last node in Step 5 of SlowFind).
Suppose $v_l$ is not $X(v)$ or a descendant of $X(v)$. Then $v_l$ and one of its children must be in $Z(v)$.
This means that $S(v_l)$ is at least 2 times the edge length of $B$.
Since $v_l$ is splittable, when $R(v_l)$ is divided into $2^d$ sub-boxes (in a quad-tree decomposition), at least one of these sub-boxes has one facet, say $f$, which intersects $B$ and is inside $R(v_{l})$
(i.e. is not part of a face of $R(v_{l})$)
.
The facet $f$ must intersect one of $B$'s edges, because its edge length is no smaller than that of $B$.
Therefore, some edge $e$ of $B$ must be cut after the decomposition. Note that $e$ cannot be any edge in $Y(v)$, since otherwise no child of $v_l$ will fully contain $e$, and this 
 contradicts with the fact that one child of $v_l$ is in $Z(v)$ and fully contains all edges in $Y(v)$.
This also means that $e$ is not entirely in $R(v_l)$.  Therefore, $e$ must pass through one of the $2^d$ sub-boxes of $R(v_{l})$ so that it can be possibly cut. This implies that
 the length of $e$ is larger than half of $S(v_l)$, which is a contradiction. 
Hence, $v_l$ is $X(v)$ or a descendant of $X(v)$.
\qed
\end{proof}

The above lemmas suggest that if $Y(v)$ is not empty, then we can use FindTail to first find $X(v)$ of $Z(v)$. $v_{l}$ is either $X(v)$ or its descendant. If it is the first case, then we have already found $v_{l}$. Otherwise, 
it means that at least one of the edges in $Y(v)$ has been cut while decomposing the box of $X(v)$. Thus, we can first determine the child $v'$ of $X(v)$ which is $v_{l}$ or its ancestor ({\em i.e.,} using the fact that $R'(v')$ intersects $B$). Then we generate a new set of edges, $Y(v')$, of $B$ which are fully contained in $R(v')$. Clearly, the size of $Y(v')$ is reduced by at least $1$ from that of $Y(v)$. If $Y(v')$ is not empty, then we repeat the above procedure to find a new $X(v')$. Since the size of $Y(v)$ is a constant, after a constant number of iterations, it will become zero. At that time, we can use the binary search method on $P_{I}(v'')$ to eventually find $v_{l}$, where $v''$ is the last node from the above process. The total time of the entire process is only $O(\log n)$.  This leads us to the following improved procedure Find($v,q_{c}$) for finding an effective cover.

\begin{algorithm}
\caption{Find($v$, $q_{c}$)}
\textbf{Input:} A node $v$ in the aggregation-tree $T$ and a query point $q_{c}$.\\
\textbf{Output:} Part of an effective cover for $q_{c}$ in the subtree $T(v)$ of $T$.
\begin{algorithmic}[1]

\State{If $R(v)$ does not intersect $B$, return.}

\State{If $R(v)$ is small enough, {\em i.e.}, $ S(v) \leq \norm{q_c - L(v)}\Delta^{-1}(\epsilon)/(3d)$, report $v$ as one of the output nodes, return.}

\State{If $v$ is splittable, call Find($v_{i}$, $q_{c}$) on each of $v$'s children, $v_{i}$, in $T$ that touches $B$, return.}

\State{Let $v_0=v$ and $i=0$. While $Y(v_{i})$ is not empty, do
\begin{enumerate}
\item[a.] Use FindTail to find $X(v_{i})$. If $X(v_{i})$ is $v_{l}$, let $v=X(v_{i})$ and  go to Step 1.
\item[b.] Otherwise, let $v_{i+1}$ be the child of $X(v_{i})$ which is $v_{l}$ or its ancestor. Let $i=i+1$ and continue the while loop.   
\end{enumerate}
Use binary search on $P_{I}(v_{i})$ to find $v_{l}$. Let $v=v_{l}$ and go to Step 1.
}

\end{algorithmic}

\end{algorithm}

\subsection{Algorithm Analysis}

Now we analyze the running time of our assignment algorithm.

\begin{lemma}
\label{lem-step4}
Step 4 of Find$(v,q_{c})$ takes $O(\log n)$ time.
\end{lemma}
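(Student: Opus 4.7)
The plan is to show that Step 4 consists of a constant number of iterations of its while loop, each taking $O(\log n)$ time, followed by one final binary search also taking $O(\log n)$ time.

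First, I would observe that $|Y(v_i)|$ is bounded by a constant throughout. Indeed, $B$ is a $d$-dimensional axis-aligned box, so it has exactly $2d$ edges; hence $|Y(v_i)| \leq 2d = O(1)$. Next, the main combinatorial claim is that each iteration of the while loop strictly decreases the size of $Y(\cdot)$, i.e.\ $|Y(v_{i+1})| < |Y(v_i)|$. This is where I would argue as follows. By the definition of $Z(v_i)$ and $X(v_i)$, the node $X(v_i)$ is the last node on the path whose associated box $R(\cdot)$ fully contains every edge in $Y(v_i)$; in particular, no child of $X(v_i)$ has $R(\cdot)$ fully containing all edges in $Y(v_i)$. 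Since $v_{i+1}$ is chosen to be a child of $X(v_i)$, at least one edge of $Y(v_i)$ is no longer fully contained in $R(v_{i+1})$, which by observation (1) preceding Lemma \ref{lem-xv} remains lost for every descendant. Hence $|Y(v_{i+1})| \leq |Y(v_i)| - 1$, and the while loop terminates after at most $2d+1 = O(1)$ iterations.

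The per-iteration cost is dominated by one invocation of \textbf{FindTail}, which by Lemma \ref{lem-xvlog} runs in $O(\log n)$ time (using the preprocessed majority-path binary search data structure); the remaining bookkeeping in step (b)---identifying the unique child of $X(v_i)$ containing $v_l$ or being $v_l$---is $O(1)$ since we need only check which of the $2^d$ children has its $R'(\cdot)$ intersecting $B$. Thus the while loop contributes a total of $O(1) \cdot O(\log n) = O(\log n)$ time.

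After the while loop terminates with $Y(v_i) = \emptyset$, the algorithm performs a single binary search on the $i'$-path $P_I(v_i)$ to locate $v_l$. By the data structure described just before Lemma \ref{lem-xv} (the array representation of every $i$-path together with pointers from each node to its location), this binary search runs in $O(\log n)$ time using the three claims that justify monotone search on $P_I(v_i)$. Summing the contributions---the $O(1)$ iterations of the while loop at $O(\log n)$ each, plus the final $O(\log n)$ binary search---yields an overall $O(\log n)$ bound for Step 4. The only nontrivial step in this plan is the strict decrease of $|Y(v_i)|$, but this follows directly from the definition of $X(v_i)$ as the last node of $Z(v_i)$, so there is no real obstacle.
\qed
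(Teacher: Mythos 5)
Your proposal is correct and follows essentially the same route as the paper: a constant bound on $|Y(\cdot)|$, a strict decrease of $|Y(\cdot)|$ at each iteration of the while loop (hence $O(1)$ iterations), an $O(\log n)$ call to FindTail per iteration via Lemma \ref{lem-xvlog}, and a final $O(\log n)$ binary search on $P_{I}(\cdot)$. One small slip: a $d$-dimensional axis-aligned box has $d\cdot 2^{d-1}$ edges rather than $2d$ (it has $2d$ facets), but since $d$ is a fixed constant this does not affect the $O(1)$ bound on $|Y(v_i)|$ or the rest of the argument.
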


\begin{proof}
Based on the above discussions, we know that the while loop in Step 4 can execute at most $O(1)$ iterations. In each iteration, the time is dominated by that of FindTail, which takes $O(\log n)$ time. Thus, the total time of the while loop is $O(\log n)$. The binary search on $P_{I}(v_{i})$ takes $O(\log n)$ time. Hence, the lemma follows. 
\qed
\end{proof}

The next lemma bounds the total time of the procedure Find($v,q_{c}$).

\begin{lemma}
\label{lem-find}
An effective cover of size $O(\log n)$ for $q_{c}$ can be obtained by the procedure Find($v_{r}, q_{c}$) in $O(\log^2 n)$ time.
\end{lemma}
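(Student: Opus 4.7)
The plan is to establish three things: that the output of Find$(v_r, q_c)$ is a valid effective cover, that it has size $O(\log n)$, and that the procedure runs in $O(\log^2 n)$ time. My strategy is to argue that Find produces exactly the same set of reported nodes as SlowFind, so that Lemma \ref{lem-slow} immediately transfers both the correctness and the size bound. All the new work is concentrated in the running-time analysis, where the speedup over SlowFind's Step 5 comes from Lemma \ref{lem-step4}.

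First I would observe that Steps 1--3 of Find and SlowFind are identical, and so the only thing to verify is that Step 4 of Find reaches the same terminal node $v_l$ that SlowFind's Step 5 loop would reach (where $v_l$ is the last node before the chain either enters a splittable node, a small-enough node, or a node disjoint from $B$). When $Y(v)$ is empty, the three claims preceding Lemma \ref{lem-xvlog} show that $v_l$ lies on the $i$-path $P_I(v)$ and can be located by binary search using the conditions in Steps 1--3 as the comparison oracle. When $Y(v)$ is nonempty, Lemma \ref{lem-xv} guarantees that $v_l$ is either the tail $X(v)$ of $Z(v)$ or a descendant of $X(v)$; in the latter case, because of the observation that descendants of $X(v)$ fully contain strictly fewer edges of $B$, passing to the child $v_{i+1}$ strictly decreases $|Y|$. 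Since $|Y(v)|$ is bounded by the constant $2d$, after $O(1)$ iterations of the while loop $Y$ becomes empty and the binary search on $P_I$ finishes the job. Thus Find locates the same $v_l$ as SlowFind, and the two procedures have isomorphic recursion trees.

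From this correspondence, the output of Find coincides with that of SlowFind, so by Lemma \ref{lem-slow} it is a valid effective cover whose recursion tree (and hence output size) is $O(\log n)$. For the running-time bound, the work at each recursion node is $O(1)$ for Steps 1--3 and, by Lemma \ref{lem-step4}, $O(\log n)$ for Step 4 (dominated by the $O(1)$ invocations of FindTail together with one binary search on an $i$-path). Summing $O(\log n)$ over the $O(\log n)$ recursion nodes gives the claimed $O(\log^2 n)$ total.

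The main obstacle is the correctness of Step 4: one must check carefully that the combination of FindTail iterations (which trim edges out of $Y$) and the final $i$-path binary search truly converges to SlowFind's $v_l$, rather than to some ancestor or descendant of it. This requires invoking Lemma \ref{lem-xv} at each stage to justify that $v_l$ remains within the subtree being searched, and checking that when the while loop terminates with $Y(v_i) = \emptyset$ we are in the regime where $v_l$ lies on a single $i$-path, so that the binary search described before Lemma \ref{lem-xvlog} is applicable. Once this invariant is nailed down, the remaining bookkeeping for size and time is routine.
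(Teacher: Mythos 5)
Your proposal is correct and follows essentially the same route as the paper: the paper's proof likewise observes that Find differs from SlowFind only in Step 5/Step 4, so its recursion tree is the one bounded by Lemma \ref{lem-slow}, and then charges $O(\log n)$ per recursion via Lemma \ref{lem-step4} to get $O(\log^2 n)$. Your extra care in verifying that Step 4 converges to the same node $v_l$ (via Lemma \ref{lem-xv} and the $i$-path claims) is exactly the material the paper establishes in the discussion preceding the lemma rather than inside its proof.
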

\begin{proof}
Since {\bf Algorithm} Find improves only Step 5 of SlowFind, its recursion tree is the same as SlowFind and thus is of size $O(\log n)$ (by Lemma \ref{lem-slow}). By Lemma \ref{lem-step4}, we know that each recursion of Find takes $O(\log n)$ time. Hence, the total time for finding an
effective cover is $O(\log^{2}n)$. 
\qed
\end{proof}

\begin{lemma}
\label{lem-assign2}
{\bf Algorithm} Assign$(c)$ takes $O(\log^{\max\{2,d\}} n)$ time to assign a $(1-\epsilon)$-approximate maximum influence site to each type-2 cell of the AI decomposition.
\end{lemma}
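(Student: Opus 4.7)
The plan is to bound the cost of each of the three steps of \textbf{Algorithm} Assign$(c)$ separately and then add them up. Step 1 (picking an arbitrary $q_c \in c$) is trivially $O(1)$. Step 2 calls Find$(v_r, q_c)$, which by Lemma \ref{lem-find} computes an effective cover $V=\{v_1,\ldots,v_m\}$ of size $m = O(\log n)$ in $O(\log^2 n)$ time. The correctness of the output is already guaranteed by Lemma \ref{lem-assign}, so here I only need the timing bound on Step 3, which is the enumeration of hyperplane partitions of $V$.

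For Step 3, I would appeal directly to Observation \ref{obs-h} applied to the perturbed, reduced point set: after replacing each $v_i \in V$ by its representative $L(v_i)$ with multiplicity $|v_i|$, any maximum-influence subset on $q_c$ is separable from its complement by a hyperplane through $q_c$. Thus it suffices to enumerate the combinatorially distinct hyperplanes through $q_c$ over the $m$ representative points $\{L(v_1),\ldots,L(v_m)\}$. Since a hyperplane through $q_c$ is determined (up to non-generic degeneracies) by $d-1$ additional points chosen from the $m$ representatives, the number of combinatorially distinct oriented partitions is $O(m^{d-1}) = O(\log^{d-1} n)$. For each such partition, the partitioning itself can be read off in $O(m)$ time and the vector sum $\sum_{v \in V'} |v|\,(L(v)-q_c)/\|L(v)-q_c\|^{t+1}$ whose magnitude is $F(P(V'),q_c)$ can be evaluated in $O(m)$ time as well. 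Hence the total cost of Step 3 is $O(m \cdot m^{d-1}) = O(m^d) = O(\log^d n)$.

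Adding the three contributions gives $O(1) + O(\log^2 n) + O(\log^d n) = O(\log^{\max\{2,d\}} n)$, which is the claimed bound. The main obstacle in this argument is not the time analysis per se but making the hyperplane enumeration clean: one must be careful to argue that it is enough to try hyperplanes passing through $q_c$ together with $d-1$ of the representatives (handling degenerate cases by perturbation, or by choosing, for every candidate normal direction, the two orientations) rather than all $O(n^{d-1})$ hyperplanes over the original input. This is exactly where the bound on the size of the effective cover provided by Lemma \ref{lem-find}, combined with the reduction from $P$ to the $O(\log n)$ representatives afforded by Lemma \ref{lem-mgc} and Lemma \ref{lem-assign}, is essential: without it, Step 3 would blow up to $O(n^d)$ and dominate.
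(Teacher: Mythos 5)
Your proposal is correct and follows essentially the same route as the paper's proof: Step 2 is charged $O(\log^2 n)$ via Lemma \ref{lem-find}, Step 3 enumerates the $O(\log^{d-1} n)$ hyperplane partitions determined by $q_c$ and $d-1$ representative points of the size-$O(\log n)$ effective cover at $O(\log n)$ cost each, and the quality guarantee is delegated to Lemma \ref{lem-assign}. Your added remarks on applying Observation \ref{obs-h} to the representative multiset and on handling degeneracies are a slightly more explicit justification of the enumeration than the paper gives, but not a different argument.
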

\begin{proof}
The running time of Step 2 is $O(\log^{2}n)$ (by Lemma \ref{lem-find}). For the running time of Step 3, since an effective cover is of size $O(\log n)$ and each partition induced by a hyperplane
can be determined by $q_{c}$ and $d-1$ nodes (or more precisely, their representative points)
in the cover, the total number of these partitions is hence $O(\log ^{d-1}n)$. Each partition takes $O(\log n)$ time to compute the influence. 
Thus the total time of Step 3 is $O(\log^{\max\{2,d\}}n)$. Other steps take $O(1)$ time. Therefore, the total time of {\bf Algorithm} Assign is $O(\log^{\max\{2,d\}} n)$. The quality guarantee follows from Lemma \ref{lem-assign}.
\qed
\end{proof}

\begin{theorem}
A $(1-\epsilon)$-approximate vector CIVD can be constructed in $O(n \log^{\max\{3,d+1\}} n)$ time, where $n$ is the number of input points in $P$ and $d$ is the dimensionality of the space.
\end{theorem}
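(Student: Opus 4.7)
The plan is to combine the results already established: Theorem \ref{the-time} gives the complexity of the AI-Decomposition, and Lemma \ref{lem-assign2} gives the per-cell assignment cost. Since the bulk of the work has been carried out in prior lemmas, the remaining task is bookkeeping: verify that the preprocessing required by \textbf{Algorithm} Assign can be done within the claimed budget, then multiply the per-cell cost by the number of cells.

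First, I would invoke Theorem \ref{the-time} to produce the box-tree $T_q$ (and thus the partition of $\mathbb{R}^d$ into $O(n\log n)$ type-1 and type-2 cells) in $O(n\log n)$ time, using error tolerance $\beta = \Delta^{-1}(\epsilon)$. Each type-1 cell comes equipped with a dominating distance-node $v$, and by Theorem \ref{the-aid}(1) the set $\eta(P_v)$ is a valid $(1-\epsilon)$-approximate maximum influence site, which can be reported in $O(1)$ time per cell (since $\eta(P_v)$ is obtained by traversing the subtree of $T_p$ rooted at $v$, whose work can be charged to the construction of $T_p$). The unbounded cell outside $B(u_{root})$ is handled by Theorem \ref{the-aid}(3) using $\eta(P_{v_{root}})$.

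Next, I would describe the one-time preprocessing needed before any type-2 cell is assigned. This consists of: (i) building the aggregation-tree $T$ via \textbf{Algorithm} Tree-Build, which runs in $O(n\log n)$ time using standard compressed quad-tree techniques; (ii) labeling edges of $T$ with child indices and storing, for each $i \in \{1,\dots,2^d\}$, all $i$-paths in arrays with back-pointers from each node, in $O(n)$ time and space; and (iii) constructing the majority-path decomposition together with the weighted binary search trees per majority path, which by the argument of Lemma \ref{lem-xvlog} takes $O(n\log n)$ time and $O(n)$ space. With these structures in hand, every invocation of $\text{Find}(v_r, q_c)$ and hence of $\text{Assign}(c)$ runs in $O(\log^{\max\{2,d\}} n)$ time, as established by Lemmas \ref{lem-find} and \ref{lem-assign2}.

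Finally, I would sum the costs. The AI-Decomposition contributes $O(n\log n)$, the aggregation-tree preprocessing contributes $O(n\log n)$, and processing all $O(n\log n)$ type-2 cells by \textbf{Algorithm} Assign contributes
\[
O(n\log n) \cdot O(\log^{\max\{2,d\}} n) = O(n \log^{\max\{3,d+1\}} n),
\]
which dominates the other terms. Correctness of the output (each cell paired with a cluster site whose influence on every point of the cell is at least $(1-\epsilon)F_{max}$) follows from Theorem \ref{the-aid} for type-1 cells and from Lemma \ref{lem-assign} combined with Theorem \ref{the-aid}(2) for type-2 cells. The only mildly delicate step is to verify that the witness point $q_c$ chosen in Step 1 of Assign is compatible with the stable-pair hypothesis of Theorem \ref{the-aid}(2): the output of Assign gives a pair $(C,q_c)$ satisfying $F(C,q_c)\ge (1-\Delta^{-1}(\epsilon))F_{\max}(q_c)$, which (by Property \ref{pro-1} and the choice of $\beta$) is stable, so the approximation extends to every point in the cell. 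No single step is hard; the main thing to be careful about is matching the error parameter $\beta$ used in AI-Decomposition with the $\Delta^{-1}(\epsilon)/3$ used inside Lemma \ref{lem-mgc}, which is already handled by the convention set at the end of Section \ref{sec-inf}.
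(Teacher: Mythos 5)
Your proposal is correct and follows essentially the same route as the paper: the paper's proof likewise combines Theorem \ref{the-time} ($O(n\log n)$ cells in $O(n\log n)$ time), Lemma \ref{lem-assign2} ($O(\log^{\max\{2,d\}}n)$ per type-2 cell), and an $O(n\log n)$ bound on the remaining preprocessing, multiplying to obtain $O(n\log^{\max\{3,d+1\}}n)$, with correctness drawn from Lemma \ref{lem-assign2} (hence Lemma \ref{lem-assign} and Theorem \ref{the-aid}). Your write-up simply makes explicit the preprocessing and type-1-cell bookkeeping that the paper leaves implicit.
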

\begin{proof}
The correctness and approximation ratio follow from Lemma \ref{lem-assign2}. For the running time, we know that the AI decomposition takes $O(n\log n)$ time to generate totally $O(n \log n)$ cells.
Each cell takes $O(\log^{\max\{2,d\}}n)$ time to determine its approximate maximum influence site. Other preprocessing takes $O(n \log n)$ time. Thus, the total time is $O(n \log^{\max\{3,d+1\}}n)$.  
\qed
\end{proof}

 \section{Density-based CIVD}
\label{sec-ex2}

In this section, we show how to augment the AI-Decomposition algorithm to generate a $(1-\epsilon)$-approximate CIVD for the density-based CIVD problem.

\subsection{Problem Description and Properties of the Influence Function}

The density-based CIVD problem for a set $P$ of $n$ points in $\mathbb{R}^{d}$ is to partition the space into cells so that all points in each cell share the same subset $C$ of $P$ as their {\em densest cluster} (see Fig.~\ref{fig-dc}).  For a given query point $q \in \mathbb{R}^d$, the densest cluster $C_{m}(P,q)$ of $q$ is the subset $C$ of $P$ which maximizes the influence $F(C,q) = \abs {C}/V(C,q)$ over all subsets of $P$,
where $V(C,q)=\frac{\pi^{\frac{d}{2}} l^d}{\Gamma(\frac{d}{2} + 1)}$ is the volume of the smallest ball centered at $q$ and containing all points in $C$, $l$ is the maximum distance from $q$ to any point in $C$, and $\Gamma$ is the gamma function in the volume computation of a $d$-dimensional ball. 
In other words, $C_{m}(P,q)$ is the cluster with the highest density around $q$.
Fig.~\ref{fig-dencivd} shows an example of an approximate density-based CIVD generated by AI Decomposition.

Clearly, density-based CIVD is closely related to the widely used density-based clustering problem \cite{Cao06,Che05,Che07,Kri05,San98}. In the density-based clustering problem, two parameters (the radius $r$ of the neighborhood ball $B$ and the density $d$ of the input points inside $B$) are used to partition the input points into non-overlapping clusters.
 Different from the density-based clustering problem, our problem does not use such parameters, and can automatically determine the radius of each dense cluster.
It generates not only the dense clusters but also their associated Voronoi cells. Since density-based clustering is used in many data mining, pattern recognition, biomedical imaging,
and social network applications, we expect that the density-based CIVD is also applicable in these areas. 
Also, since our problem allows the generated clusters to overlap with one another, it has the potential to be applicable to overlapping clustering problems \cite{And12,Ban05,Bon11,Cle04}.

\begin{figure}
\center
\includegraphics[height=2.8in]{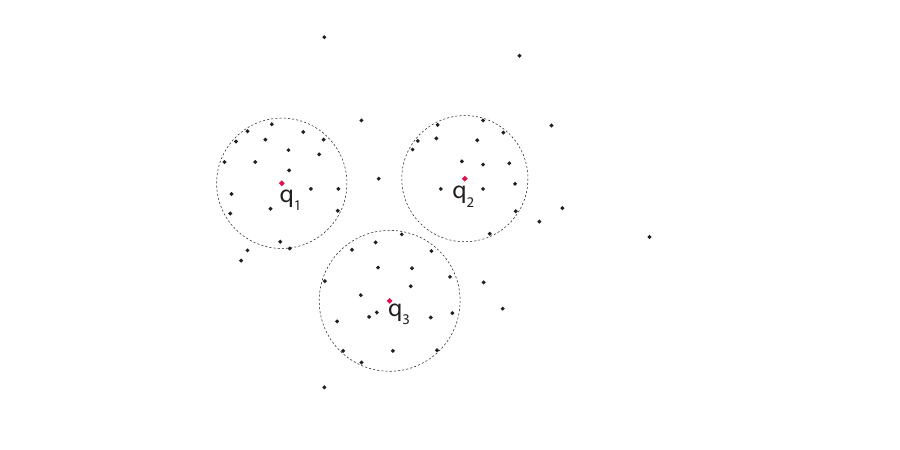}
\caption{Examples of the densest clusters for three query points $q_{1}, q_{2}$, and $q_{3}$.}
\label{fig-dc}
\end{figure}

\begin{figure}
\center
\includegraphics[height=2.4in]{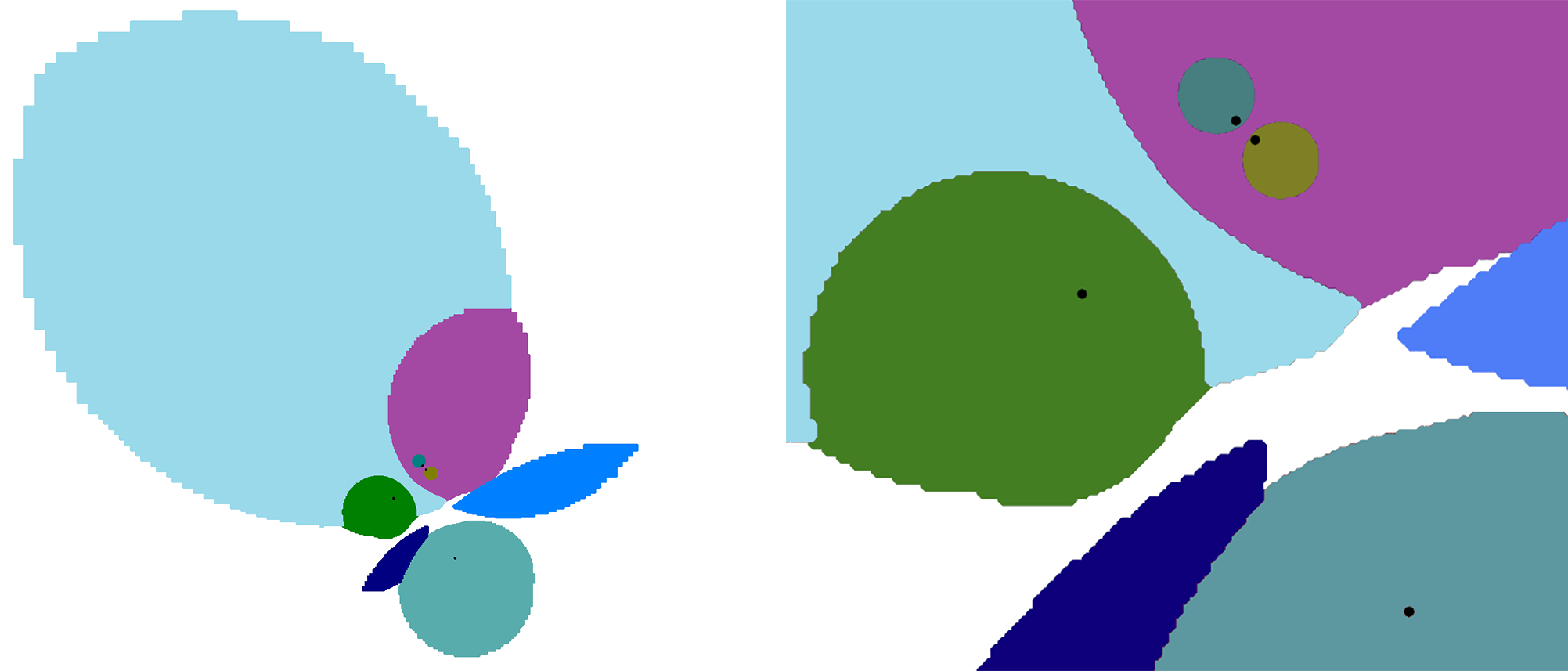}
\caption{An example of an approximate density-based CIVD for $4$ input points on the plane (generated by our algorithm with $\beta=0.1$). The figure on the right is a zoomed view of the figure on the left.}
\label{fig-dencivd}
\end{figure}

Similar to the vector CIVD problem, our goal for the density-based CIVD is also a $(1-\epsilon)$-approximate CIVD. To use the AI decomposition for this problem, we first show that it satisfies the three properties in Section \ref{sec-inf}.

\begin{theorem}
\label{the-densitypro}
The density-based CIVD problem satisfies the three properties in Section \ref{sec-inf}.
\end{theorem}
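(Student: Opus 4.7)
The plan is to verify each property in turn by leveraging the explicit closed form of the density influence function, $F(C,q) = |C|/(c_d\, \ell(C,q)^d)$, where $\ell(C,q) = \max_{p\in C}\|p-q\|$ and $c_d$ is the volume of the unit $d$-ball. Property 3 is immediate from this form: a rotation about $q$ preserves every $\|p-q\|$, hence preserves $\ell$ and $F$; a scaling about $q$ by a factor $s>0$ multiplies every $\|p-q\|$ by $s$, so $F(\phi(C),q)/F(C,q)=s^{-d}$, which depends only on $\phi$.

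For Property 1, I will in fact establish the stronger statement that $F$ is $(\delta,\gamma)$-stable at every pair $(C,q)$, not only at maximal ones, taking $\delta(\epsilon):=(1-\epsilon)^{-d}-1$ (continuous, monotone, with $\delta(0)=0$) and $\gamma$ any constant strictly less than $1-2^{-1/d}$ so that $\delta(\gamma)<1$. Let $\psi$ be any $\epsilon$-perturbation of $C$ with witness $q$. Since $\psi$ is one-to-one, $|\psi(C)|=|C|$, and the triangle inequality gives $(1-\epsilon)\|p-q\|\leq\|\psi(p)-q\|\leq(1+\epsilon)\|p-q\|$ for every $p\in C$. Taking the maximum on the upper side and evaluating at any maximizer $p^*$ of $\|p-q\|$ on the lower side yields $(1-\epsilon)\ell\leq \ell'\leq(1+\epsilon)\ell$, where $\ell'=\ell(\psi(C),q)$. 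Raising to the $d$-th power and taking reciprocals gives $F(\psi(C),q)\in[(1+\epsilon)^{-d},(1-\epsilon)^{-d}]\cdot F(C,q)$, which implies the required two-sided $(1\pm\delta(\epsilon))F(C,q)$ bound.

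For Property 2, I will pick $\mathcal{P}(n) = n^{1/d}$ (any larger polynomial also works) and compare the densest cluster $C_m := C_m(P,q)$ with the singleton $\{p\}$, where $p\in P'$ is the close point promised by the hypothesis. If $C_m\subseteq P'$ then $F_{\max}'(q)\geq F(C_m,q)=F_{\max}(q)$, and we are done. Otherwise, pick any $p^*\in C_m\cap(P\setminus P')$; the enclosing radius of $C_m$ about $q$ then satisfies $\ell(C_m,q)\geq \|q-p^*\| > \mathcal{P}(n)\|q-p\|/\epsilon$. Using $|C_m|\leq n$, this gives
$$F(C_m,q) = \frac{|C_m|}{c_d\,\ell(C_m,q)^d} < \frac{n\epsilon^d}{c_d\,\mathcal{P}(n)^d\|q-p\|^d} = \frac{n\epsilon^d}{\mathcal{P}(n)^d}\,F(\{p\},q) \leq \frac{n\epsilon^d}{\mathcal{P}(n)^d}\,F_{\max}'(q),$$
and with $\mathcal{P}(n)^d\geq n$ this gives $F_{\max}'(q) > F_{\max}(q)/\epsilon^d$, which comfortably exceeds $(1-\epsilon)F_{\max}(q)$. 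The only real (and still minor) obstacle is the calibration of $\mathcal{P}$ in Property 2: one must absorb the worst-case $|C_m|\leq n$ factor when converting a closeness hypothesis on a single input point into a bound on the enclosing radius of an entire cluster. Everything else follows from the one-line observation that the enclosing radius changes by at most a $(1\pm\epsilon)$ factor under an $\epsilon$-perturbation.
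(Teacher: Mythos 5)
Your proposal is correct and follows essentially the same route as the paper: for locality you show the enclosing radius changes by a factor of at most $1\pm\epsilon$ under an $\epsilon$-perturbation (hence the influence by $(1\pm\epsilon)^{\mp d}$, with the same $\delta$ up to taking the maximum of the two one-sided errors), for local domination you take $\mathcal{P}(n)=n^{1/d}$ and compare any cluster reaching a far point against the singleton $\{p\}$ via a volume argument, and similarity invariance is immediate. The only cosmetic difference is that the paper phrases the domination step as showing $C_m(P,q)=C_m(P',q)$ outright, whereas you argue through the maximizer and a (vacuous) second case, which amounts to the same thing.
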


\begin{proof}
First, we show Property \ref{pro-1}.
Consider a set $C$ of input points and a query point $q$ in  $\mathbb{R}^d$. Let $\psi$ be an $\epsilon$-perturbation  on $P$ (with the witness point $q$) for some constant $0 < \epsilon < 1$. Let $p_{max}$ be the point in $C$ that is farthest from $q$, and $\psi(p_{max}')$ be the point in $\psi(C)$ that is farthest from $q$. Since 
\[
\norm{\psi(p_{max}') - q} \leq (1+\epsilon)\norm{p_{max}' - q}
\]
and
\[
\norm{p_{max}' - q} \leq \norm{p_{max} - q},
\]
we have
\begin{equation}\label{bound1}
\norm{\psi(p_{max}') - q} \leq (1+\epsilon)\norm{p_{max} - q}.
\end{equation}
Furthermore, from 
\[
\norm{\psi(p_{max}') - q} \geq \norm{\psi(p_{max}) - q}
\]
and
\[
\norm{\psi(p_{max}) - q} \geq (1-\epsilon)\norm{p_{max} - q},
\]
we get
\begin{equation}\label{bound2}
\norm{\psi(p_{max}') - q} \geq (1-\epsilon)\norm{p_{max} - q}.
\end{equation}
 By the influence function, and inequalities (\ref{bound1}) and (\ref{bound2}),
 we know that $$(1 + \epsilon)^{-d} F(C,q) \leq F(\psi(C),q) \leq (1 - \epsilon)^{-d} F(C,q).$$
This implies Property \ref{pro-1} (by setting $\delta(\epsilon)=\max\{1 - (1 + \epsilon)^{-d},(1 - \epsilon)^{-d}-1\}$).

For Property \ref{pro-2}, 
we assume that, for any query point $q \in \mathbb{R}^d$, there is a point $p \in P$ and a subset
$A$ of points in $P$ such that for every $a\in A$, $\norm{a-q} > n^{1/d} \norm{q-p}$. For any subset $B \subseteq P$ intersecting with $A$, let $b$ be a point in $A \cap B$. Then
$\norm{b-q} > n^{1/d} \norm{q-p}$. Now we compare $F(B,q)$ with $F(\{p\},q)$. It is clear that the smallest ball centered at $q$ and containing $B$ is at least $n$ times larger (in volume) than the smallest ball centered at $q$ and containing $\{p\}$. Since $\abs{B} \leq n$, we have $F(B,q) < F(\{p\},q)$. 

If there is a subset $P' \subseteq P$ and $ p \in P'$  such that  $n^{1/d}\norm{q-p} < \epsilon'\cdot\norm{q - p'} < \norm{q - p'}$  for all  $p' \in P\setminus P'$ and some constant $0 < \epsilon' < 1$, then by the above discussion, we have $C_{m}(P,q) = C_{m}(P',q)$. 
This means that for every cluster $C$ and query point $q$, the pair $(C,q)$ is stable.
Thus Property \ref{pro-2} holds.

For Property \ref{pro-3}, it is clear that after a scaling or a rotation about any query point $q \in \mathbb{R}^d $, the distance from every point in $P$ to $q$ is changed by the same factor which is uniquely determined by the transformation.  From the influence function, we know that Property \ref{pro-3} holds. 
\qed
\end{proof}

\subsection{Assignment Algorithm by Modifying the AI Decomposition}

To make use of the AI decomposition to construct an approximate density-based CIVD, our idea is to modify the AI-Decomposition algorithm ({\bf Algorithm \ref{alg-3}}) so that 
some additional information is maintained for assigning a cluster to each resulted type-2 cell. (Note that by Theorem \ref{the-aid}, for each type-1 cell $c$, we can simply use the distance-node $v$ which dominates $c$ as its densest cluster.) In this way, we can obtain the approximate CIVD at the same time when completing the AI decomposition.

Recall that  
an input point $p$ is recorded in the AI-Decomposition algorithm only when its distance to the current  to-be-decomposed box is large enough.
Therefore, for a cell $c$, it is most likely that an input point recorded earlier is farther away from $c$ than an input point recorded later. 
Intuitively, the recorded distances (of the input points) should be roughly in a decreasing order with respect to the order in which they are recorded.  
Below we discuss how to utilize this observation to modify the AI-Decomposition algorithm. A proof of this observation will be given later.

To show how to modify the AI-Decomposition algorithm, we first consider an example. Let $q$ be a query point, and $P = \{p_1,p_2,\ldots, p_n\}$ be a set of input points  in the decreasing order of their distances to $q$ ({\em i.e.},  $\norm{p_i - q} > \norm{p_j - q}$ for all $1 \leq i < j \leq n$, and no two different points in $P$ have the same distance to $q$). 
To find $C_{m}(P,q)$, we can use the following approach which scans $P$ only once in its sorted order and uses $O(1)$ additional space.
For each $1 \leq i \leq n$, we compute $D_i = \frac{c_d(n - i + 1)}{\norm{p_i - q}^d}$, and store the largest $D_{i}$ during the scanning process, where $c_{d}= \frac{\Gamma(\frac{d}{2} + 1)}{\pi^{\frac{d}{2}}}$.
Since the ball centered at $q$ and with radius $\norm{p_i - q}$ contains exactly $n-i+1$ input points, $\{p_{i}, p_{i+1}, \ldots, p_n\}$, the largest $D_i$ value, along with the corresponding $i$, gives us the desired densest cluster  
 $C_m(P,q) = \{p_{i}, p_{i+1}, \ldots, p_n\}$.
 
\begin{figure}
\center
\includegraphics[height=3.2in]{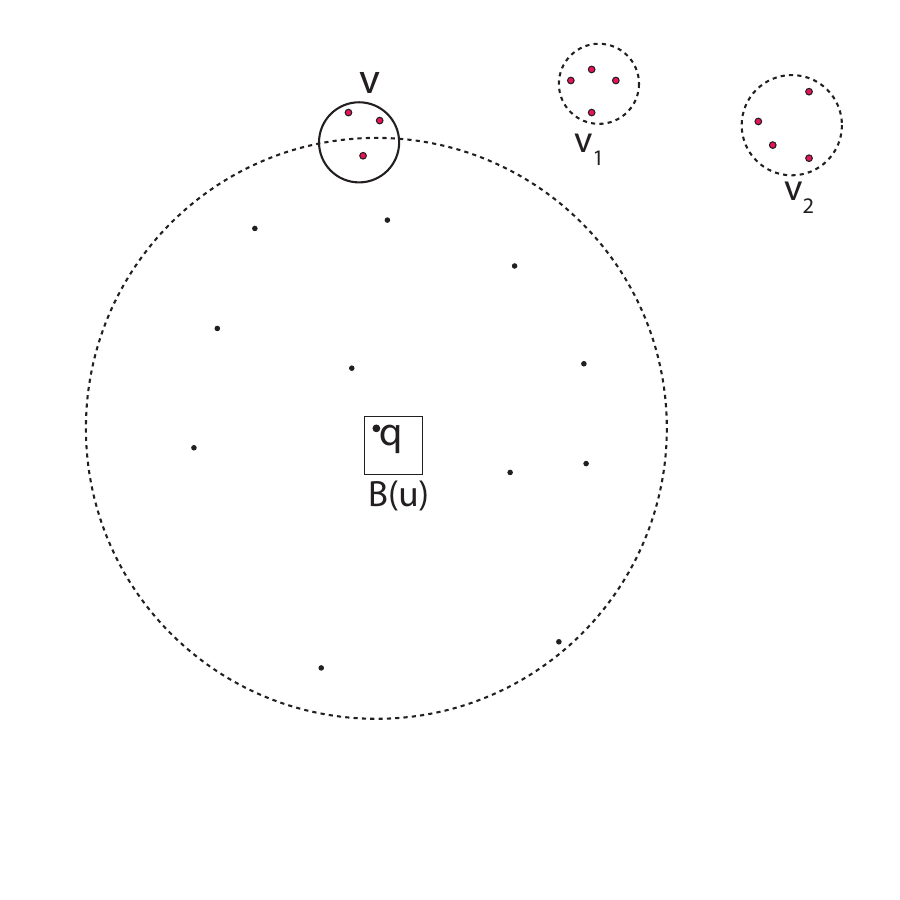}
\caption{A configuration with 22 input points, $B(u)$ being processed, $v_1$ and $v_2$ been removed from $L$ before,
and $v$ being removed from $L$.
For any $q$ in $B$ if we draw a ball centered at $q$ with radius $r$
being roughly the distance between $q$ and $v$($\norm{q-l(v)}$),
the ball should ``approximately" include all the input points not yet been recorded.(i.e. points not in $v_1$ and $v_2$.)
Denote set of these points by $P_u$.
Even without knowing much information about $P_u$ at this point,
it is possible to obtain an approximate value of $F(P_u,q)$.
($\frac{c_2(22-3-4)}{r^2}$ in this case.)
Since this works for all $q$ in $B(u)$,
This information can be passed down during the recursion. 
}
\label{fig-denalg}
\end{figure}

With the above illustration, we can now modify the Decomposition algorithm ({\bf Algorithm \ref{alg-2}}), as follows. 
In particular, we change Step 2 of the Decomposition algorithm, since this is the step in which distance-nodes are removed from the list $L$. 
Before the execution of Step 2, we sort the distance-nodes in $L$ by the decreasing order of their distances to the current box-node $u$ (if multiple nodes have the same distance, then we order them arbitrarily). Then, 
we execute Step 2 and try to remove distance-nodes from $L$ according to this sorted order. We assume that in the Decomposition algorithm, a number $M$ is maintained for storing the total number of input points which are recorded for the current box-node $u$. During the execution of Step 2,  after removing each distance-node $v$,  we compute a value
$D=\frac{c_d(n - M)}{r^d}$ and then update $M$
({\em i.e.,} increase $M$ by the cardinality $|P_{v}|$ of $v$). We save the largest value $D$ along with the corresponding distance-node $v$ and the box-node $u$.  In each recursive call to the Decomposition algorithm, we pass the stored $D$, $u$, and $v$ to the next level of the recursion.

See \textbf{Figure} \ref{fig-denalg} for better understanding of the strategy.

Clearly, such a modification on Step 2 of the Decomposition algorithm resembles the computation in the above example. The only difference is that in the above example, the input points are considered strictly in the decreasing order of their distances to the query point $q$, but in the modified Decomposition algorithm, distance-nodes are not always removed by the decreasing order of their distances to some  query point $q$.
This is because at different recursion levels, distance-nodes  may not be removed in a strictly decreasing order. Below we show that an approximate densest cluster for $q$ can still be obtained,
despite the above difference. 

Let $c$ be a type-2 cell generated by the AI decomposition and $q$ be a query point in $c$. Consider the root-to-$c$ path in the recursion tree of the Decomposition algorithm for $c$. Let $v_{1}, v_{2}, \ldots, v_{m}$ be the sequence of distance-nodes removed from $L$ along this recursion path (sorted by the increasing order of the time when they are removed),
and $x_{1}, x_{2}, \ldots, x_{m}$ be the closest distances to their corresponding box-nodes $u$ at the time when they are removed.  Let $D_{max}$ be the maximum value of $D$ passing through this recursion path and $v_{max}$, and $u_{max}$ be the corresponding box-node and distance-node when $D$ achieves its maximum value.

Below we prove a claim that if $v_{max}=v_{i}$ for some $i$, then the union of $v_{i},v_{i+1},\ldots, v_m$ is almost the densest cluster for $q$ for a properly chosen $\beta$. 
Since $v_1,v_2,\ldots, v_m$ are all distance-nodes recorded for the type-2 cell $c$, by Lemma \ref{lm:partition}, we know that they form a partition of $P$. 
For any $p \in v_j$, by Lemma \ref{lm:distance}, we have
\begin{equation}\label{eq:psi}
(1 - \beta)x_j \leq \norm{q - p} \leq (1 + \beta)x_j.
\end{equation}
Let $\psi$ be a mapping defined as follows: 
For any $p \in v_j$, $\psi(p)$ is on the ray that emits from $q$ and passes through $p$, and with $\norm{p - q} = x_j$. 
Let $C$ denote the union of $v_{i},v_{i+1},\ldots, v_m$. By Lemma \ref{lem-app}, we know that to prove the above claim, it is sufficient to show that $F(\psi(C),q)$ is almost as large as $F(C_m(P',q),q)$, where $P'=\psi(P)$.
Clearly, $P'$ can be partitioned into subsets 
$\psi(v_1),\psi(v_2),\ldots, \psi(v_m)$, with  all
points in each $\psi(v_i)$, for $i=1, \ldots, m$, having the same distance $x_{i}$ to $q$.
Based on the above discussion, we know that if $x_1,x_2,\ldots, x_m$ are in decreasing order, then $\psi(C)$ is exactly $C_m(P',q)$. 
The following lemma shows that $x_{1}, x_{2}, \ldots, x_{m}$ are actually in a roughly sorted order, which is sufficient for us to obtain an approximate densest cluster.  

\begin{lemma}\label{lm:sorted}
In the modified AI-Decomposition algorithm with an error tolerance $\beta$, 
$ x_j \leq (1 + \beta)x_i $ for any $1 \leq i < j \leq m$.
\end{lemma}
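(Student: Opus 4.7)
The plan is to split into two cases based on whether $v_i$ and $v_j$ are removed from $L$ at the same box-node along the recursion path to $c$, or at different box-nodes. Let $u_i$ and $u_j$ denote the box-nodes at which $v_i$ and $v_j$ are respectively removed; since $i<j$ and the recursion traces a single root-to-$c$ path, $u_j$ is either equal to $u_i$ or a strict descendant of $u_i$ on that path. In the same-box-node case the modification makes the conclusion immediate: $L$ is processed in decreasing order of distance, so the earlier removal has at least as large a recorded distance, yielding $x_j\leq x_i\leq(1+\beta)x_i$.

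For the strict-descendant case, write $u=u_i$ and let $w$ be the (unique) node of $L$ right after Step~1 at $u$ that is either $v_j$ itself or an ancestor of $v_j$ in $T_p$. Such a $w$ must exist because $v_j$ is in $L$ at $u_j$, $L$ is modified only by splits (each of which keeps some ancestor of $v_j$ in $L$) and removals, and by Lemma \ref{lm:partition} no proper ancestor of $v_j$ in $T_p$ is ever removed along the recursion to $c$ (otherwise the recorded nodes would not form a partition of $P$). The survival of $w$ through Step~1 and Step~2 at $u$ gives two bounds: (a) $d(l(w),B(u))\leq 2D(u)/\beta$ from Step~2 not firing on $w$, and (b) the intersection $B(u)\cap E(w)$ is either empty or has every edge shorter than $edgeLength(B(u))/2$, from Step~1 not splitting $w$.

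The main obstacle will be the sub-lemma $s(w)\leq D(u)$, established from (a) and (b) by a short case analysis on the intersection geometry. When $B(u)\cap E(w)=\emptyset$, the definition of $E(w)$ gives the $L^\infty$ (hence Euclidean) bound $d(l(w),B(u))\geq 4s(w)/\beta$, and combining with (a) yields $s(w)\leq D(u)/2$. When $B(u)\cap E(w)\neq\emptyset$, the short-edge condition forces, along every coordinate axis, $B(u)$ to protrude outside $E(w)$ on one side; once $4s(w)/\beta>edgeLength(B(u))/2$ each such axis then contributes at least $4s(w)/\beta-edgeLength(B(u))/2$ to $d(l(w),B(u))$, and feeding this into (a) gives $s(w)\leq edgeLength(B(u))/2+edgeLength(B(u))\beta/8\leq D(u)$; the complementary regime $4s(w)/\beta\leq edgeLength(B(u))/2$ already forces $s(w)\leq D(u)$ directly.

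With $s(w)\leq D(u)$ in hand, the rest is routine. Since $l(v_j)$ and $l(w)$ both lie in $P_w$ and $P_w$ has diameter at most $s(w)$, $\norm{l(v_j)-l(w)}\leq s(w)$, so by the triangle inequality $d(l(v_j),B(u))\leq d(l(w),B(u))+s(w)\leq 2D(u)/\beta+s(w)$. The containment $B(u_j)\subseteq B(u)$ together with the diameter bound $D(u)$ increases this by at most $D(u)$ when we pass from $B(u)$ to $B(u_j)$, so $x_j\leq 2D(u)/\beta+s(w)+D(u)\leq 2D(u)/\beta+2D(u)$. Finally, the Step~2 firing condition at $u$ for $v_i$ gives $D(u)<x_i\beta/2$, and substituting yields $x_j<(2/\beta+2)\cdot x_i\beta/2=(1+\beta)x_i$, completing the proof.
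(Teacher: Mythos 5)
Your proposal is correct and takes essentially the same route as the paper's proof: handle the same-recursion case by the sorted removal order, and otherwise work with the ancestor $w$ of $v_j$ (the paper's $v_j'$) that survives Steps 1 and 2 at the box-node where $v_i$ is removed, combining the non-removal bound $\norm{l(w)-B(u)}\le 2D(u)/\beta$ with the Step-1 short-edge condition and the firing condition $D(u)<x_i\beta/2$. The only difference is organizational: where the paper splits on whether $B(u)$ meets the half-size box $E'(w)$ (getting $x_j<x_i$ in that subcase), you distill the geometry into the single sub-lemma $s(w)\le D(u)$ via the per-axis protrusion argument on $E(w)$ and then finish with one unified computation, which checks out and yields the same $(1+\beta)x_i$ bound.
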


\begin{proof}
From the above discussion, we know that if $v_i$ and $v_j$ are removed in the same recursion of the Decomposition algorithm, then $x_j \leq x_i$ (since in the same recursion, all distance-nodes are removed in a decreasing order).  
Thus we can assume that $v_i$ and $v_{j}$ are removed in different recursions with recursive calls Decomposition$(u_1,\beta,L_1, T_{p}, r_1)$ and Decomposition$(u_2,\beta,L_2, T_{p}, r_2)$, respectively, where
$u_1$ is a proper ancestor of $u_2$ in the box-tree $T_{q}$. By definition, we know that $x_i$ is  the closest distance between $l(v_i)$ and $B(u_1)$, and $x_j$ is the closest distance between $l(v_j)$ and $B(u_2)$. Also, by the Decomposition algorithm, we know that $B(u_2)$ is contained in $B(u_1)$.

Consider the execution of Decomposition$(u_1,\beta,L_1, T_{p}, r_1)$.
Let $v_j'$ be the node in $L_1$ containing $l(v_j)$. Clearly, $v_j'$ is not removed in Step 2, since otherwise $v_j$ will not appear in Decomposition$(u_2,\beta,L_2$, $T_{p}, r_2)$.
This means that $x_i$ is larger than the closest distance $x_j'$ between $l(v_j')$ and $B(u_1)$, since $v_i$ is removed from $L_1$ in Step 2 but $v_j'$ is not.
Let $D(u_{1})$ denote the diameter of $B(u_1)$, and $x''_j$ denote the closest distance between $B(u_2)$ and $l(v_j')$.
By the triangle inequality, we have $x_j'' \leq D(u_{1}) + x_j'$. Since $x_j' \leq x_i$ and $D(u_{1}) \leq x_i\beta/2$ (by the fact that $v_i$ is removed from $L$ in Step 2), we have
$x_j'' \leq (1 + \beta/2) x_i$.

Let $E'(v_{j}')$ be the box co-centered with $E(v_j')$ and with the edge length half of that of $E(v_j')$ (see Algorithm \ref{alg-1}). Consider the following two possible cases.
\begin{enumerate}
\item
\emph{$B(u_1)$ does not intersect $E'(v_{j}')$.}  In this case, we have 
\[
\norm{l(v_j) - l(v_j')} \leq s(v_j') \leq  x_j'\beta/2 \leq x_i\beta/2.
\]
Note that
\[
x_j \leq \norm{l(v_j) - l(v_j')} + x_j''.
\]
Thus, we have
\[
x_j \leq (1 + \beta)x_i.
\]

\item
\emph{$B(u_1)$ intersects $E'(v_{j}')$.} In this case, note that some part of $B(u_1)$ must be outside of $E(v_j')$, since otherwise $v_j'$ would have been deleted in Step 1.
From this, we know that $D(u_{1}) \geq \sqrt{d}\alpha/2$, where $\alpha$ is the edge length of $E'(v_{j}')$.
Since $\beta < 1/2$ (by the assumption on $\beta$), we have 
\[
\norm{l(v_j) - l(v_j')} \leq s(v_j') =  (\alpha/2) \cdot \beta/2 \leq \alpha/8.
\]
Note that since $B(u_1)$ intersects $E'(v_{j}')$, we have $x_j' \leq \sqrt{d}\alpha/2$.
Thus, $x_j' + \norm{l(v_j) - l(v_j')} = \sqrt{d}\alpha/2 + \alpha/8 \leq \sqrt{d}\alpha \leq 2D(u_{1})$.
By the triangle inequality, we have $x_j' + \norm{l(v_j) - l(v_j')} + D(u_{1}) \geq x_j$, which means $x_j \leq 3 D(u_{1})$.
Recall that $D(u_{1}) \leq x_i\beta/2 \leq x_i/4$; we have $x_i \geq 4 D(u_{1})$. Therefore $x_i > x_j$.
\end{enumerate}
\qed
\end{proof}

With the above lemma, we immediately have the following lemma (in which the notation was defined
before Lemma \ref{lm:sorted}).

\begin{lemma}\label{lm:dense}
$F(\psi(C),q) \geq (1+\beta)^{-d} F(C_m(P',q),q)$.
\end{lemma}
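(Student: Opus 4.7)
\textbf{Proof plan for Lemma \ref{lm:dense}.} The plan is to bound $F(\psi(C),q)$ from below and $F(C_m(P',q),q)$ from above by expressions directly comparable to the recorded value $D_{max}$. The key structural observation is that $P'$ decomposes into ``shells'' $\psi(v_1),\ldots,\psi(v_m)$, where all points of $\psi(v_j)$ lie at distance exactly $x_j$ from $q$. Consequently, for any $S\subseteq P'$, the enclosing ball radius $l$ in the definition of $V(S,q)$ is of the form $x_j$ for some $j$, and a standard swap argument shows that the optimal densest cluster $C_m(P',q)$ must have the form $\{p\in P':\|p-q\|\le R^*\}$ for some $R^*=x_k$. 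I would begin by stating and justifying this ``ball-cluster'' form.

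Next, I would establish the lower bound $F(\psi(C),q)\ge (1+\beta)^{-d}D_{max}$. Recall that $v_{max}=v_i$ is the node along the recursion path achieving $D_{max}=D_i=c_d(\sum_{j\ge i}|v_j|)/x_i^d$, and that $C=v_i\cup v_{i+1}\cup\cdots\cup v_m$, so $|\psi(C)|=\sum_{j\ge i}|v_j|$. Every point of $\psi(C)$ lies at distance $x_j$ from $q$ for some $j\ge i$, and Lemma \ref{lm:sorted} gives $x_j\le (1+\beta)x_i$. Hence the enclosing ball radius of $\psi(C)$ about $q$ is at most $(1+\beta)x_i$, which yields
\[
F(\psi(C),q)=\frac{c_d\,|\psi(C)|}{(\max_{j\ge i}x_j)^d}\ge \frac{c_d\sum_{j\ge i}|v_j|}{((1+\beta)x_i)^d}=\frac{D_{max}}{(1+\beta)^d}.
\]

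The heart of the argument is the matching upper bound $F(C_m(P',q),q)\le D_{max}$. Write $R^*=x_k$ as above and let $\ell_{min}$ be the \emph{smallest} index satisfying $x_{\ell_{min}}\le x_k$ (such an index exists since $k$ itself works). By the minimality of $\ell_{min}$, every $\ell<\ell_{min}$ has $x_\ell>x_k$, so none of those shells can contribute to $C_m(P',q)$; that is, $\{\ell:x_\ell\le x_k\}\subseteq\{\ell:\ell\ge\ell_{min}\}$, and therefore
\[
|C_m(P',q)|=\sum_{\ell:x_\ell\le x_k}|v_\ell|\ \le\ \sum_{\ell\ge\ell_{min}}|v_\ell|.
\]
Since $x_{\ell_{min}}\le x_k=R^*$, dividing by $(R^*)^d$ only decreases the ratio compared with dividing by $x_{\ell_{min}}^d$, giving
\[
F(C_m(P',q),q)=\frac{c_d\,|C_m(P',q)|}{x_k^d}\le \frac{c_d\sum_{\ell\ge\ell_{min}}|v_\ell|}{x_{\ell_{min}}^d}=D_{\ell_{min}}\le D_{max},
\]
where the final inequality uses that $D_{\ell_{min}}$ was one of the values of $D$ computed along the recursion path for $c$ and $D_{max}$ is their maximum. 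Combining the two bounds gives the lemma.

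The main obstacle I anticipate is the observation that the removal order $x_1,\ldots,x_m$ is only \emph{roughly} (rather than strictly) decreasing, so it is tempting but wrong to conclude that the optimal radius $R^*$ equals some $x_j$ with $j\ge i$ or that $C_m(P',q)$ is a suffix $v_j\cup\cdots\cup v_m$. The device of introducing $\ell_{min}$ (instead of $k$ itself) is precisely what bypasses this issue: it lets us compare $|C_m(P',q)|$ against a genuine suffix count $\sum_{\ell\ge\ell_{min}}|v_\ell|$ for which a bound $D_{\ell_{min}}\le D_{max}$ is automatic, at the cost of only shrinking the denominator and hence only weakening our upper bound. Everything else is bookkeeping around Lemma \ref{lm:sorted} and the layered geometry of $P'$.
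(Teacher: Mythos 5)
Your proof is correct and follows essentially the same route as the paper's: a lower bound $F(\psi(C),q)\ge(1+\beta)^{-d}D_{max}$ via Lemma \ref{lm:sorted}, and an upper bound $F(C_m(P',q),q)\le D_{\ell_{min}}\le D_{max}$ obtained by enlarging the optimal cluster to a suffix of removed nodes and shrinking the radius in the denominator. Your $x_k$, $\ell_{min}$, and $D_{\ell_{min}}$ correspond exactly to the paper's $x_{i_{max}}$, $j_1$, and $G'$, so there is nothing genuinely different to compare.
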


\begin{proof}
From the discussion before Lemma \ref{lm:sorted}, we know that
$P'$ can be partitioned into $\psi(v_1),\psi(v_2),\ldots, \psi(v_m)$, and for any $p \in \psi(v_r)$, $\norm{p - q} = x_r$  for all $ 1 \leq r \leq m$.

It is easy to see that $C_m(P',q)$ can be written as the union of all distance-nodes in $\{\psi(v_j) \mid x_j \leq x_{i_{max}} \}$ for some $ 1 \leq i_{max} \leq m$.
$\psi(C)$ is the union of $\psi(v_{i}), \psi(v_{i+1}), \ldots, \psi(v_m)$.
Let
\[
F' = \frac{c_d(n - \abs{v_1} - \abs{v_2} -\cdots - \abs{v_{i - 1}})}{x_{i}^d}.
\]
Let  $x_{h}$ be the maximum in $\{x_{i},x_{i+1},\ldots, x_m \}$. 
Then, 
\[
F(\psi(C),q) = \frac{c_d(n - \abs{v_1} - \abs{v_2} -\cdots - \abs{v_{i - 1}})}{x_{h}^d}.
\]
By Lemma \ref{lm:sorted}, we know that $ x_h \leq (1 + \beta) x_{i} $. Thus,   
\[
F(\psi(C),q) \geq (1+\beta)^{-d} F'.
\]

Rewrite $\{\psi(v_j) \mid x_j \leq x_{i_{max}} \}$ as $\{v_{j_1}, v_{j_2}, \ldots, v_{j_w} \}$ with $ j_1 < j_2 < \cdots < j_w $. We have
\[
F(C_m(P',q),q) = \frac{c_d (\abs{v_{j_1}} + \abs{v_{j_2}} + \cdots + \abs{v_{j_w}})}{x_{i_{max}}^d}.
\]
Let 
\[
G' = \frac{c_d (\abs{v_{j_1}} + \abs{v_{j_1 + 1}} + \cdots + \abs{v_{m}})}{x_{j_1}^d}
= \frac{c_d (n - \abs{v_{j_1 - 1}} - \abs{v_{j_1 - 2}} - \cdots - \abs{v_1})}{x_{j_1}^d}.
\]
Note that $ \abs{v_{j_1}} + \abs{v_{j_2}} + \cdots + \abs{v_{j_w}} \leq \abs{v_{j_1}} + \abs{v_{j_1 + 1}} + \cdots + \abs{v_{m}} $.
Since $x_{i_{max}} \geq x_{j_1}$,  
we have 
\[
G' \geq F(C_m(P',q),q).
\]

Since $G' \leq F'$,
it follows that
\[
F(\psi(C),q) \geq (1+\beta)^{-d} F' \geq (1+\beta)^{-d} G' \geq (1+\beta)^{-d} F(C_m(P',q),q).
\]
\qed
\end{proof}

Based on the above two lemmas, we immediately have the following theorem.

\begin{theorem}
For any $\beta$ satisfying the conditions 
$1 - (1+\beta)^{-d} \leq \Delta^{-1}(\epsilon)$  and 
$\beta \leq \Delta^{-1}(\epsilon)/3$, the modified AI decomposition algorithm finds a $(1-\epsilon)$-approximate density-based CIVD in $O(n \log^{2}n)$ time.
\end{theorem}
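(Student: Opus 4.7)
The plan is to use the modified AI decomposition to assign an approximate densest cluster to each cell during the decomposition itself, and then show that the two conditions on $\beta$ translate the perturbation-space guarantee of Lemma \ref{lm:dense} into a $(1-\epsilon)$-approximation in the original space.

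First I would dispose of the type-1 cells: for a type-1 cell dominated by a distance-node $v$, Theorem \ref{the-aid} case 1 directly certifies that $\eta(P_v)$ is a $(1-\epsilon)$-approximate densest cluster, so the modified algorithm simply records $v$ at the moment the type-1 cell is identified in Step 4.1. For a type-2 cell $c$, the output cluster is $C = v_{i_{\max}} \cup v_{i_{\max}+1} \cup \cdots \cup v_m$, where $v_{i_{\max}}$ corresponds to the largest $D$ value tracked along the root-to-$c$ recursion path. For any $q \in c$, I would argue in four steps: (i) Lemma \ref{lm:dense} gives $F(\psi(C), q) \geq (1+\beta)^{-d} F(C_m(P', q), q)$ for $P' = \psi(P)$; (ii) the first hypothesis $1 - (1+\beta)^{-d} \leq \Delta^{-1}(\epsilon)$ rewrites this as $F(\psi(C), q) \geq (1 - \Delta^{-1}(\epsilon)) F(C_m(P', q), q)$; (iii) the construction of $\psi$ (each $\psi(p)$ is placed on the ray from $q$ through $p$ at distance $x_j$, so $\norm{\psi(p) - p} \leq \beta x_j = \beta \norm{\psi(p) - q}$) shows that $\psi^{-1}$ is a $\beta$-perturbation on $P'$ with witness $q$, and the second hypothesis $\beta \leq \Delta^{-1}(\epsilon)/3$ keeps this error below $\Delta^{-1}(\epsilon)$; (iv) applying Lemma \ref{lem-app} to $P'$ with perturbation $\psi^{-1}$ and subset $\psi(C)$ (both the perturbation error and the approximation gap being at most $\Delta^{-1}(\epsilon)$) yields $F(C, q) \geq (1 - \Delta(\Delta^{-1}(\epsilon))) F(C_m(P, q), q) = (1 - \epsilon) F(C_m(P, q), q)$. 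Since $C$ depends only on the recursion path to $c$, not on $q$, this gives the required $(1-\epsilon)$-approximate densest cluster for $c$.

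For the running time, Theorem \ref{the-time} already yields $O(n \log n)$ time and $O(n \log n)$ cells for the unmodified AI decomposition. The modifications add only (a) constant-time per-recursion bookkeeping for $M$, $D_{\max}$, and the associated cluster pointer, and (b) sorting $L$ by distance to the current box at each invocation of Step 2. Only (b) is non-trivial: at box-node $u$ sorting takes $O(\abs{L_u} \log \abs{L_u}) \leq O(\abs{L_u} \log n)$ time, and summing over all box-nodes using the reference-count bound $\sum_u \abs{L_u} = O(n \log n)$ from the proof of Theorem \ref{the-time} gives total sorting cost $O(n \log^2 n)$, which dominates.

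The main obstacle I expect is confirming the compound error control in steps (iii)--(iv): the density-function distortion $(1+\beta)^{-d}$ from Lemma \ref{lm:dense} and the perturbation loss feeding into Lemma \ref{lem-app} must both collapse simultaneously. The two hypotheses are engineered exactly so that the approximation gap in the perturbed space is at most $\Delta^{-1}(\epsilon)$ and the perturbation error of $\psi^{-1}$ is also at most $\Delta^{-1}(\epsilon)$, after which Lemma \ref{lem-app} invoked with parameter $\Delta^{-1}(\epsilon)$ delivers the $(1-\epsilon)$ bound cleanly; tracking all the constants and verifying that $\beta \leq \Delta^{-1}(\epsilon)/3$ is enough slack (rather than, say, $\beta \leq \Delta^{-1}(\epsilon)$ being enough) is the one place where the bookkeeping is genuinely delicate.
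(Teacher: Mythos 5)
Your proposal is correct and is essentially the paper's own argument: the approximation guarantee comes from Lemma \ref{lm:dense} together with the condition $1-(1+\beta)^{-d}\le\Delta^{-1}(\epsilon)$, the condition $\beta\le\Delta^{-1}(\epsilon)/3$ makes $\psi^{-1}$ a sufficiently small perturbation, and the running time is bounded exactly as in the paper by charging the per-call sorting of $L$ against the $O(n\log n)$ total reference count from the proof of Theorem \ref{the-time}. The only (harmless) divergence is the final transfer step: the paper fixes a single query point and invokes Lemma \ref{lem-mgc} (whose proof splits the $\Delta^{-1}(\epsilon)$ budget into $\Delta^{-1}(\epsilon)/3$ for $\psi$ plus roughly $2\Delta^{-1}(\epsilon)/3$ for moving to any other point $q'\in c$ via Lemma \ref{lm:cell2}, and explicitly cites the stability of the pair from Theorem \ref{the-densitypro}), whereas you re-instantiate $\psi$ at each $q\in c$ and apply Lemma \ref{lem-app} directly, which is legitimate because Lemma \ref{lm:distance} holds uniformly over the whole cell; just note that the stability hypothesis of Lemma \ref{lem-app} still needs to be mentioned (it is trivially satisfied here, since every pair is stable for the density influence function).
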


\begin{proof}

By Equation (\ref{eq:psi}) and the definition of $\psi$, we know that for each  $p \in v_k$, 
\[
\norm{\psi(p) - p} = \abs{\norm{p - q} - \norm{\psi(p) - q}} = \abs{\norm{p - q} - x_k} \leq \beta x_k \leq \Delta^{-1}(\epsilon) x_k/3 \leq \Delta^{-1}(\epsilon) \norm{\psi(p) - q}.
\]
By Lemma \ref{lm:dense}, we have 
\[
F(\psi(C),q) \geq (1+\beta)^{-d} F(C_m(P',q),q) \geq (1 - \Delta^{-1}(\epsilon)) F(C_m(P',q),q).
\]
Consider the perturbation $\psi^{-1}$. If $\beta \leq \Delta^{-1}(\epsilon)/3$, then $\psi^{-1}$ is also a $(\Delta^{-1}(\epsilon)/3)$-perturbation.
From the proof of Theorem \ref{the-densitypro}, we know that $(C,q)$ is stable.
By Lemma \ref{lem-mgc} (note that Lemma \ref{lem-mgc} still holds for the density-based CIVD problem), we have $F(C,q') \geq (1 - \epsilon)F(C_m(P,q'),q')$ for any point $q'$ in $c$.

For the running time, we note that the additional computation in the modified AI decomposition algorithm includes sorting the distance-nodes in $L$ and maintaining the values of $M$, $D$, and the corresponding $u$ and $v$. Clearly, the additional time is dominated by sorting, which takes $O(|L| \log |L|)$ time for each recursive call to the Decomposition algorithm. Since the running time of the original Decomposition algorithm is $O(|L|)$, and the total running time of the entire AI decomposition is the sum over all recursions of the Decomposition algorithm, the total time of the modified AI decomposition thus increases only by an $O(\log n)$ factor.
Therefore, the theorem follows.  
\qed
\end{proof}


\begin{thebibliography}{7}





\bibitem{And12}
R. Andersen, D.F. Gleich, and V. Mirrokni, ``Overlapping Clusters for Distributed Computation,'' {\em Proc. 5th ACM International Conference in Web Search and Data Mining}, 2012, pp.~273-282.

\bibitem{Ary02}
S. Arya and T. Malamatos, ``Linear-Size Approximate Voronoi Diagrams,'' {\em Proceedings of the 13th annual ACM-SIAM symposium on Discrete algorithms (SODA'02)}, pp.~147--155, 2002.

\bibitem{Ary02a}
S. Arya, T. Malamatos, and D. M. Mount, ``Space-Efficient Approximate Voronoi Diagrams,'' {\em Proc. 34th ACM Symp. on Theory of Computing (STOC 2002)}, pp.~721--730, 2002.
 
 \bibitem{Ary98}
S. Arya, D. M. Mount, N. S. Netanyahu, R. Silverman, and A. Wu, ``An Optimal Algorithm for Approximate Nearest Neighbor Searching,'' {\em Journal of the ACM,} 45 (1998), pp.~891--923. 
 


\bibitem{Aur87}
 F. Aurenhammer, ``Power Diagrams: Properties, Algorithms and Applications,'' {\em SIAM J. on
Computing}, 16(1)(1987), 78-96.

\bibitem{Aur91}
 F. Aurenhammer, ``Voronoi Diagrams -- A Survey of a Fundamental Geometric Data Structure,'' {\em ACM
Computing Surveys}, 23(1991), 345-405.


\bibitem{Ban05}
A. Banerjee, C. Krumpelman, S. Basu, R.J. Mooney, and J. Ghosh, ``Model-based Overlapping Clustering,'' {\em Proc.  11th ACM SIGKDD International Conference on Knowledge Discovery and Data Mining}, 2005, pp.~532--537.

\bibitem{Bar02}
G. Barequet, M.T. Dickerson, and R.L.S. Drysdale III, ``2-Point Site Voronoi Diagrams,''
{\em Discrete Applied Mathematics}, 122(1-3)(2002), 37-54.
\bibitem{Bar11}
G. Barequet, M.T. Dickerson, D. Eppstein, D. Hodorkovsky, and K. Vyatkina, ``On 2-Site
Voronoi Diagrams under Geometric Distance Functions,'' {\em Proc. 8th International Symp. on Voronoi
Diagrams in Science and Engineering}, 2011, pp.~31-38.

\bibitem{Bon11}
F. Bonchi, A. Gionis, and A. Ukkonen, ``Overlapping Correlation Clustering,'' {\em Proc.
IEEE 11th International Conference on Data Mining}, 2011, pp.~51-60.

\bibitem {CK95}
P. Callahan and R. Kosaraju, ``A Decomposition of Multidimensional Point Sets with Applications to $k$-nearest-neighbors and $n$-body Potential Fields,'' {\em JACM}, 42(1)(1995), 67-90.


\bibitem{Cao06}
F. Cao, M. Ester, W. Qian, and A. Zhou,
``Density-based Clustering over an Evolving Data Stream with Noise,"
{\em Proceedings of the 6th SIAM International Conference on Data Mining}, 2006,
pp. 328-339.


\bibitem{Che05}
D.Z. Chen, M.H.M. Smid, and Bin Xu, ``Geometric Algorithms for
Density-based Data Clustering,'' {\em Int. J. Comput. Geometry Appl.}, 15(3)(2005),
239-260.

\bibitem{Che12}
N. Chen, J. Zhu, F. Sun, and E.P. Xing, ``Large-margin Predictive Latent Subspace Learning for Multi-view Data Analysis,'' {\em  IEEE Transaction on Pattern Analysis and Machine Intelligence}, 
34(12)(2012), 2365-2378. 

\bibitem{Che07}
Y. Chen and L. Tu, ``Density-based Clustering for Real-time Stream Data,"
{\em Proceedings of the 13th ACM SIGKDD International Conference on Knowledge Discovery and Data Mining}, 2007,
pp. 133-142.


\bibitem{Chr08} 
C.M. Christoudias, R. Urtasun, and T. Darrell, ``Multi-view Learning in the Presence of View Disagreement,'' arXiv:1206.3242, June 2012.

\bibitem{Cle04}
G. Cleuziou, L. Martin, and C. Vrain, ``Poboc: An Overlapping Clustering Algorithm,'' {\em Proc. 16th European Conference on Artificial Intelligence}, 2004, pp.~440-444.



\bibitem{Dic09}
 M.T. Dickerson and D. Eppstein, ``Animating a Continuous Family of Two-site Voronoi
Diagrams (and a Proof of a Bound on the Number of Regions),'' {\em Proc. 25th ACM Symp. Computational
Geometry}, 2009, pp. 92-93.

\bibitem{Dic08}
M.T. Dickerson and M.T. Goodrich, ``Two-site Voronoi Diagrams in Geographic Networks'',
{\em Proc. 16th ACM SIGSPATIAL International Conf. Advances in Geographic Information Systems}, 2008,
doi:10.1145/1463434.1463504.

\bibitem {DX11}
H. Ding and J. Xu, ``Solving Chromatic Cone Clustering via Minimum Spanning Sphere,''
{\em Proc. 38th International Colloquium on Automata, Languages and Programming (ICALP)},
2011, pp.~773-784.

\bibitem{Gre09} 
     D. Greene and P. Cunningham, ``Multi-view Clustering for Mining Heterogeneous Social Network Data,'' {\em Proc. 31st European Conference on Information Retrieval, Workshop on Information Retrieval over Social Networks}, LNCS, Vol. 5478, 2009.     

\bibitem{Gre88}
L. Greengard, {\em The Rapid Evaluation of Potential Fields in Particle Systems.} MIT Press, Cambridge (1988).

\bibitem{Gre90}
 L. Greengard. ``The Numerical Solution of the N-body Problem,'' {\em Computers in Physics,} 4, pp.~142--152 (1990).

\bibitem{Gre94}
L. Greengard. ``Fast Algorithms for Classical Physics.'' {\em Science} 265, 909--914 (1994).

\bibitem{Han10}
 I. Hanniel and G. Barequet, ``On the Triangle-Perimeter Two-site Voronoi Diagram,''
{\em Trans. on Computational Science}, 9(2010), 54-75.

\bibitem{Har01} 
 S. Har-Peled, ``A Replacement for Voronoi Diagrams of Near Linear Size,'' {\em Proc. 42nd Annu. IEEE Sympos. Found. Comput. Sci. (FOCS 2001)}, pp.~94--103, 2001.
 
 \bibitem{Har12}
 Sariel Har-Peled and Nirman Kumar, ``Down the Rabbit Hole: Robust Proximity Search and Density Estimation in Sublinear Space.'' {\em FOCS 2012}: 430-439.
\bibitem{Hod05}
D. Hodorkovsky, ``2-Point Site Voronoi Diagrams,'' M.Sc. Thesis, Technion, Haifa, Israel, 2005.

\bibitem{Ind98}
P. Indyk and R. Motwani, ``Approximate Nearest Neighbors: Towards Removing the Curse of Dimensionality,'' {\em Proc. 30th ACM Symp. on Theory of Computing (STOC 1998)}, pp.~604--613, 1998.



\bibitem{Kri05}
H.-P. Kriegel and M. Pfeifle, ``Density-based Clustering of Uncertain Data,"
{\em Proceedings of the 11th ACM SIGKDD International Conference on Knowledge Discovery in Data Mining}, 2005,
pp. 672-677.

\bibitem{LD81}
D.T. Lee and R.L.S. Drysdale, III, ``Generalization of Voronoi Diagrams in the Plane,''
{\em SIAM J. Comput.}, 10(1)(1981), 73-87.

 \bibitem{Liu12}
   A.Y. Liu and D.N. Lam, ``Using Consensus Clustering for Multi-view Anomaly Detection,'' {\em Proc. IEEE CS Security and Privacy Workshops}, 2012, pp. 117-125.

\bibitem{Oka00}
A. Okabe, B. Boots, K. Sugihara, and S.N. Chiu, {\em Spatial Tessellations: Concepts and 
Applications of Voronoi Diagrams}, 2nd Eds., John Wiley \& Sons, 2000.

\bibitem{Pap04}
E. Papadopoulou, ``The Hausdorff Voronoi Diagram of Point Clusters in the Plane,"
{\em Algorithmica}, 40(2004), 63-82.


\bibitem{San98}
J. Sander, M. Ester, H.-P. Kriegel, and X. Xu, ``Density-Based Clustering in Spatial Databases: The Algorithm GDBSCAN and Its Applications,"
{\em Data Mining and Knowledge Discovery}, 2(2)(1998), 169-194.


\bibitem{Vya10}
K. Vyatkina and G. Barequet, ``On 2-Site Voronoi Diagrams under Arithmetic Combinations of
Point-to-Point Distances,'' {\em Proc. 7th International Symp. Voronoi Diagrams in Science and
Engineering}, 2010, pp. 33-41.

\end{thebibliography}
\end{document}